\providecommand{\U}[1]{\protect\rule{.1in}{.1in}}
\newtheorem{theorem}{Theorem}
\newtheorem{corollary}[theorem]{Corollary}
\newtheorem{definition}[theorem]{Definition}
\newtheorem{lemma}[theorem]{Lemma}
\newtheorem{proposition}[theorem]{Proposition}
\newtheorem{remark}[theorem]{Remark}
\newenvironment{proof}[1][Proof]{\noindent\textbf{#1.} }{\ \rule{0.5em}{0.5em}}
\numberwithin{equation}{section}
\begin{document}

\title{\textbf{Swiveled R\'{e}nyi entropies}}
\author{Fr\'ed\'eric Dupuis\thanks{Faculty of Informatics, Masaryk University, Brno,
Czech Republic}
\and Mark M. Wilde\thanks{Hearne Institute for Theoretical Physics, Department of
Physics and Astronomy, Center for Computation and Technology, Louisiana State
University, Baton Rouge, Louisiana 70803, USA}}
\maketitle

\begin{abstract}
This paper introduces \textquotedblleft swiveled R\'{e}nyi
entropies\textquotedblright\ as an alternative to the R\'{e}nyi entropic
quantities put forward in [Berta \textit{et al}., \textit{Phys.~Rev.~A}
\textbf{91}, 022333 (2015)]. What distinguishes the swiveled R\'{e}nyi
entropies from the prior proposal of Berta \textit{et al}.~is that there is an
extra degree of freedom:\ an optimization over unitary rotations with respect
to particular fixed bases (swivels). A consequence of this extra degree of
freedom is that the swiveled R\'{e}nyi entropies are ordered, which is an
important property of the R\'{e}nyi family of entropies. The swiveled
R\'{e}nyi entropies are however generally discontinuous at $\alpha=1$ and do
not converge to the von Neumann entropy-based measures in the limit as
$\alpha\rightarrow1$, instead bounding them from above and below. Particular
variants reduce to known R\'{e}nyi entropies, such as the R\'{e}nyi relative
entropy or the sandwiched R\'{e}nyi relative entropy, but also lead to ordered
R\'{e}nyi conditional mutual informations and ordered R\'{e}nyi
generalizations of a relative entropy difference. Refinements of entropy
inequalities such as monotonicity of quantum relative entropy and strong
subadditivity follow as a consequence of the aforementioned properties of the
swiveled R\'{e}nyi entropies. Due to the lack of convergence at $\alpha=1$, it
is unclear whether the swiveled R\'{e}nyi entropies would be useful in
one-shot information theory, so that the present contribution represents
partial progress toward this goal.

\end{abstract}

\section{Introduction}

In 1961, Alfred R\'{e}nyi defined a parametrized family of entropies now
bearing his name, by relaxing one of the axioms that singles out the Shannon
entropy \cite{R61}. This led to both the $\alpha$-R\'{e}nyi entropy and the
$\alpha$-R\'{e}nyi divergence, defined respectively for a parameter $\alpha
\in\left(  0,1\right)  \cup\left(  1,\infty\right)  $ and probability
distributions $p$ and $q$ as%
\begin{align}
H_{\alpha}(p)  &  \equiv\frac{1}{1-\alpha}\log\sum_{x}\left[  p( x) \right]
^{\alpha},\label{eq:Renyi-ent}\\
D_{\alpha}(p\Vert q)  &  \equiv\frac{1}{\alpha-1}\log\sum_{x}\left[  p( x)
\right]  ^{\alpha}\left[  q( x) \right]  ^{1-\alpha}, \label{eq:Renyi-rel-ent}%
\end{align}
where $\log$ denotes the natural logarithm here and throughout the paper. The
Shannon entropy and relative entropy are recovered in the limit as
$\alpha\rightarrow1$:%
\begin{align}
\lim_{\alpha\rightarrow1}H_{\alpha}(p)  &  =H(p)\equiv-\sum_{x}p( x) \log p(
x) ,\\
\lim_{\alpha\rightarrow1}D_{\alpha}(p\Vert q)  &  =D(p\Vert q)\equiv\sum_{x}p(
x) \log\frac{p( x) }{q( x) }.
\end{align}
What began largely as a theoretical exploration ended up having many practical
ramifications, especially in the contexts of information theory and
statistics. For example, it is now well known that the R\'{e}nyi entropies
play a fundamental role in obtaining a sharpened understanding of the
trade-off between communication rate, error probability, and number of
resources in communication protocols, such as data compression and channel
coding \cite{C95,vEH14}. \textquotedblleft Smoothing\textquotedblright\ the
R\'{e}nyi entropies\ \cite{RWISIT04}\ has also led to the development of
\textquotedblleft one-shot\textquotedblright\ information theory
\cite{R05,T12}, with applications to cryptography.

Part of what makes the R\'{e}nyi entropies so useful in applications is their
properties: convergence to the Shannon and relative entropies in the limit as
$\alpha\rightarrow1$, monotonicity in the parameter$~\alpha$, and additivity,
in addition to others. The convergence to the Shannon and relative entropies
ensures that, by taking this limit, one recovers asymptotic
information-theoretic statements, such as the data compression theorem or the
channel capacity theorem, from the more fine-grained statements. Monotonicity
in the parameter $\alpha$ ensures that $H_{\alpha}( p) $ gives more weight to
low surprisal events for $\alpha>1$ and vice versa for $\alpha<1$, helping to
characterize the aforementioned trade-off in information-theoretic settings.
The additivity property implies that the R\'{e}nyi entropies can simplify
immensely when evaluated for memoryless stochastic processes.

In light of the progress that the R\'{e}nyi paradigm has brought to
information theory, one is left to wonder if this could happen in more exotic
settings, such as quantum information theory and/or for \textquotedblleft
multipartite\textquotedblright\ settings (here by multipartite, we mean three
or more parties). This line of thought has led to the development of several
non-commutative generalizations of the R\'{e}nyi relative entropy in
(\ref{eq:Renyi-rel-ent}), which has in turn led to a sharpened understanding
of several quantum information-theoretic tasks (see \cite{CMW14,T15}\ and
references therein) and refinements of the uncertainty principle
\cite{CBTW15}. As far as we are aware, the development of the multipartite
generalization of the R\'{e}nyi entropy\ in (\ref{eq:Renyi-ent}) is less
explored, with the exception of a recent proposal \cite{BSW15a} for a
multipartite quantum generalization.

With the intent of developing either a multipartite classical or quantum
generalization of (\ref{eq:Renyi-ent}), one might suggest after a moment's
thought to replace a quantity which features a linear combination of entropies
by one with the same linear combination of R\'{e}nyi entropies. However, this
approach is objectively unsatisfactory in at least two regards:\ properties of
the original information measure are not preserved by doing so and one is not
guaranteed to have the powerful monotonicity in $\alpha$ property mentioned
above. For example, take the case of the conditional mutual information of a
tripartite density operator $\rho_{ABC}$ defined as%
\begin{equation}
I(A;B|C)_{\rho}\equiv H(AC)_{\rho}+H(BC)_{\rho}-H(C)_{\rho}-H(ABC)_{\rho
},\label{eq:orig-CMI}%
\end{equation}
where $H(F)_{\sigma}\equiv-\operatorname{Tr}\{\sigma_{F}\log\sigma_{F}\}$ is
the quantum entropy of a density operator $\sigma$ on system $F$. One of the
most important properties of this quantity is that it is non-negative (known
as strong subadditivity of quantum entropy \cite{PhysRevLett.30.434,LR73}),
and as a consequence, it is monotone non-increasing with respect to any
quantum channel applied to the system $A$ \cite{CW04} (by symmetry, the same
is true for one applied to $B$). However, if we define a R\'{e}nyi
generalization of $I(A;B|C)_{\rho}$ as $H_{\alpha}(AC)_{\rho}+H_{\alpha
}(BC)_{\rho}-H_{\alpha}(C)_{\rho}-H_{\alpha}(ABC)_{\rho}$, where $H_{\alpha
}(F)_{\sigma}\equiv\left[  \log\operatorname{Tr}\{\sigma_{F}^{\alpha
}\}\right]  /\left(  1-\alpha\right)  $, then explicit counterexamples reveal
that this R\'{e}nyi generalization can be\ negative, monotonicity with respect
to quantum channels need not hold, and neither does monotonicity in~$\alpha
$~\cite{LMW13}.

To remedy these deficiencies, the authors of \cite{BSW15a}\ put forward a
general prescription for producing a R\'{e}nyi generalization of a quantum
information measure, with the aim of having the properties of the original
measure retained while also satisfying the monotonicity in $\alpha$ property.
The work in \cite{BSW15a} was only partially successful in this regard.
Continuing with our example of conditional mutual information, consider the
following R\'{e}nyi generalization \cite{BSW14}:%
\begin{equation}
I_{\alpha}(A;B|C)_{\rho}\equiv\frac{1}{\alpha-1}\log\operatorname{Tr}\left\{
\rho_{ABC}^{\alpha}\rho_{AC}^{\left(  1-\alpha\right)  /2}\rho_{C}^{\left(
\alpha-1\right)  /2}\rho_{BC}^{1-\alpha}\rho_{C}^{\left(  \alpha-1\right)
/2}\rho_{AC}^{\left(  1-\alpha\right)  /2}\right\}  .\label{eq:Renyi-cmi-old}%
\end{equation}
For $\alpha\in\lbrack0,1)\cup(1,2]$, the quantity is non-negative, monotone
non-increasing with respect to quantum channels acting on the $B$ system,
converges to $I(A;B|C)_{\rho}$ in the limit as $\alpha\rightarrow1$, and is
conjectured to obey the monotonicity in $\alpha$ property (with some numerical
and analytical evidence in favor established)~\cite{BSW14}. However, hitherto
a proof of the monotonicity in $\alpha$ property for $I_{\alpha}(A;B|C)_{\rho
}$ remains lacking. It is also an open question to determine whether
$I_{\alpha}(A;B|C)_{\rho}$ is monotone non-increasing with respect to quantum
channels acting on the $A$ system---this partially has to do with the fact
that $I_{\alpha}(A;B|C)_{\rho}$ is not symmetric with respect to exchange of
the $A$ and $B$ systems, unlike the conditional mutual information in \eqref{eq:orig-CMI}.

\section{Summary of results}

\label{sec:summary-results}In this paper, we modify the recently proposed
R\'{e}nyi generalizations of quantum information measures from \cite{BSW15a}
by placing \textquotedblleft swivels\textquotedblright\ in a given chain of
operators.\footnote{A \textquotedblleft swivel\textquotedblright\ is a
coupling placed between two objects in a chain in order to allow for them to
\textquotedblleft swivel\textquotedblright\ about a given axis.} As an example
of the idea, consider that we can rewrite the quantity in
(\ref{eq:Renyi-cmi-old}) in terms of the Schatten 2-norm as follows:%
\begin{equation}
I_{\alpha}( A;B|C) _{\rho}\equiv\frac{2}{\alpha-1}\log\left\Vert \rho
_{BC}^{\left(  1-\alpha\right)  /2}\rho_{C}^{\left(  \alpha-1\right)  /2}%
\rho_{AC}^{\left(  1-\alpha\right)  /2}\rho_{ABC}^{\alpha/2}\right\Vert _{2}.
\end{equation}
The new idea is to modify this quantity to include swivels as follows:%
\begin{equation}
I_{\alpha}^{\prime}( A;B|C) _{\rho}\equiv\frac{2}{\alpha-1}\max_{V_{\rho_{AC}%
}\in\mathbb{V}_{\rho_{AC}},V_{\rho_{C}}\in\mathbb{V}_{\rho_{C}}}\log\left\Vert
\rho_{BC}^{\left(  1-\alpha\right)  /2}V_{\rho_{C}}\rho_{C}^{\left(
\alpha-1\right)  /2}\rho_{AC}^{\left(  1-\alpha\right)  /2}V_{\rho_{AC}}%
\rho_{ABC}^{\alpha/2}\right\Vert _{2}, \label{eq:swivel-CMI}%
\end{equation}
where $\mathbb{V}_{\omega}$ is the compact set of all unitaries $V_{\omega}%
$\ commuting with the Hermitian operator $\omega$. Thus, the fixed eigenbases
of $\rho_{C}$ and $\rho_{AC}$ act as swivels connecting adjacent operators in
the operator chain above, such that the unitary rotations $V_{\rho_{C}}$ and
$V_{\rho_{AC}}$ about these swivels are allowed. Of course, such swivels make
no difference when the density operator $\rho_{ABC}$ and its marginals commute
with each other (the classical case), or when the $C$ system is trivial, in
which case the above quantity reduces to a R\'{e}nyi mutual information%
\begin{align}
I_{\alpha}^{\prime}( A;B) _{\rho}  &  \equiv\frac{2}{\alpha-1}\max
_{V_{\rho_{A}}\in\mathbb{V}_{\rho_{A}}}\log\left\Vert \rho_{B}^{\left(
1-\alpha\right)  /2}\rho_{A}^{\left(  1-\alpha\right)  /2}V_{\rho_{A}}%
\rho_{AB}^{\alpha/2}\right\Vert _{2}\\
&  =\frac{2}{\alpha-1}\log\left\Vert \rho_{B}^{\left(  1-\alpha\right)
/2}\rho_{A}^{\left(  1-\alpha\right)  /2}\rho_{AB}^{\alpha/2}\right\Vert _{2}.
\end{align}
We mention that we were led to the definition in (\ref{eq:swivel-CMI}) as a
consequence of the developments in \cite{W15}, in which similar swivels
appeared in refinements of entropy inequalities such as monotonicity of
quantum relative entropy and strong subadditivity.

The quantity in (\ref{eq:swivel-CMI}) satisfies some of the properties already
established for $I_{\alpha}( A;B|C) _{\rho}$ in \cite{BSW14}, which include
non-negativity for $\alpha\in\lbrack0,1)\cup(1,2]$ and monotonicity with
respect to quantum channels acting on the $B$ system. However, the extra
degree of freedom in (\ref{eq:swivel-CMI})\ allows us to prove that this
swiveled R\'{e}nyi conditional mutual information is monotone non-decreasing
in $\alpha$ for $\alpha\in[ 0,1)\cup(1,2] $.

The swiveled R\'{e}nyi entropies are in general discontinuous at $\alpha=1$
and do not converge to the von Neumann entropy-based measures in the limit as
$\alpha\rightarrow1$. Thus, the present paper represents a work in progress
toward the general goal of find R\'{e}nyi generalizations of quantum
information measures that satisfy all of the desired properties that one would
like to have. It thus remains an open question to find R\'{e}nyi quantities
that meet all desiderata.

The rest of the paper proceeds by developing this idea in detail. We review
some background material in Section~\ref{sec:prelim}, which includes various
quantum R\'{e}nyi entropies and the Hadamard three-line theorem, the latter
being the essential tool for establishing monotonicity in $\alpha$ for the
swiveled R\'{e}nyi entropies. We then focus in Section~\ref{sec:swivel-Renyi}%
\ on developing swiveled R\'{e}nyi generalizations of the quantum relative
entropy difference in (\ref{eq:rel-ent-diff}), given that many different
information measures can be written in terms of this relative entropy
difference, including conditional mutual information (see, e.g., the
discussions in \cite{SBW14,W14,W15}). Our main contributions are
Theorems~\ref{thm:monotone}\ and \ref{thm:monotone-tilde}, which state that
these quantities are monotone non-decreasing in $\alpha$ for particular
values. We then briefly discuss how refinements of entropy inequalities follow
as a consequence of the properties of the swiveled R\'{e}nyi entropies.
Section~\ref{sec:Renyi-CMI}\ discusses swiveled R\'{e}nyi conditional mutual
informations and justifies that they possess the properties stated above. We
extend the idea in Section~\ref{sec:arbitrary-vn-measure}\ to establish
swiveled R\'{e}nyi generalizations of an arbitrary linear combination of von
Neumann entropies with coefficients chosen from the set $\left\{
-1,0,1\right\}  $. We finally show how our methods can be used to address an
open question posed in \cite{Z14b}. Section~\ref{sec:conclusion}\ concludes
with a summary and some open directions.

\section{Preliminaries}

\label{sec:prelim}

\subsection{Quantum states and channels}

A quantum state is described mathematically by a density operator, which is a
positive semi-definite operator with trace equal to one. A quantum channel is
a linear, trace-preserving, completely positive map. For more background on
quantum information theory, we refer to \cite{NC10,W13}. Our results apply to
finite-dimensional Hilbert spaces. For most developments, we take $\rho$,
$\sigma$, and $\mathcal{N}$ to be as given in the following definition:

\begin{definition}
\label{def:rho-sig-N}Let $\rho$ be a density operator acting on a
finite-dimensional Hilbert space $\mathcal{H}$, $\sigma$ be a non-zero
positive semi-definite operator acting on $\mathcal{H}$, and $\mathcal{N}$ be
a quantum channel, taking operators acting on $\mathcal{H}$ to those acting on
a finite-dimensional Hilbert space $\mathcal{K}$.
\end{definition}

\noindent Sometimes we need more restrictions, in which case we take $\rho$, $\sigma$,
and $\mathcal{N}$ as follows:

\begin{definition}
\label{def:rho-sig-N-PD} Let $\rho$, $\sigma$, and $\mathcal{N}$ be as given
in Definition~\ref{def:rho-sig-N}, with the additional restriction that $\rho$
and $\sigma$ are positive definite, and $\mathcal{N}$ is such that
$\mathcal{N}(\rho)$ and $\mathcal{N}(\sigma)$ are also positive definite.
\end{definition}

We employ the common convention that functions of Hermitian operators are
evaluated on their support. In more detail, the support of a Hermitian
operator $A$, written as $\operatorname{supp}(A)$, is defined as the vector
space spanned by its eigenvectors whose corresponding eigenvalues are
non-zero. Let an eigendecomposition of $A$ be given as $A = \sum_{i:a_{i}
\neq0} a_{i} \vert i\rangle\langle i \vert$ for eigenvectors $\{ \vert i
\rangle\}$. Then $\operatorname{supp}(A) = \operatorname{span} \{ \vert i
\rangle: a_{i} \neq0\}$. Let $\Pi_{A}$ denote the projection onto the support
of $A$. A function $f$ of an operator $A$ is then defined as $f(A) =
\sum_{i:a_{i}\neq0} f(a_{i}) \vert i\rangle\langle i \vert$.

\subsection{Entropies and norms}

Let $\rho$, $\sigma$, and $\mathcal{N}$ be as given in
Definition~\ref{def:rho-sig-N}. The quantum relative entropy \cite{U62}\ is
defined as%
\begin{equation}
D( \rho\Vert\sigma) \equiv\operatorname{Tr}\left\{  \rho\left[  \log\rho
-\log\sigma\right]  \right\}  ,
\end{equation}
whenever $\operatorname{supp}(\rho) \subseteq\operatorname{supp}(\sigma)$, and
otherwise, it is defined to be equal to $+\infty$. The quantum relative
entropy is monotone non-increasing with respect to quantum channels
\cite{Lindblad1975,U77}, in the sense that
\begin{equation}
D( \rho\Vert\sigma) \geq D( \mathcal{N}( \rho) \Vert\mathcal{N}( \sigma) ) .
\label{eq:mon-rel-ent}%
\end{equation}
Another relevant information measure is the quantum relative entropy
difference, defined as%
\begin{equation}
\Delta( \rho,\sigma,\mathcal{N}) \equiv D( \rho\Vert\sigma) -D( \mathcal{N}(
\rho) \Vert\mathcal{N}( \sigma) ) . \label{eq:rel-ent-diff}%
\end{equation}

We can use the Schatten norms in order to establish R\'{e}nyi\ generalizations
of von Neumann entropies, which are more refined information measures for
quantum states and channels that reduce to the von Neumann quantities in a
limit. The Schatten $p$-norm of an operator $A$ is defined as%
\begin{equation}
\left\Vert A\right\Vert _{p}\equiv\left[  \operatorname{Tr}\left\{  \left\vert
A\right\vert ^{p}\right\}  \right]  ^{1/p}, \label{eq:p-norm}%
\end{equation}
where $p\geq1$ and $\left\vert A\right\vert \equiv\sqrt{A^{\dag}A}$ (note that
we sometimes use the notation $\left\Vert A\right\Vert _{p}$ even for values
$p\in\left(  0,1\right)  $ when the quantity on the right-hand side\ of
(\ref{eq:p-norm}) is not a norm). From the above definition, we can see that
the following equalities hold for any operators $A$ and $B$:%
\begin{align}
\operatorname{Tr}\left\{  B^{\dag}A^{\dag}AB\right\}   &  =\left\Vert
AB\right\Vert _{2}^{2},\label{eq:helper-1}\\
\left\Vert B^{\dag}A^{\dag}AB\right\Vert _{p}^{p}  &  =\left\Vert
AB\right\Vert _{2p}^{2p}. \label{eq:helper-2}%
\end{align}

The quantum R\'{e}nyi entropy of a state $\rho$ is defined for $\alpha
\in\left(  0,1\right)  \cup\left(  1,\infty\right)  $ as%
\begin{equation}
H_{\alpha}( \rho) \equiv\frac{1}{1-\alpha}\log\operatorname{Tr} \{
\rho^{\alpha}\} = \frac{\alpha}{1-\alpha}\log\left\Vert \rho\right\Vert
_{\alpha},
\end{equation}
and reduces to the von Neumann entropy in the limit as $\alpha\rightarrow1$:%
\begin{equation}
\lim_{\alpha\rightarrow1}H_{\alpha}( \rho) =H( \rho) .
\end{equation}
There are at least two ways to generalize the quantum relative entropy,
%again
%defined for $\alpha\in\left(  0,1\right)  \cup\left(  1,\infty\right)  $,
which we refer to as the R\'{e}nyi relative entropy $D_{\alpha}( \rho
\Vert\sigma) $ \cite{P86} and the sandwiched R\'{e}nyi relative entropy
$\widetilde{D}_{\alpha}( \rho\Vert\sigma) $ \cite{MDSFT13,WWY13}. They are
defined respectively as follows:
\begin{align}
D_{\alpha}( \rho\Vert\sigma)  &  \equiv\frac{1}{\alpha-1}\log\operatorname{Tr}%
\left\{  \rho^{\alpha}\sigma^{1-\alpha}\right\} \\
&  =\frac{2}{\alpha-1}\log\left\Vert \sigma^{\left(  1-\alpha\right)  /2}%
\rho^{\alpha/2}\right\Vert _{2},\label{eq:D-rewrite-1}\\
\widetilde{D}_{\alpha}( \rho\Vert\sigma)  &  \equiv\frac{1}{\alpha-1}%
\log\operatorname{Tr}\left\{  \left(  \sigma^{\left(  1-\alpha\right)
/2\alpha}\rho\sigma^{\left(  1-\alpha\right)  /2\alpha}\right)  ^{\alpha
}\right\} \\
&  =\frac{2\alpha}{\alpha-1}\log\left\Vert \sigma^{\left(  1-\alpha\right)
/2\alpha}\rho^{1/2}\right\Vert _{2\alpha}, \label{eq:D-rewrite-2}%
\end{align}
if $\alpha\in(0,1)$ or if $\alpha\in(1,\infty)$ and $\operatorname{supp}(\rho)
\subseteq\operatorname{supp}(\sigma)$. If $\alpha\in(1,\infty)$ and
$\operatorname{supp}(\rho) \nsubseteq\operatorname{supp}(\sigma)$, then they
are defined to be equal to $+\infty$. The rewritings in (\ref{eq:D-rewrite-1})
and (\ref{eq:D-rewrite-2}) are helpful for our developments in this paper and
follow from (\ref{eq:helper-1})--(\ref{eq:helper-2}) and the following:%
\begin{align}
\operatorname{Tr}\left\{  \rho^{\alpha}\sigma^{1-\alpha}\right\}   &
=\operatorname{Tr}\left\{  \rho^{\alpha/2}\sigma^{\left(  1-\alpha\right)
/2}\sigma^{\left(  1-\alpha\right)  /2}\rho^{\alpha/2}\right\}  ,\\
\operatorname{Tr}\left\{  \left(  \sigma^{\left(  1-\alpha\right)  /2\alpha
}\rho\sigma_{\alpha}^{\left(  1-\alpha\right)  /2\alpha}\right)  ^{\alpha
}\right\}   &  =\left\Vert \sigma^{\left(  1-\alpha\right)  /2\alpha}%
\rho\sigma^{\left(  1-\alpha\right)  /2\alpha}\right\Vert _{\alpha}^{\alpha}\\
&  =\left\Vert \rho^{1/2}\sigma^{\left(  1-\alpha\right)  /\alpha}\rho
^{1/2}\right\Vert _{\alpha}^{\alpha}.
\end{align}
Both R\'{e}nyi generalizations reduce to the quantum relative entropy in the
limit as $\alpha\rightarrow1$ \cite{P86,MDSFT13,WWY13}:%
\begin{equation}
\lim_{\alpha\rightarrow1}D_{\alpha}( \rho\Vert\sigma) =\lim_{\alpha
\rightarrow1}\widetilde{D}_{\alpha}( \rho\Vert\sigma) =D( \rho\Vert\sigma) .
\end{equation}
The R\'{e}nyi relative entropy is monotone non-increasing with respect to
quantum channels when $\alpha\in[ 0,1)\cup(1,2] $ \cite{P86}:%
\begin{equation}
D_{\alpha}( \rho\Vert\sigma) \geq D_{\alpha}( \mathcal{N}( \rho)
\Vert\mathcal{N}( \sigma) ) ,
\end{equation}
and the sandwiched R\'{e}nyi relative entropy possesses a similar monotonicity
property when $\alpha\in[ 1/2,1)\cup(1,\infty) $ \cite{FL13,B13}:%
\begin{equation}
\widetilde{D}_{\alpha}( \rho\Vert\sigma) \geq\widetilde{D}_{\alpha}(
\mathcal{N}( \rho) \Vert\mathcal{N}( \sigma) ) .
\end{equation}

By picking particular values of the R\'enyi parameter $\alpha$, the quantities
above take on special forms and have meaning in operational contexts, being
known as the zero-relative entropy \cite{D09}, the collision relative entropy
\cite{DFW13}, the min-relative entropy \cite{DKFRR12}, and the max-relative
entropy \cite{D09}, respectively:%
\begin{align}
D_{0}( \rho\Vert\sigma)  &  =-\log\operatorname{Tr}\left\{  \rho^{0}%
\sigma\right\}  ,\\
D_{2}( \rho\Vert\sigma)  &  =\log\left\Vert \rho\sigma^{-1/2}\right\Vert
_{2},\\
\widetilde{D}_{1/2}( \rho\Vert\sigma)  &  =-\log F( \rho,\sigma) ,\\
D_{\max}( \rho\Vert\sigma)  &  =\lim_{\alpha\rightarrow\infty}\widetilde
{D}_{\alpha}( \rho\Vert\sigma) =\log\left\Vert \sigma^{-1/2}\rho\sigma
^{-1/2}\right\Vert _{\infty}=\log\left\Vert \sigma^{-1/2}\rho^{1/2}\right\Vert
_{\infty}^{2},
\end{align}
where $F( \rho,\sigma) \equiv\left\Vert \sqrt{\rho}\sqrt{\sigma}\right\Vert
_{1}^{2}$ is the quantum fidelity \cite{U73}.

\subsection{Hadamard three-line theorem}

One of the most important technical tools for proving our main result is the
operator version of the Hadamard three-line theorem given in \cite{B13}, in
particular, the very slight modification stated in \cite{D14}. We note that
the theorem below is a variant of the Riesz-Thorin operator interpolation
theorem (see, e.g., \cite{BL76,RS75}).

\begin{theorem}
\label{thm:hadamard}Let $S\equiv\left\{  z\in\mathbb{C}:0\leq\operatorname{Re}%
\left\{  z\right\}  \leq1\right\}  $, and let $L( \mathcal{H}) $ be the space
of bounded linear operators acting on a Hilbert space $\mathcal{H}$. Let
$G:S\rightarrow L( \mathcal{H}) $ be a bounded map that is holomorphic on the
interior of $S$ and continuous on the boundary.\footnote{A map $G:S\rightarrow
L(\mathcal{H})$ is holomorphic (continuous, bounded) if the corresponding
functions to matrix entries are holomorphic (continuous, bounded).} Let
$\theta\in\left(  0,1\right)  $ and define $p_{\theta}$ by%
\begin{equation}
\frac{1}{p_{\theta}}=\frac{1-\theta}{p_{0}}+\frac{\theta}{p_{1}},
\label{eq:p-relation}%
\end{equation}
where $p_{0},p_{1}\in\lbrack1,\infty]$. For $k=0,1$ define%
\begin{equation}
M_{k}=\sup_{t\in\mathbb{R}}\left\Vert G\left(  k+it\right)  \right\Vert
_{p_{k}}.
\end{equation}
Then%
\begin{equation}
\left\Vert G\left(  \theta\right)  \right\Vert _{p_{\theta}}\leq
M_{0}^{1-\theta}M_{1}^{\theta}. \label{eq:hadamard-3-line}%
\end{equation}

\end{theorem}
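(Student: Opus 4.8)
The plan is to reduce the operator-valued statement to the classical scalar Hadamard three-line theorem by combining the duality characterization of the Schatten norms with a suitably chosen analytic family of ``test operators,'' which is the standard route to Riesz--Thorin/Stein-type interpolation. Assuming $M_0,M_1\in(0,\infty)$ (the case $M_k=0$ being degenerate and handled separately; $M_k<\infty$ automatically in finite dimensions), I would first normalize by defining $\widetilde{G}(z)\equiv G(z)/(M_0^{1-z}M_1^{z})$. Since $M_0^{1-z}M_1^{z}=\exp((1-z)\log M_0+z\log M_1)$ is entire and nowhere zero, $\widetilde{G}$ inherits holomorphy on the interior and continuity and boundedness on $S$, while $|M_0^{1-z}M_1^{z}|=M_0$ on $\operatorname{Re}(z)=0$ and $=M_1$ on $\operatorname{Re}(z)=1$. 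Hence $\sup_t\|\widetilde{G}(it)\|_{p_0}\le 1$ and $\sup_t\|\widetilde{G}(1+it)\|_{p_1}\le 1$, and it suffices to prove $\|\widetilde{G}(\theta)\|_{p_\theta}\le 1$.

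Next I would use that, in finite dimensions, the Schatten $p_\theta$-norm is attained through duality: there exists $Y$ with $\|Y\|_{p_\theta'}=1$ (where $1/p_\theta+1/p_\theta'=1$) and $\operatorname{Tr}(\widetilde{G}(\theta)Y)=\|\widetilde{G}(\theta)\|_{p_\theta}$; concretely, from a singular value decomposition $\widetilde{G}(\theta)=\sum_j \mu_j|u_j\rangle\langle v_j|$ one takes $Y\propto\sum_j \mu_j^{p_\theta-1}|v_j\rangle\langle u_j|$. Writing the polar and spectral decompositions $Y=U\sum_j s_j|j\rangle\langle j|$, I would introduce the analytic family
\[
Y(z)\equiv U\sum_j s_j^{\,p_\theta'\left(\frac{1-z}{p_0'}+\frac{z}{p_1'}\right)}|j\rangle\langle j|,
\]
and observe that the conjugate of the hypothesis, $1/p_\theta'=(1-\theta)/p_0'+\theta/p_1'$ (valid because the reciprocal relation is affine and $(1-\theta)+\theta=1$), forces the exponent at $z=\theta$ to equal $1$, so $Y(\theta)=Y$. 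The scalar function $F(z)\equiv\operatorname{Tr}(\widetilde{G}(z)\,Y(z))$ is then holomorphic on the interior of $S$ and continuous and bounded on $S$, since each matrix entry of $Y(z)$ is an entire function of $z$ whose modulus depends only on $\operatorname{Re}(z)$.

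The crux is a Hölder estimate on the two boundary lines. On $\operatorname{Re}(z)=0$ the real part of the exponent in $Y(z)$ is $p_\theta'/p_0'$, so $Y(it)$ has singular values $s_j^{p_\theta'/p_0'}$ and $\|Y(it)\|_{p_0'}^{p_0'}=\sum_j s_j^{p_\theta'}=\|Y\|_{p_\theta'}^{p_\theta'}=1$; Hölder's inequality for Schatten norms then gives $|F(it)|\le\|\widetilde{G}(it)\|_{p_0}\,\|Y(it)\|_{p_0'}\le 1$, and symmetrically $|F(1+it)|\le 1$. Applying the classical scalar Hadamard three-line theorem to $F$ (this itself follows from the maximum modulus principle after multiplying by $e^{\epsilon z^2}$ to control the infinite strip and letting $\epsilon\to 0$) yields $|F(\theta)|\le 1$. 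Since $F(\theta)=\operatorname{Tr}(\widetilde{G}(\theta)Y)=\|\widetilde{G}(\theta)\|_{p_\theta}$ by the choice of $Y$, we obtain $\|\widetilde{G}(\theta)\|_{p_\theta}\le 1$, and undoing the normalization delivers $\|G(\theta)\|_{p_\theta}\le M_0^{1-\theta}M_1^{\theta}$.

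I expect the main obstacle to be calibrating the exponents in the test family $Y(z)$ so that they simultaneously reproduce the dual optimizer at $z=\theta$ and drive both boundary norms $\|Y(it)\|_{p_0'}$ and $\|Y(1+it)\|_{p_1'}$ to exactly $1$; this is what makes the Hölder bounds collapse to the stated constants. A secondary nuisance is the endpoint bookkeeping when $p_0$ or $p_1$ equals $1$ or $\infty$, where some $1/p_k'$ vanishes and $s_j^{0}$ must be read as the projection onto the support of $Y$. Verifying holomorphy and boundedness of $F$ on the strip, needed for the scalar maximum-modulus step, is routine in the finite-dimensional setting assumed here.
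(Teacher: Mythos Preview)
Your proposal is correct and follows the standard Stein--Riesz--Thorin interpolation argument: pair the analytic operator family with a dual analytic test family constructed from the polar decomposition of the norming functional, then invoke the scalar Hadamard three-line theorem on the resulting trace pairing. The paper itself does not supply a proof of this theorem; it quotes the statement from \cite{B13} (with the slight variant in \cite{D14}) as a background tool, and the argument you outline is essentially the one given in those references. There is therefore nothing substantive to contrast---your sketch is the expected proof, with the usual caveats you already flagged about endpoint conventions when $p_k\in\{1,\infty\}$ and the degenerate case $M_k=0$.
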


\subsection{R\'{e}nyi generalizations of the quantum relative entropy
difference}

Let $\rho$, $\sigma$, and $\mathcal{N}$ be as given in
Definition~\ref{def:rho-sig-N}. In \cite{SBW14}, two R\'{e}nyi generalizations
of the relative entropy difference in (\ref{eq:rel-ent-diff}) were defined as
follows:%
\begin{align}
\Delta_{\alpha}(\rho,\sigma,\mathcal{N})  &  \equiv\frac{1}{\alpha-1}%
\log\operatorname{Tr}\left\{  \rho^{\alpha}\sigma^{\left(  1-\alpha\right)
/2}\mathcal{N}^{\dag}\left(  \left[  \mathcal{N}(\sigma)\right]  ^{\left(
\alpha-1\right)  /2}\left[  \mathcal{N}(\rho)\right]  ^{1-\alpha}\left[
\mathcal{N}(\sigma)\right]  ^{\left(  \alpha-1\right)  /2}\right)
\sigma^{\left(  1-\alpha\right)  /2}\right\}  ,\nonumber\\
\widetilde{\Delta}_{\alpha}(\rho,\sigma,\mathcal{N})  &  \equiv\frac{1}%
{\alpha^{\prime}}\log\left\Vert \rho^{1/2}\sigma^{-\alpha^{\prime}%
/2}\mathcal{N}^{\dag}\left(  \left[  \mathcal{N}(\sigma)\right]
^{\alpha^{\prime}/2}\left[  \mathcal{N}(\rho)\right]  ^{-\alpha^{\prime}%
}\left[  \mathcal{N}(\sigma)\right]  ^{\alpha^{\prime}/2}\right)
\sigma^{-\alpha^{\prime}/2}\rho^{1/2}\right\Vert _{\alpha},
\end{align}
where $\alpha^{\prime}\equiv\left(  \alpha-1\right)  /\alpha$. Let $U$ be an
isometric extension of $\mathcal{N}$, so that%
\begin{equation}
\mathcal{N}(\cdot)=\operatorname{Tr}_{E}\left\{  U\left(  \cdot\right)
U^{\dag}\right\}  .
\end{equation}
We can write the adjoint $\mathcal{N}^{\dag}$\ in terms of this isometric
extension as follows:%
\begin{equation}
\mathcal{N}^{\dag}(\cdot)=U^{\dag}\left(  (\cdot)\otimes I_{E}\right)  U.
\end{equation}
This then allows us to write the definitions above in a simpler form:%
\begin{align}
\Delta_{\alpha}(\rho,\sigma,\mathcal{N})  &  =\frac{2}{\alpha-1}\log\left\Vert
\left(  \left[  \mathcal{N}(\rho)\right]  ^{\left(  1-\alpha\right)
/2}\left[  \mathcal{N}(\sigma)\right]  ^{\left(  \alpha-1\right)  /2}\otimes
I_{E}\right)  U\sigma^{\left(  1-\alpha\right)  /2}\rho^{\alpha/2}\right\Vert
_{2},\\
\widetilde{\Delta}_{\alpha}(\rho,\sigma,\mathcal{N})  &  =\frac{2}%
{\alpha^{\prime}}\log\left\Vert \left(  \left[  \mathcal{N}(\rho)\right]
^{-\alpha^{\prime}/2}\left[  \mathcal{N}(\sigma)\right]  ^{\alpha^{\prime}%
/2}\otimes I_{E}\right)  U\sigma^{-\alpha^{\prime}/2}\rho^{1/2}\right\Vert
_{2\alpha}.
\end{align}
It is known that the following limits hold for $\rho$, $\sigma$, and
$\mathcal{N}$ taken as in Definition \ref{def:rho-sig-N-PD} \cite{SBW14}:%
\begin{equation}
\lim_{\alpha\rightarrow1}\Delta_{\alpha}(\rho,\sigma,\mathcal{N})=\lim
_{\alpha\rightarrow1}\widetilde{\Delta}_{\alpha}(\rho,\sigma,\mathcal{N}%
)=\Delta(\rho,\sigma,\mathcal{N}).
\end{equation}
The fact that these limits hold for $\rho$, $\sigma$, and $\mathcal{N}$ taken
as in Definition \ref{def:rho-sig-N} and subject to $\operatorname{supp}%
(\rho)\subseteq\operatorname{supp}(\sigma)$ follows from \cite{W15} and the
development in Appendix~\ref{app:Delta-limit-alpha-1}. \cite{DW15} proved that
for $\alpha\in\lbrack0,1)\cup(1,2]$,%
\begin{equation}
\Delta_{\alpha}(\rho,\sigma,\mathcal{N})\geq0,
\end{equation}
and for $\alpha\in\lbrack1/2,1)\cup(1,\infty]$:%
\begin{equation}
\widetilde{\Delta}_{\alpha}(\rho,\sigma,\mathcal{N})\geq0,
\end{equation}
when $\rho$, $\sigma$, and $\mathcal{N}$ are taken as in Definition
\ref{def:rho-sig-N-PD}. The latter inequality was refined recently in
\cite{W15} for $\alpha\in(1/2,1]$ and for $\rho$, $\sigma$, and $\mathcal{N}$
taken as in Definition \ref{def:rho-sig-N} and subject to $\operatorname{supp}%
(\rho)\subseteq\operatorname{supp}(\sigma)$. It remains an open question to
determine whether these quantities are non-decreasing in $\alpha$ for any
non-trivial range of $\alpha$ (note that \cite{SBW14} argued that they are
non-decreasing in $\alpha$ in a neighborhood of $\alpha=1$).

\section{Swiveled R\'{e}nyi generalizations of the quantum relative entropy
difference}

\label{sec:swivel-Renyi}In the spirit of the discussion in
Section~\ref{sec:summary-results}, we consider different definitions of
$\Delta_{\alpha}( \rho,\sigma,\mathcal{N}) $ and $\widetilde{\Delta}_{\alpha}(
\rho,\sigma,\mathcal{N}) $\ in order to allow for unitary rotations about
swivels, i.e., an optimization over unitaries of the form $V_{\mathcal{N}(
\sigma) }$ and $V_{\sigma}$:

\begin{definition}
Let $\rho$, $\sigma$, and $\mathcal{N}$ be as given in
Definition~\ref{def:rho-sig-N}. We define swiveled R\'{e}nyi generalizations
of the quantum relative entropy difference in \eqref{eq:rel-ent-diff} as
follows:%
\begin{align}
\Delta_{\alpha}^{\prime}(\rho,\sigma,\mathcal{N})  &  \equiv\frac{2}{\alpha
-1}\max_{V_{\sigma},V_{\mathcal{N}(\sigma)}}\log\left\Vert \left(  \left[
\mathcal{N}(\rho)\right]  ^{\left(  1-\alpha\right)  /2}V_{\mathcal{N}%
(\sigma)}\left[  \mathcal{N}(\sigma)\right]  ^{\left(  \alpha-1\right)
/2}\otimes I_{E}\right)  U\sigma^{\left(  1-\alpha\right)  /2}V_{\sigma}%
\rho^{\alpha/2}\right\Vert _{2},\label{eq:Delta-new}\\
\widetilde{\Delta}_{\alpha}^{\prime}(\rho,\sigma,\mathcal{N})  &  \equiv
\frac{2}{\alpha^{\prime}}\max_{V_{\sigma},V_{\mathcal{N}(\sigma)}}%
\log\left\Vert \left(  \left[  \mathcal{N}(\rho)\right]  ^{-\alpha^{\prime}%
/2}V_{\mathcal{N}(\sigma)}\left[  \mathcal{N}(\sigma)\right]  ^{\alpha
^{\prime}/2}\otimes I_{E}\right)  U\sigma^{-\alpha^{\prime}/2}V_{\sigma}%
\rho^{1/2}\right\Vert _{2\alpha}, \label{eq:Delta-tilde-new}%
\end{align}
where $\alpha^{\prime}=\left(  \alpha-1\right)  /\alpha$ and the optimizations
are over the compact sets of unitaries $V_{\sigma}$ and $V_{\mathcal{N}%
(\sigma)}$ commuting with $\sigma$ and $\mathcal{N}(\sigma)$, respectively.
\end{definition}

This slight extra degree of freedom allows us to establish that $\Delta
_{\alpha}^{\prime}$ and $\widetilde{\Delta}_{\alpha}^{\prime}$ are monotone
non-decreasing in $\alpha$ for particular values (see
Theorems~\ref{thm:monotone} and \ref{thm:monotone-tilde}).

\subsection{Reduction to R\'enyi relative entropy}

Observe that by choosing $\mathcal{N}=\operatorname{Tr}$, we find that
$\Delta_{\alpha}^{\prime}$ reduces to the R\'{e}nyi relative entropy whenever
$\operatorname{supp}(\rho)\subseteq\operatorname{supp}(\sigma)$:%
\begin{align}
\Delta_{\alpha}^{\prime}(\rho,\sigma,\operatorname{Tr})  &  =\frac{2}%
{\alpha-1}\log\left\Vert \sigma^{\left(  1-\alpha\right)  /2}\rho^{\alpha
/2}\right\Vert _{2}+\log\operatorname{Tr}\left\{  \sigma\right\} \\
&  =D_{\alpha}(\rho\Vert\sigma)+\log\operatorname{Tr}\left\{  \sigma\right\}
,
\end{align}
and $\widetilde{\Delta}_{\alpha}^{\prime}$ to the sandwiched R\'{e}nyi
relative entropy whenever $\operatorname{supp}(\rho)\subseteq
\operatorname{supp}(\sigma)$:%
\begin{align}
\widetilde{\Delta}_{\alpha}^{\prime}\left(  \rho,\sigma,\operatorname{Tr}%
\right)   &  \equiv\frac{2}{\alpha^{\prime}}\log\left\Vert \sigma
^{-\alpha^{\prime}/2}\rho^{1/2}\right\Vert _{2\alpha}+\log\operatorname{Tr}%
\left\{  \sigma\right\} \\
&  =\widetilde{D}_{\alpha}(\rho\Vert\sigma)+\log\operatorname{Tr}\left\{
\sigma\right\}  ,
\end{align}
just as%
\begin{equation}
\Delta(\rho,\sigma,\operatorname{Tr})=D(\rho\Vert\sigma)+\log\operatorname{Tr}%
\left\{  \sigma\right\}  .
\end{equation}

\subsection{Behavior around $\alpha=1$}

Here we discuss the behavior of $\Delta_{\alpha}^{\prime}$ and $\widetilde
{\Delta}_{\alpha}^{\prime}$ around $\alpha=1$, with the result being that
these quantities are generally discontinuous at $\alpha=1$:

\begin{proposition}
\label{prop:lim-a-1}Let $\rho$, $\sigma$, and $\mathcal{N}$ be as given in
Definition~\ref{def:rho-sig-N-PD}. Then%
\begin{align}
\lim_{\alpha\nearrow1}\Delta_{\alpha}^{\prime}(\rho,\sigma,\mathcal{N})  &
=\lim_{\alpha\nearrow1}\widetilde{\Delta}_{\alpha}^{\prime}(\rho
,\sigma,\mathcal{N})=\min_{V_{\mathcal{N}(\sigma)},V_{\sigma}}%
f(1,V_{\mathcal{N}(\sigma)},V_{\sigma}),\label{eq:inequalities-disc}\\
\lim_{\alpha\searrow1}\Delta_{\alpha}^{\prime}(\rho,\sigma,\mathcal{N})  &
=\lim_{\alpha\searrow1}\widetilde{\Delta}_{\alpha}^{\prime}(\rho
,\sigma,\mathcal{N})=\max_{V_{\mathcal{N}(\sigma)},V_{\sigma}}%
f(1,V_{\mathcal{N}(\sigma)},V_{\sigma}), \label{eq:inequalities-disc-2}%
\end{align}
where%
\begin{multline}
f(1,V_{\mathcal{N}(\sigma)},V_{\sigma})\equiv\operatorname{Tr}\left\{
\rho\left[  \log\rho-\log\sigma\right]  \right\} \label{eq:f-function}\\
-\operatorname{Tr}\left\{  \mathcal{N}\left(  \left[  V_{\sigma}\rho
V_{\sigma}^{\dag}\right]  \right)  \left[  \log\left[  V_{\mathcal{N}(\sigma
)}^{\dag}\mathcal{N}(\rho)V_{\mathcal{N}(\sigma)}\right]  -\log\left[
\mathcal{N}(\sigma)\right]  \right]  \right\}  .
\end{multline}
As a consequence, we have that%
\begin{equation}
\min_{V_{\mathcal{N}(\sigma)},V_{\sigma}}f(1,V_{\mathcal{N}(\sigma)}%
,V_{\sigma})\leq f\left(  1,I,I\right)  =\Delta(\rho,\sigma,\mathcal{N}%
)\leq\max_{V_{\mathcal{N}(\sigma)},V_{\sigma}}f(1,V_{\mathcal{N}(\sigma
)},V_{\sigma}),
\end{equation}
and there is generally a discontinuity at $\alpha=1$.
\end{proposition}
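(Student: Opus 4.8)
The plan is to reduce both one-sided limits to a single derivative computation carried out for each fixed pair of swivels, and then to perform an envelope-type interchange of the limit $\alpha\to1$ with the maximization over the unitaries. For a fixed pair $V:=(V_{\mathcal{N}(\sigma)},V_\sigma)$, write $g_\alpha(V)$ for the Hilbert--Schmidt norm appearing inside $\Delta_\alpha^{\prime}$ (and $\tilde g_\alpha(V)$ for the $2\alpha$-norm inside $\widetilde\Delta_\alpha^{\prime}$), and set $h_\alpha:=2\log g_\alpha$, $\tilde h_\alpha:=2\log\tilde g_\alpha$. Then $\Delta_\alpha^{\prime}=\tfrac{1}{\alpha-1}\max_{V}h_\alpha(V)$ and $\widetilde\Delta_\alpha^{\prime}=\tfrac{\alpha}{\alpha-1}\max_{V}\tilde h_\alpha(V)$. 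Since $\rho,\sigma,\mathcal{N}(\rho),\mathcal{N}(\sigma)$ are positive definite, every operator power reduces to the identity at $\alpha=1$; using that $U$ is an isometry and that $V_\sigma,V_{\mathcal{N}(\sigma)}$ are unitary, one checks $g_1(V)=\|\rho^{1/2}\|_2=1$ for all swivels, so $h_1\equiv\tilde h_1\equiv0$. Thus $\Delta_\alpha^{\prime}$ and $\widetilde\Delta_\alpha^{\prime}$ are genuine $0/0$ expressions at $\alpha=1$, with the diverging prefactor $\tfrac{1}{\alpha-1}$ multiplying a vanishing maximum.

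First I would establish the pointwise-in-$V$ derivative identity $\partial_\alpha h_\alpha(V)\big|_{\alpha=1}=\partial_\alpha\tilde h_\alpha(V)\big|_{\alpha=1}=f(1,V_{\mathcal{N}(\sigma)},V_\sigma)$, with $f$ as in \eqref{eq:f-function}. In the positive-definite regime $h_\alpha$ and $\tilde h_\alpha$ are analytic in $\alpha$ near $1$, so this is exactly the differentiation that proves $\lim_{\alpha\to1}\Delta_\alpha=\lim_{\alpha\to1}\widetilde\Delta_\alpha=\Delta(\rho,\sigma,\mathcal{N})$, now carried out with the swivels in place: the conjugations $V_\sigma\rho V_\sigma^{\dag}$ and $V_{\mathcal{N}(\sigma)}^{\dag}\mathcal{N}(\rho)V_{\mathcal{N}(\sigma)}$ in \eqref{eq:f-function} arise from differentiating the factors carrying $\rho^{\alpha/2}$ and $[\mathcal{N}(\rho)]^{(1-\alpha)/2}$, while the swivels commute through the powers of $\sigma$ and $\mathcal{N}(\sigma)$. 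That both families share this single first-order coefficient is precisely why their left and right limits agree.

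The crux is the interchange of $\lim_{\alpha\to1}$ with $\max_V$. Using $h_1\equiv0$ and the fundamental theorem of calculus, $\tfrac{h_\alpha(V)}{\alpha-1}=\tfrac{1}{\alpha-1}\int_1^\alpha\partial_\beta h_\beta(V)\,d\beta$ is the average of $\partial_\beta h_\beta(V)$ over $\beta$ between $1$ and $\alpha$. By joint continuity (indeed smoothness) of $(\beta,V)\mapsto\partial_\beta h_\beta(V)$ on a compact neighborhood of $\{1\}\times\mathbb{V}_{\mathcal{N}(\sigma)}\times\mathbb{V}_\sigma$, for every $\epsilon>0$ there is $\delta>0$ with $\min_V f(1,V)-\epsilon\le\partial_\beta h_\beta(V)\le\max_V f(1,V)+\epsilon$ whenever $|\beta-1|<\delta$, uniformly in $V$. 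For $1<\alpha<1+\delta$, evaluating $\max_V h_\alpha$ at its maximizer and averaging gives $\Delta_\alpha^{\prime}\le\max_V f(1,V)+\epsilon$, while restricting the maximum to a maximizer $V^{\ast}$ of $f(1,\cdot)$ gives $\Delta_\alpha^{\prime}\ge\tfrac{h_\alpha(V^{\ast})}{\alpha-1}\to f(1,V^{\ast})=\max_V f(1,V)$; hence $\lim_{\alpha\searrow1}\Delta_\alpha^{\prime}=\max_V f(1,V)$. For $\alpha<1$ the factor $\tfrac{1}{\alpha-1}$ is negative, which reverses both inequalities and replaces $\max_V$ by $\min_V$, yielding $\lim_{\alpha\nearrow1}\Delta_\alpha^{\prime}=\min_V f(1,V)$. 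The identical bounds applied to $\tilde h_\alpha$, with the harmless extra factor $\alpha\to1$, produce the same two one-sided limits for $\widetilde\Delta_\alpha^{\prime}$.

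Finally, continuity of $f(1,\cdot)$ on the compact product of swivel sets ensures the extrema are attained, and setting $V_\sigma=I$, $V_{\mathcal{N}(\sigma)}=I$ gives $f(1,I,I)=\Delta(\rho,\sigma,\mathcal{N})$, which lies between the minimum and the maximum; the discontinuity at $\alpha=1$ is then generic, occurring whenever $f(1,\cdot)$ is non-constant across the swivels. I expect the interchange step to be the main obstacle: the limit and the maximization do not commute, and it is exactly the sign of $\alpha-1$ that decides whether the surviving extremum is a minimum or a maximum. The derivative computation of the second paragraph, though lengthy, is routine given the positive-definiteness assumptions.
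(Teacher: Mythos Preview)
Your proposal is correct and follows essentially the same route as the paper. The paper packages the interchange step into an abstract lemma (continuity of $\max_{t}f(\alpha,t)$ and $\min_{t}f(\alpha,t)$ for jointly continuous $f$ on a compact parameter space, proved via Heine--Cantor), applied to the function $f(\alpha,V):=\tfrac{2}{\alpha-1}\log\|\cdots\|_{2}$ extended continuously to $\alpha=1$; you instead make the same uniform-continuity argument explicit via the integral-average representation $\tfrac{h_\alpha(V)}{\alpha-1}=\tfrac{1}{\alpha-1}\int_1^\alpha\partial_\beta h_\beta(V)\,d\beta$. Both arguments hinge on the same two ingredients: the derivative computation $\partial_\alpha h_\alpha|_{\alpha=1}=f(1,V)$ (which the paper relegates to an appendix via Taylor expansions) and the observation that the sign of $\alpha-1$ converts the outer $\max_V$ in the definition into $\min_V$ for $\alpha<1$.
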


\begin{proof}
Let $\mathcal{A}\subseteq\lbrack0,2]$, which we will choose shortly. Define
the function $f:\mathcal{A}\times\mathbb{V}_{\mathcal{N}(\sigma)}%
\times\mathbb{V}_{\sigma}\rightarrow\mathbb{R}$ as%
\begin{equation}
f(\alpha,V_{\mathcal{N}(\sigma)},V_{\sigma})\equiv\frac{2}{\alpha-1}%
\log\left\Vert \left(  \left[  \mathcal{N}(\rho)\right]  ^{\left(
1-\alpha\right)  /2}V_{\mathcal{N}(\sigma)}\left[  \mathcal{N}(\sigma)\right]
^{\left(  \alpha-1\right)  /2}\otimes I_{E}\right)  U\sigma^{\left(
1-\alpha\right)  /2}V_{\sigma}\rho^{\alpha/2}\right\Vert _{2},
\label{eq:f-alpha-function}%
\end{equation}
whenever $\alpha\neq1$, and $f(1,V_{\mathcal{N}(\sigma)},V_{\sigma})$ as in
(\ref{eq:f-function}). One can check that%
\begin{equation}
\lim_{\alpha\rightarrow1}f(\alpha,V_{\mathcal{N}(\sigma)},V_{\sigma
})=f(1,V_{\mathcal{N}(\sigma)},V_{\sigma}),
\end{equation}
for example by performing Taylor expansions to calculate the limit (see
Appendix~\ref{app:taylor}\ for details of this calculation). The function $f$
is then continuous in $\alpha$, $V_{\sigma}$, and $V_{\mathcal{N}(\sigma)}$.
Furthermore, it fulfills the conditions of Lemma~\ref{lem:max-continuous} in
Appendix~\ref{app:auxiliary} if we choose $\mathcal{A}=[1,M]$ for any
$M\in(1,2]$ and $\mathcal{T}=\mathbb{V}_{\mathcal{N}(\sigma)}\times
\mathbb{V}_{\sigma}$. Hence, we get that%
\begin{equation}
\Delta_{\alpha}^{\prime}(\rho,\sigma,\mathcal{N})=\max_{V_{\mathcal{N}%
(\sigma)},V_{\sigma}}f(\alpha,V_{\mathcal{N}(\sigma)},V_{\sigma})
\end{equation}
is continuous on $\alpha\in\lbrack1,M]$ and thus%
\begin{equation}
\lim_{\alpha\searrow1}\Delta_{\alpha}^{\prime}(\rho,\sigma,\mathcal{N}%
)=\max_{V_{\mathcal{N}(\sigma)},V_{\sigma}}f(1,V_{\mathcal{N}(\sigma
)},V_{\sigma}).
\end{equation}
Repeating the same argument with $\mathcal{A}=[0,1]$ yields that%
\begin{equation}
\Delta_{\alpha}^{\prime}(\rho,\sigma,\mathcal{N})=\min_{V_{\mathcal{N}%
(\sigma)},V_{\sigma}}f(\alpha,V_{\mathcal{N}(\sigma)},V_{\sigma})
\end{equation}
is continuous on $[0,1]$ and thus%
\begin{equation}
\lim_{\alpha\nearrow1}\Delta_{\alpha}^{\prime}(\rho,\sigma,\mathcal{N}%
)=\min_{V_{\mathcal{N}(\sigma)},V_{\sigma}}f(1,V_{\mathcal{N}(\sigma
)},V_{\sigma}).
\end{equation}
Given that $\Delta(\rho,\sigma,\mathcal{N})=f(1,I,I)$, we can conclude the
following inequality:%
\begin{equation}
\min_{V_{\mathcal{N}(\sigma)},V_{\sigma}}f(1,V_{\mathcal{N}(\sigma)}%
,V_{\sigma})\leq\Delta(\rho,\sigma,\mathcal{N})\leq\max_{V_{\mathcal{N}%
(\sigma)},V_{\sigma}}f(1,V_{\mathcal{N}(\sigma)},V_{\sigma})
\end{equation}

The arguments for the quantity $\widetilde{\Delta}_{\alpha}^{\prime}%
(\rho,\sigma,\mathcal{N})$ are similar, so we just sketch them briefly. Define
the function%
\begin{equation}
g(\alpha,V_{\mathcal{N}(\sigma)},V_{\sigma})\equiv\frac{2\alpha}{\alpha-1}%
\log\left\Vert \left(  \left[  \mathcal{N}(\rho)\right]  ^{\left(
1-\alpha\right)  /2\alpha}V_{\mathcal{N}(\sigma)}\left[  \mathcal{N}%
(\sigma)\right]  ^{\left(  \alpha-1\right)  /2\alpha}\otimes I_{E}\right)
U\sigma^{\left(  1-\alpha\right)  /2\alpha}V_{\sigma}\rho^{1/2}\right\Vert
_{2\alpha},
\end{equation}
for $\alpha\neq1$ and set $g(1,V_{\mathcal{N}(\sigma)},V_{\sigma
})=f(1,V_{\mathcal{N}(\sigma)},V_{\sigma})$. One can then compute (again via
Taylor expansions, e.g.) that%
\begin{equation}
\lim_{\alpha\rightarrow1}g(\alpha,V_{\mathcal{N}(\sigma)},V_{\sigma
})=g(1,V_{\mathcal{N}(\sigma)},V_{\sigma}). \label{eq:g-limit-1}%
\end{equation}
The rest of the argument proceeds as above, which leads to the other
equalities in (\ref{eq:inequalities-disc})-(\ref{eq:inequalities-disc-2}).
\end{proof}

\subsection{Monotonicity in the R\'{e}nyi parameter}

This section contains our main result, that both $\Delta_{\alpha}^{\prime}$
and $\widetilde{\Delta}_{\alpha}^{\prime}$ are monotone non-decreasing with
respect to $\alpha$ for particular values.

\begin{theorem}
\label{thm:monotone}Let $\rho$, $\sigma$, and $\mathcal{N}$ be as given in
Definition~\ref{def:rho-sig-N}. The swiveled R\'{e}nyi quantity $\Delta
_{\alpha}^{\prime}( \rho,\sigma,\mathcal{N}) $ is monotone non-decreasing with
respect to $\alpha\in\left[  0,1)\cup(1,2\right]  $,\ in the sense that for
$0\leq\alpha\leq\gamma\leq2$, $\alpha\neq1$, and $\gamma\neq1$%
\begin{equation}
\Delta_{\alpha}^{\prime}( \rho,\sigma,\mathcal{N}) \leq\Delta_{\gamma}%
^{\prime}( \rho,\sigma,\mathcal{N}) . \label{eq:Delta-monotone}%
\end{equation}

\end{theorem}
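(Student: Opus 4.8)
The plan is to reduce the claimed monotonicity to the convexity of a single auxiliary function, with the Hadamard three-line theorem (Theorem~\ref{thm:hadamard}) supplying the convexity and the swivels absorbing the unitaries that the interpolation produces. Write $\Delta_{\alpha}^{\prime}(\rho,\sigma,\mathcal{N})=\frac{2}{\alpha-1}F(\alpha)$, where $F(\alpha)\equiv\max_{V_{\sigma},V_{\mathcal{N}(\sigma)}}\log\Vert A_{\alpha}(V_{\sigma},V_{\mathcal{N}(\sigma)})\Vert_{2}$ and $A_{\alpha}(V_{\sigma},V_{\mathcal{N}(\sigma)})$ denotes the operator inside the norm in \eqref{eq:Delta-new}. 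The key observation is that if $F$ is convex on $[0,2]$ with $F(1)=0$, then $\frac{F(\alpha)}{\alpha-1}$ is the slope of the secant of the graph of $F$ joining $(1,0)$ to $(\alpha,F(\alpha))$; for a convex function such secant slopes from a fixed point are non-decreasing in $\alpha$ on $[0,2]\setminus\{1\}$, which simultaneously covers the cases $0\le\alpha\le\gamma<1$, $1<\alpha\le\gamma\le2$, and the straddling case $\alpha<1<\gamma$. Hence $\Delta_{\alpha}^{\prime}=2F(\alpha)/(\alpha-1)$ is non-decreasing, which is exactly \eqref{eq:Delta-monotone}.

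The core step is to prove convexity of $F$ via Theorem~\ref{thm:hadamard}. Fix $\alpha_{0}<\alpha_{1}$ in $[0,2]$ and $\theta\in(0,1)$, and set $\beta=(1-\theta)\alpha_{0}+\theta\alpha_{1}$. Let $V_{\sigma}^{\ast},V_{\mathcal{N}(\sigma)}^{\ast}$ attain $F(\beta)$, and define a bounded holomorphic family $G(z)$ on the strip $S$ by keeping these swivels fixed and replacing each operator exponent by the affine function of $z$ that equals the $\alpha_{0}$-exponent at $\operatorname{Re}(z)=0$ and the $\alpha_{1}$-exponent at $\operatorname{Re}(z)=1$; then $G(\theta)=A_{\beta}(V_{\sigma}^{\ast},V_{\mathcal{N}(\sigma)}^{\ast})$ since $\beta$ is the corresponding affine combination. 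Applying Theorem~\ref{thm:hadamard} with $p_{0}=p_{1}=p_{\theta}=2$ gives $\log\Vert G(\theta)\Vert_{2}\le(1-\theta)\log M_{0}+\theta\log M_{1}$, where $M_{k}=\sup_{t}\Vert G(k+it)\Vert_{2}$. It then remains to establish the boundary estimates $M_{0}\le e^{F(\alpha_{0})}$ and $M_{1}\le e^{F(\alpha_{1})}$, which yields $F(\beta)\le(1-\theta)F(\alpha_{0})+\theta F(\alpha_{1})$, i.e. convexity of $F$ on all of $[0,2]$.

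The main obstacle, and precisely the role played by the swivels, is these boundary estimates. On the line $\operatorname{Re}(z)=k$ each operator power acquires a purely imaginary exponent, e.g. $\mathcal{N}(\sigma)^{(\alpha_{k}-1)/2}$ becomes $\mathcal{N}(\sigma)^{(\alpha_{k}-1)/2+is}$; the factor $\mathcal{N}(\sigma)^{is}$ is a partial isometry, unitary on $\operatorname{supp}(\mathcal{N}(\sigma))$, and padding it with the identity on the orthogonal complement produces a genuine unitary commuting with $\mathcal{N}(\sigma)$ (the padding is harmless because the factor multiplies an operator supported on $\operatorname{supp}(\mathcal{N}(\sigma))$). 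This unitary merges with $V_{\mathcal{N}(\sigma)}^{\ast}$ into an admissible swivel in $\mathbb{V}_{\mathcal{N}(\sigma)}$, and the analogous imaginary factor from $\sigma$ is absorbed into $V_{\sigma}^{\ast}\in\mathbb{V}_{\sigma}$. The two remaining imaginary factors, generated by $\mathcal{N}(\rho)$ and $\rho$, sit at the extreme left and right ends of the operator chain; after padding to unitaries they leave $\Vert G(k+it)\Vert_{2}$ unchanged, since multiplication on either side by a unitary preserves the Schatten $2$-norm, so no swivels $V_{\mathcal{N}(\rho)}$ or $V_{\rho}$ are required. Consequently $\Vert G(k+it)\Vert_{2}=\Vert A_{\alpha_{k}}(V)\Vert_{2}$ for swivels $V$ in the admissible sets, whence $M_{k}\le\max_{V}\Vert A_{\alpha_{k}}(V)\Vert_{2}=e^{F(\alpha_{k})}$. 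This absorption is exactly what fails for the original swivel-free quantities, explaining why the extra degree of freedom is essential.

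Finally, to pin down the anchor $F(1)=0$, I would invoke the $\alpha\rightarrow1$ analysis already used for Proposition~\ref{prop:lim-a-1}: since $f(1,V_{\mathcal{N}(\sigma)},V_{\sigma})$ in \eqref{eq:f-function} is finite and $f(\alpha,\cdot)=\frac{2}{\alpha-1}\log\Vert A_{\alpha}(\cdot)\Vert_{2}\rightarrow f(1,\cdot)$, we must have $\log\Vert A_{\alpha}(V)\Vert_{2}\rightarrow0$ for each $V$, so $F(\alpha)\rightarrow0$ and $F(1)=0$. The finiteness and continuity of $F$ and the validity of the holomorphic continuation on supports follow routinely from the conventions of Section~\ref{sec:prelim}, and I would verify them without difficulty. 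Combining the convexity of $F$ on $[0,2]$ with $F(1)=0$ and the secant-slope monotonicity from the first paragraph then delivers \eqref{eq:Delta-monotone}.
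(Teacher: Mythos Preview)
Your approach is correct and is essentially the paper's argument in a more unified package. Both rely on Theorem~\ref{thm:hadamard} with $p_0=p_1=p_\theta=2$, and both use the swivels to absorb the $\omega^{is}$-type unitaries that appear on the boundary lines of the strip. The paper, however, splits into three explicit cases: it interpolates between $1$ and $\gamma$ (for $1<\alpha<\gamma\le 2$), runs the mirror argument for $0\le\alpha<\gamma<1$, and then handles the straddling pair via a separate interpolation between $\alpha$ and $2-\alpha$ with $\theta=1/2$. Your observation that these are all instances of a single convexity statement for $F$ on $[0,2]$, together with the secant-slope characterization of $\Delta_{\alpha}^{\prime}=2F(\alpha)/(\alpha-1)$ anchored at $F(1)=0$, collapses the case split into one line. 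What you gain is economy and conceptual clarity; the paper's version has the minor advantage that each case is self-contained, so that the anchor condition (the paper's $M_0=1$ in Case~1 and $\|G(1/2)\|_2=1$ in Case~3, which is exactly your $F(1)=0$) is computed only where it is needed rather than asserted globally. One small suggestion: rather than deriving $F(1)=0$ through the limit analysis behind Proposition~\ref{prop:lim-a-1} (which is stated under the positive-definiteness hypothesis of Definition~\ref{def:rho-sig-N-PD}), it is cleaner to verify it directly---$\|A_1(V)\|_2\le\|\rho^{1/2}\|_2=1$ because every other factor is a contraction, with equality at $V=(I,I)$ by the support manipulations in Appendix~\ref{app:Delta-limit-alpha-1}. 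This matches the paper's direct computations and keeps the argument under Definition~\ref{def:rho-sig-N}.
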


\begin{proof}
The main tool for our proof is Theorem~\ref{thm:hadamard}. We break the proof
of inequality in (\ref{eq:Delta-monotone}) into several cases. We first
consider $1<\alpha<\gamma\leq2$. For some $W_{\mathcal{N}(\sigma)}%
\in\mathbb{V}_{\mathcal{N}(\sigma)}$ and $W_{\sigma}\in\mathbb{V}_{\sigma}$,
pick%
\begin{align}
G\left(  z\right)   &  =\left[  \mathcal{N}(\rho)\right]  ^{-z\left(
\gamma-1\right)  /2}W_{\mathcal{N}(\sigma)}\left[  \mathcal{N}(\sigma)\right]
^{z\left(  \gamma-1\right)  /2}U\sigma^{-z\left(  \gamma-1\right)
/2}W_{\sigma}\rho^{\left(  1+z\left(  \gamma-1\right)  \right)  /2}%
,\label{eq:first-G}\\
p_{0}  &  =2,\\
p_{1}  &  =2,\\
\theta &  =\frac{\alpha-1}{\gamma-1}\in\left(  0,1\right)  ,
\label{eq:first-theta}%
\end{align}
which fixes $p_{\theta}=2$. Then%
\begin{align}
M_{0}  &  =\sup_{t\in\mathbb{R}}\left\Vert G\left(  it\right)  \right\Vert
_{2}\label{eq:first-chain-1}\\
&  =\sup_{t\in\mathbb{R}}\left\Vert \left[  \mathcal{N}(\rho)\right]
^{-it\left(  \gamma-1\right)  /2}W_{\mathcal{N}(\sigma)}\left[  \mathcal{N}%
(\sigma)\right]  ^{it\left(  \gamma-1\right)  /2}U\sigma^{-it\left(
\gamma-1\right)  /2}W_{\sigma}\rho^{\left(  1+it\left(  \gamma-1\right)
\right)  /2}\right\Vert _{2}\\
&  =\left\Vert \rho^{1/2}\right\Vert _{2}=1,\\
M_{1}  &  =\sup_{t\in\mathbb{R}}\left\Vert G\left(  1+it\right)  \right\Vert
_{2}\\
&  =\sup_{t\in\mathbb{R}}\left\Vert \left[  \mathcal{N}(\rho)\right]
^{-\frac{\left(  1+it\right)  }{2}\left(  \gamma-1\right)  }W_{\mathcal{N}%
(\sigma)}\left[  \mathcal{N}(\sigma)\right]  ^{\frac{\left(  1+it\right)  }%
{2}\left(  \gamma-1\right)  }U\sigma^{-\frac{\left(  1+it\right)  }{2}\left(
\gamma-1\right)  }W_{\sigma}\rho^{\frac{\left(  1+\left(  1+it\right)  \left(
\gamma-1\right)  \right)  }{2}}\right\Vert _{2}\\
&  \leq\max_{V_{\mathcal{N}(\sigma)},V_{\sigma}}\left\Vert \left[
\mathcal{N}(\rho)\right]  ^{\left(  1-\gamma\right)  /2}V_{\mathcal{N}%
(\sigma)}\left[  \mathcal{N}(\sigma)\right]  ^{\left(  \gamma-1\right)
/2}U\sigma^{\left(  1-\gamma\right)  /2}V_{\sigma}\rho^{\gamma/2}\right\Vert
_{2}\\
&  =\exp\left\{  \frac{\gamma-1}{2}\Delta_{\gamma}^{\prime}(\rho
,\sigma,\mathcal{N})\right\}  ,\\
\left\Vert G\left(  \theta\right)  \right\Vert _{2}  &  =\left\Vert \left[
\mathcal{N}(\rho)\right]  ^{\left(  1-\alpha\right)  /2}W_{\mathcal{N}%
(\sigma)}\left[  \mathcal{N}(\sigma)\right]  ^{\left(  \alpha-1\right)
/2}U\sigma^{\left(  1-\alpha\right)  /2}W_{\sigma}\rho^{\alpha/2}\right\Vert
_{2}.
\end{align}
We then apply Theorem~\ref{thm:hadamard}\ to find that the following
inequality holds for all $W_{\mathcal{N}(\sigma)}\in\mathbb{V}_{\mathcal{N}%
(\sigma)}$ and $W_{\sigma}\in\mathbb{V}_{\sigma}$:%
\begin{equation}
\left\Vert \left[  \mathcal{N}(\rho)\right]  ^{\left(  1-\alpha\right)
/2}W_{\mathcal{N}(\sigma)}\left[  \mathcal{N}(\sigma)\right]  ^{\left(
\alpha-1\right)  /2}U\sigma^{\left(  1-\alpha\right)  /2}W_{\sigma}%
\rho^{\alpha/2}\right\Vert _{2}\leq\left[  \exp\left\{  \frac{\gamma-1}%
{2}\Delta_{\gamma}^{\prime}(\rho,\sigma,\mathcal{N})\right\}  \right]
^{\frac{\alpha-1}{\gamma-1}}. \label{eq:first-chain-last}%
\end{equation}
As a consequence, we can take the maximum over all $W_{\mathcal{N}(\sigma)}%
\in\mathbb{V}_{\mathcal{N}(\sigma)}$ and $W_{\sigma}\in\mathbb{V}_{\sigma}$
and apply the definition in (\ref{eq:Delta-new}) to establish that%
\begin{equation}
\exp\left\{  \frac{\alpha-1}{2}\Delta_{\alpha}^{\prime}(\rho,\sigma
,\mathcal{N})\right\}  \leq\left[  \exp\left\{  \frac{\gamma-1}{2}%
\Delta_{\gamma}^{\prime}(\rho,\sigma,\mathcal{N})\right\}  \right]
^{\frac{\alpha-1}{\gamma-1}}.
\end{equation}
We finally apply a logarithm to arrive at the conclusion that
(\ref{eq:Delta-monotone}) holds for all $1<\alpha<\gamma\leq2$.

To get the monotonicity for the range $0\leq\alpha<\gamma<1$, we exchange
$\alpha$ and $\gamma$ in (\ref{eq:first-G})-(\ref{eq:first-theta}) and apply
the same reasoning as in (\ref{eq:first-chain-1})-(\ref{eq:first-chain-last})
to arrive at the following inequality:%
\begin{equation}
\exp\left\{  \frac{\gamma-1}{2}\Delta_{\gamma}^{\prime}(\rho,\sigma
,\mathcal{N})\right\}  \leq\left[  \exp\left\{  \frac{\alpha-1}{2}%
\Delta_{\alpha}^{\prime}(\rho,\sigma,\mathcal{N})\right\}  \right]
^{\frac{\gamma-1}{\alpha-1}}.
\end{equation}
Taking a negative logarithm and noting that $0\leq\alpha<\gamma<1$ then gives
(\ref{eq:Delta-monotone}) for this range.

We are now left with proving the case $\alpha\in\lbrack0,1)$ and $\gamma
\in(1,2]$ the dual parameter of $\alpha$, such that $\alpha+\gamma=2$. Notice
that $\alpha-1=-\left(  \gamma-1\right)  $. Let $f\left(  z,\gamma\right)
=\left(  1-2z\right)  \left(  \gamma-1\right)  $. We pick%
\begin{align}
G\left(  z\right)   &  =\left[  \mathcal{N}(\rho)\right]  ^{-f\left(
z,\gamma\right)  /2}\left[  \mathcal{N}(\sigma)\right]  ^{f\left(
z,\gamma\right)  /2}U\sigma^{-f\left(  z,\gamma\right)  /2}\rho^{\left(
1+f\left(  z,\gamma\right)  \right)  /2},\\
p_{0}  &  =2,\\
p_{1}  &  =2,\\
\theta &  =1/2,
\end{align}
so that $p_{\theta}=2$. Consider that $f\left(  \theta,\gamma\right)  =0$, so
that%
\begin{align}
\left\Vert G\left(  \theta\right)  \right\Vert _{2}  &  =\left\Vert \left[
\mathcal{N}(\rho)\right]  ^{-f\left(  \theta,\gamma\right)  /2}\left[
\mathcal{N}(\sigma)\right]  ^{f\left(  \theta,\gamma\right)  /2}%
U\sigma^{-f\left(  \theta,\gamma\right)  /2}\rho^{\left(  1+f\left(
\theta,\gamma\right)  \right)  /2}\right\Vert _{2}\\
&  =\left\Vert U\rho^{1/2}\right\Vert _{2}=\left\Vert \rho^{1/2}\right\Vert
_{2}=1.
\end{align}
We then find that%
\begin{align}
M_{0}  &  =\sup_{t\in\mathbb{R}}\left\Vert G\left(  it\right)  \right\Vert
_{2}\\
&  =\sup_{t\in\mathbb{R}}\left\Vert \left[  \mathcal{N}(\rho)\right]
^{-\left(  1-2it\right)  \left(  \gamma-1\right)  /2}\left[  \mathcal{N}%
(\sigma)\right]  ^{\left(  1-2it\right)  \left(  \gamma-1\right)  /2}%
U\sigma^{-\left(  1-2it\right)  \left(  \gamma-1\right)  /2}\rho^{\left(
1+\left(  1-2it\right)  \left(  \gamma-1\right)  \right)  /2}\right\Vert
_{2}\\
&  \leq\max_{V_{\mathcal{N}(\sigma)},V_{\sigma}}\left\Vert \left[
\mathcal{N}(\rho)\right]  ^{\left(  1-\gamma\right)  /2}V_{\mathcal{N}%
(\sigma)}\left[  \mathcal{N}(\sigma)\right]  ^{\left(  \gamma-1\right)
/2}U\sigma^{\left(  1-\gamma\right)  /2}V_{\sigma}\rho^{\gamma/2}\right\Vert
_{2}\\
&  =\exp\left\{  \frac{\gamma-1}{2}\Delta_{\gamma}^{\prime}(\rho
,\sigma,\mathcal{N})\right\}  .
\end{align}
Consider that%
\begin{equation}
f\left(  1+it,\gamma\right)  =\left(  1-2\left(  1+it\right)  \right)  \left(
\gamma-1\right)  =-\left(  1+2it\right)  \left(  \gamma-1\right)  =\left(
1+2it\right)  \left(  \alpha-1\right)  .
\end{equation}
Thus, similarly, we have%
\begin{align}
M_{1}  &  =\sup_{t\in\mathbb{R}}\left\Vert G\left(  1+it\right)  \right\Vert
_{2}\\
&  =\sup_{t\in\mathbb{R}}\left\Vert \left[  \mathcal{N}(\rho)\right]
^{-\left(  1+2it\right)  \left(  \alpha-1\right)  /2}\left[  \mathcal{N}%
(\sigma)\right]  ^{\left(  1+2it\right)  \left(  \alpha-1\right)  /2}%
U\sigma^{-\left(  1+2it\right)  \left(  \alpha-1\right)  /2}\rho^{\left(
1+\left(  1+2it\right)  \left(  \alpha-1\right)  \right)  /2}\right\Vert
_{2}\\
&  \leq\max_{V_{\mathcal{N}(\sigma)},V_{\sigma}}\left\Vert \left[
\mathcal{N}(\rho)\right]  ^{\left(  1-\alpha\right)  /2}V_{\mathcal{N}%
(\sigma)}\left[  \mathcal{N}(\sigma)\right]  ^{\left(  \alpha-1\right)
/2}U\sigma^{\left(  1-\alpha\right)  /2}V_{\sigma}\rho^{\alpha/2}\right\Vert
_{2}\\
&  =\exp\left\{  \frac{\alpha-1}{2}\Delta_{\alpha}^{\prime}(\rho
,\sigma,\mathcal{N})\right\}  .
\end{align}
Applying Theorem~\ref{thm:hadamard} gives%
\begin{align}
1  &  \leq\exp\left\{  \frac{\gamma-1}{4}\Delta_{\gamma}^{\prime}(\rho
,\sigma,\mathcal{N})\right\}  \exp\left\{  \frac{\alpha-1}{4}\Delta_{\alpha
}^{\prime}(\rho,\sigma,\mathcal{N})\right\} \\
&  =\exp\left\{  \frac{\gamma-1}{4}\Delta_{\gamma}^{\prime}(\rho
,\sigma,\mathcal{N})\right\}  \exp\left\{  \frac{-\left(  \gamma-1\right)
}{4}\Delta_{\alpha}^{\prime}(\rho,\sigma,\mathcal{N})\right\}  ,
\end{align}
which implies (\ref{eq:Delta-monotone}) for $\alpha\in\lbrack0,1)$ and
$\gamma=2-\alpha$. Putting the three cases together along with
Proposition~\ref{prop:lim-a-1}\ gives the inequality in
(\ref{eq:Delta-monotone}) for $0\leq\alpha\leq\gamma\leq2$, $\alpha\neq1$, and
$\gamma\neq1$.
\end{proof}

\bigskip

\begin{theorem}
\label{thm:monotone-tilde}Let $\rho$, $\sigma$, and $\mathcal{N}$ be as given
in Definition~\ref{def:rho-sig-N}. The swiveled R\'{e}nyi quantity
$\widetilde{\Delta}_{\alpha}^{\prime}(\rho,\sigma,\mathcal{N})$ is monotone
non-decreasing with respect to $\alpha\in\lbrack1/2,1)\cup(1,\infty]$, in the
sense that for $1/2\leq\alpha\leq\gamma\leq\infty$, $\alpha\neq1$, and
$\gamma\neq1$%
\begin{equation}
\widetilde{\Delta}_{\alpha}^{\prime}(\rho,\sigma,\mathcal{N})\leq
\widetilde{\Delta}_{\gamma}^{\prime}(\rho,\sigma,\mathcal{N}).
\label{eq:Delta-tilde-monotone}%
\end{equation}

\end{theorem}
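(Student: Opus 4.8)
The plan is to mirror the structure of the proof of Theorem~\ref{thm:monotone} almost verbatim, replacing the Schatten $2$-norm by the $2\alpha$-norm and using the Hadamard three-line theorem (Theorem~\ref{thm:hadamard}) with $p_\theta$ interpolated between two distinct exponents rather than held fixed at $2$. The key observation is that $\widetilde\Delta_\alpha^\prime$ is defined in terms of $\|\cdot\|_{2\alpha}$ and the exponent prefactor $\alpha^\prime=(\alpha-1)/\alpha$ in \eqref{eq:Delta-tilde-new}, so when setting up the holomorphic family $G(z)$ I will need to carefully track how these $\alpha$-dependent powers transform under the interpolation, which is the main complication relative to the cleaner fixed-$p$ situation of Theorem~\ref{thm:monotone}.

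First I would treat the principal case $1<\alpha<\gamma\leq\infty$. For fixed swivels $W_{\mathcal{N}(\sigma)}\in\mathbb{V}_{\mathcal{N}(\sigma)}$ and $W_\sigma\in\mathbb{V}_\sigma$, I would define a holomorphic $G:S\to L(\mathcal{H})$ by interpolating the exponent $\alpha^\prime$ across the strip, choosing $G(z)$ so that at the boundary point $z=0$ (after taking the supremum over $t$) the operator reduces to $\rho^{1/2}$, giving $M_0=\|\rho^{1/2}\|_{p_0}=1$ with $p_0=2$, while at $z=1$ it reproduces the chain defining $\widetilde\Delta_\gamma^\prime$, giving $M_1\leq\exp\{\tfrac{1}{2}\alpha_\gamma^\prime\,\widetilde\Delta_\gamma^\prime\}$ evaluated at the $\gamma$-norm $p_1=2\gamma$. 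Then I would set $\theta$ so that the intermediate exponent matches $\alpha^\prime$ and so that the relation \eqref{eq:p-relation}, $1/p_\theta=(1-\theta)/2+\theta/(2\gamma)$, yields precisely $p_\theta=2\alpha$. This last matching constraint is the delicate bookkeeping step: one must verify that a single choice of $\theta\in(0,1)$ simultaneously makes the exponent equal $\alpha^\prime$ \emph{and} makes $p_\theta=2\alpha$; I expect this to force $\theta$ to be a specific rational function of $\alpha,\gamma$, and checking consistency of these two demands is where the argument is least routine.

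Having matched the parameters, applying Theorem~\ref{thm:hadamard} gives $\|G(\theta)\|_{2\alpha}\leq M_0^{1-\theta}M_1^\theta$; since $M_0=1$ and $\|G(\theta)\|_{2\alpha}$ is exactly the norm appearing in the definition of $\widetilde\Delta_\alpha^\prime$ for the swivels $W$, I would rewrite this as $\exp\{\tfrac{1}{2}\alpha^\prime\widetilde\Delta_\alpha^\prime(\text{for }W)\}\leq\exp\{\tfrac{\theta}{2}\alpha_\gamma^\prime\widetilde\Delta_\gamma^\prime\}$, and then maximize the left side over $W_{\mathcal{N}(\sigma)}$ and $W_\sigma$ to recover the full quantity $\widetilde\Delta_\alpha^\prime$ via \eqref{eq:Delta-tilde-new}. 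Taking logarithms and simplifying the resulting coefficient (using the identity relating $\theta$, $\alpha^\prime$, and $\alpha_\gamma^\prime$) should collapse to \eqref{eq:Delta-tilde-monotone}. The sign of $\alpha^\prime=(\alpha-1)/\alpha$, which is positive for $\alpha>1$, ensures the logarithm preserves the inequality direction here.

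For the remaining ranges I would proceed exactly as in Theorem~\ref{thm:monotone}. For $1/2\leq\alpha<\gamma<1$ I would swap the roles of $\alpha$ and $\gamma$ in the construction of $G$, apply the same interpolation, and then take a \emph{negative} logarithm (since now $\alpha^\prime<0$) to flip the inequality into the correct orientation. The boundary crossing at $\alpha=1$ is then handled by the limits established in Proposition~\ref{prop:lim-a-1}, which show $\lim_{\alpha\nearrow1}\widetilde\Delta_\alpha^\prime\leq\lim_{\alpha\searrow1}\widetilde\Delta_\alpha^\prime$, stitching the two open intervals together. Unlike Theorem~\ref{thm:monotone}, the range here extends to $\alpha=\infty$ (the max-relative-entropy regime), so I would need to confirm the $M_1$ estimate and the Hadamard hypotheses remain valid in the limit $p_1=2\gamma\to\infty$, or else argue the $\gamma=\infty$ case separately by continuity; I anticipate this unboundedness of the interpolation exponent to be the secondary obstacle after the parameter-matching step.
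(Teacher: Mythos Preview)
Your plan for the two ``same-side'' ranges matches the paper exactly. In particular, your worry about whether a single $\theta$ can simultaneously make the exponent equal to $\alpha'$ and force $p_\theta=2\alpha$ is resolved by the choice $\theta=\alpha'/\gamma'$: one checks directly that with $p_0=2$, $p_1=2\gamma$ the relation \eqref{eq:p-relation} then gives $1/p_\theta=1/2-\theta(\gamma-1)/(2\gamma)=1/(2\alpha)$, so no tension arises. Your concern about $\gamma=\infty$ is also harmless, since Theorem~\ref{thm:hadamard} explicitly allows $p_1=\infty$.

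The substantive difference from the paper is how you bridge across $\alpha=1$. You propose to rely solely on Proposition~\ref{prop:lim-a-1} to stitch $[1/2,1)$ and $(1,\infty]$ together via $\lim_{\alpha\nearrow1}\widetilde\Delta_\alpha'\leq\lim_{\alpha\searrow1}\widetilde\Delta_\alpha'$. The paper instead supplies a \emph{third} Hadamard interpolation: for $\alpha\in[1/2,1)$ and its dual $\gamma$ defined by $1/\alpha+1/\gamma=2$ (so that $\alpha'=-\gamma'$), one takes $G(z)$ with exponent $(1-2z)\alpha'/2$, sets $p_0=2\alpha$, $p_1=2\gamma$, $\theta=1/2$, $p_\theta=2$, and obtains $\|G(1/2)\|_2=\|\rho^{1/2}\|_2=1$, which directly yields $\widetilde\Delta_\alpha'\leq\widetilde\Delta_\gamma'$ across $\alpha=1$. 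Combining this dual-pair inequality with the two same-side cases already gives full monotonicity on $[1/2,1)\cup(1,\infty]$.

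Your route through Proposition~\ref{prop:lim-a-1} is logically sound where that proposition applies, but note that it is stated only for $\rho,\sigma,\mathcal{N}$ as in Definition~\ref{def:rho-sig-N-PD} (positive definite), whereas the theorem is asserted for the general Definition~\ref{def:rho-sig-N}. So if you omit the third interpolation case, you incur a generality gap that the paper's direct argument avoids.
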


\begin{proof}
We handle the inequality in (\ref{eq:Delta-tilde-monotone}) in a similar way
as in the previous proof. First, suppose that $1<\alpha<\gamma$. Let
$\alpha^{\prime}=\left(  \alpha-1\right)  /\alpha$ and $\gamma^{\prime
}=\left(  \gamma-1\right)  /\gamma$, and note that $\alpha^{\prime}%
,\gamma^{\prime}>0$ for the choices given. For some $W_{\mathcal{N}(\sigma
)}\in\mathbb{V}_{\mathcal{N}(\sigma)}$ and $W_{\sigma}\in\mathbb{V}_{\sigma}$,
pick%
\begin{align}
G\left(  z\right)   &  =\left[  \mathcal{N}(\rho)\right]  ^{-z\gamma^{\prime
}/2}W_{\mathcal{N}(\sigma)}\left[  \mathcal{N}(\sigma)\right]  ^{z\gamma
^{\prime}/2}U\sigma^{-z\gamma^{\prime}/2}W_{\sigma}\rho^{1/2}%
,\label{eq:other-G}\\
p_{0}  &  =2,\\
p_{1}  &  =2\gamma,\\
\theta &  =\frac{\alpha^{\prime}}{\gamma^{\prime}}\in\left(  0,1\right)  ,
\label{eq:other-theta}%
\end{align}
which fixes $p_{\theta}=2\alpha$. Then we find the following expression for
$M_{0}$%
\begin{align}
M_{0}  &  =\sup_{t\in\mathbb{R}}\left\Vert G\left(  it\right)  \right\Vert
_{2}\label{eq:other-chain-1}\\
&  =\sup_{t\in\mathbb{R}}\left\Vert \left[  \mathcal{N}(\rho)\right]
^{-it\gamma^{\prime}/2}W_{\mathcal{N}(\sigma)}\left[  \mathcal{N}%
(\sigma)\right]  ^{it\gamma^{\prime}/2}U\sigma^{-it\gamma^{\prime}/2}%
W_{\sigma}\rho^{1/2}\right\Vert _{2}\\
&  =\left\Vert \rho^{1/2}\right\Vert _{2}=1,
\end{align}
and the following ones for $M_{1}$ and $\left\Vert G\left(  \theta\right)
\right\Vert _{2\alpha}$:%
\begin{align}
M_{1}  &  =\sup_{t\in\mathbb{R}}\left\Vert G\left(  1+it\right)  \right\Vert
_{2\gamma}\\
&  =\sup_{t\in\mathbb{R}}\left\Vert \left[  \mathcal{N}(\rho)\right]
^{-\left(  1+it\right)  \gamma^{\prime}/2}W_{\mathcal{N}(\sigma)}\left[
\mathcal{N}(\sigma)\right]  ^{\left(  1+it\right)  \gamma^{\prime}/2}%
U\sigma^{-\left(  1+it\right)  \gamma^{\prime}/2}W_{\sigma}\rho^{1/2}%
\right\Vert _{2\gamma}\\
&  \leq\max_{V_{\mathcal{N}(\sigma)},V_{\sigma}}\left\Vert \left[
\mathcal{N}(\rho)\right]  ^{-\gamma^{\prime}/2}V_{\mathcal{N}(\sigma)}\left[
\mathcal{N}(\sigma)\right]  ^{\gamma^{\prime}/2}U\sigma^{-\gamma^{\prime}%
/2}V_{\sigma}\rho^{1/2}\right\Vert _{2\gamma}\\
&  =\exp\left\{  \frac{\gamma^{\prime}}{2}\widetilde{\Delta}_{\gamma}^{\prime
}(\rho,\sigma,\mathcal{N})\right\}  ,\\
\left\Vert G\left(  \theta\right)  \right\Vert _{2\alpha}  &  =\left\Vert
\left[  \mathcal{N}(\rho)\right]  ^{-\alpha^{\prime}/2}W_{\mathcal{N}(\sigma
)}\left[  \mathcal{N}(\sigma)\right]  ^{\alpha^{\prime}/2}U\sigma
^{-\alpha^{\prime}/2}W_{\sigma}\rho^{1/2}\right\Vert _{2\alpha}.
\end{align}
Applying Theorem~\ref{thm:hadamard}, we find that the following inequality
holds for all $W_{\mathcal{N}(\sigma)}\in\mathbb{V}_{\mathcal{N}(\sigma)}$ and
$W_{\sigma}\in\mathbb{V}_{\sigma}$:%
\begin{equation}
\left\Vert \left[  \mathcal{N}(\rho)\right]  ^{-\alpha^{\prime}/2}%
W_{\mathcal{N}(\sigma)}\left[  \mathcal{N}(\sigma)\right]  ^{\alpha^{\prime
}/2}U\sigma^{-\alpha^{\prime}/2}W_{\sigma}\rho^{1/2}\right\Vert _{2\alpha}%
\leq\left[  \exp\left\{  \frac{\gamma^{\prime}}{2}\widetilde{\Delta}_{\gamma
}^{\prime}(\rho,\sigma,\mathcal{N})\right\}  \right]  ^{\frac{\alpha^{\prime}%
}{\gamma^{\prime}}}. \label{eq:other-chain-last}%
\end{equation}
We can then take a maximum over all $W_{\mathcal{N}(\sigma)}\in\mathbb{V}%
_{\mathcal{N}(\sigma)}$ and $W_{\sigma}\in\mathbb{V}_{\sigma}$ and apply the
definition in (\ref{eq:Delta-tilde-new}) to establish that%
\begin{equation}
\exp\left\{  \frac{\alpha^{\prime}}{2}\widetilde{\Delta}_{\alpha}^{\prime
}(\rho,\sigma,\mathcal{N})\right\}  \leq\left[  \exp\left\{  \frac
{\gamma^{\prime}}{2}\widetilde{\Delta}_{\gamma}^{\prime}(\rho,\sigma
,\mathcal{N})\right\}  \right]  ^{\frac{\alpha^{\prime}}{\gamma^{\prime}}}.
\end{equation}
The inequality in (\ref{eq:Delta-tilde-monotone}) then follows for
$1<\alpha<\gamma$ after taking a logarithm.

To get the monotonicity for the range $1/2\leq\alpha<\gamma<1$, we exchange
$\alpha$ and $\gamma$ in (\ref{eq:other-G})-(\ref{eq:other-theta}) and apply
the same reasoning as in (\ref{eq:other-chain-1})-(\ref{eq:other-chain-last})
to arrive at the following inequality:%
\begin{equation}
\exp\left\{  \frac{\gamma^{\prime}}{2}\widetilde{\Delta}_{\gamma}^{\prime
}(\rho,\sigma,\mathcal{N})\right\}  \leq\left[  \exp\left\{  \frac
{\alpha^{\prime}}{2}\widetilde{\Delta}_{\alpha}^{\prime}(\rho,\sigma
,\mathcal{N})\right\}  \right]  ^{\frac{\gamma\prime}{\alpha^{\prime}}}.
\end{equation}
Taking a negative logarithm and noting that $1/2\leq\alpha<\gamma<1$, so that
$\alpha^{\prime},\gamma^{\prime}\in\lbrack-1,0)$, then gives
(\ref{eq:Delta-tilde-monotone}) for this range.

We are now left with proving the case $\alpha\in\lbrack1/2,1)$ and $\gamma
\in(1,\infty]$ the dual parameter of $\alpha$: such that $1/\alpha+1/\gamma
=2$. Notice that $\alpha^{\prime}=-\gamma^{\prime}$ and we have that
$\gamma^{\prime}>0$. We pick%
\begin{align}
G\left(  z\right)   &  =\left[  \mathcal{N}(\rho)\right]  ^{-\left(
1-2z\right)  \alpha^{\prime}/2}\left[  \mathcal{N}(\sigma)\right]  ^{\left(
1-2z\right)  \alpha^{\prime}/2}U\sigma^{-\left(  1-2z\right)  \alpha^{\prime
}/2}\rho^{1/2},\\
p_{0}  &  =2\alpha,\\
p_{1}  &  =2\gamma,\\
\theta &  =1/2,
\end{align}
so that $p_{\theta}=2$. Consider that%
\begin{align}
\left\Vert G\left(  \theta\right)  \right\Vert _{2}  &  =\left\Vert \left[
\mathcal{N}(\rho)\right]  ^{-\left(  1-2\theta\right)  \alpha^{\prime}%
/2}\left[  \mathcal{N}(\sigma)\right]  ^{\left(  1-2\theta\right)
\alpha^{\prime}/2}U\sigma^{-\left(  1-2\theta\right)  \alpha^{\prime}/2}%
\rho^{1/2}\right\Vert _{2}\\
&  =\left\Vert U\rho^{1/2}\right\Vert _{2}=\left\Vert \rho^{1/2}\right\Vert
_{2}=1.
\end{align}
We then find that%
\begin{align}
M_{0}  &  =\sup_{t\in\mathbb{R}}\left\Vert G\left(  it\right)  \right\Vert
_{2\alpha}\\
&  =\sup_{t\in\mathbb{R}}\left\Vert \left[  \mathcal{N}(\rho)\right]
^{-\left(  1-2it\right)  \alpha^{\prime}/2}\left[  \mathcal{N}(\sigma)\right]
^{\left(  1-2it\right)  \alpha^{\prime}/2}U\sigma^{-\left(  1-2it\right)
\alpha^{\prime}/2}\rho^{1/2}\right\Vert _{2\alpha}\\
&  \leq\max_{V_{\mathcal{N}(\sigma)},V_{\sigma}}\left\Vert \left[
\mathcal{N}(\rho)\right]  ^{-\alpha^{\prime}/2}V_{\mathcal{N}(\sigma)}\left[
\mathcal{N}(\sigma)\right]  ^{\alpha^{\prime}/2}U\sigma^{-\alpha^{\prime}%
/2}V_{\sigma}\rho^{1/2}\right\Vert _{2\alpha}\\
&  =\exp\left\{  \frac{\alpha^{\prime}}{2}\widetilde{\Delta}_{\alpha}^{\prime
}(\rho,\sigma,\mathcal{N})\right\}  .
\end{align}
Consider that%
\begin{equation}
\left(  1-2\left(  1+it\right)  \right)  \alpha^{\prime}=-\left(
1+2it\right)  \alpha^{\prime}=\left(  1+2it\right)  \gamma^{\prime}.
\end{equation}
Thus, similarly, we have%
\begin{align}
M_{1}  &  =\sup_{t\in\mathbb{R}}\left\Vert G\left(  1+it\right)  \right\Vert
_{2\gamma}\\
&  =\sup_{t\in\mathbb{R}}\left\Vert \left[  \mathcal{N}(\rho)\right]
^{-\left(  1+2it\right)  \gamma^{\prime}/2}\left[  \mathcal{N}(\sigma)\right]
^{\left(  1+2it\right)  \gamma^{\prime}/2}U\sigma^{-\left(  1+2it\right)
\gamma^{\prime}/2}\rho^{1/2}\right\Vert _{2\gamma}\\
&  \leq\max_{V_{\mathcal{N}(\sigma)},V_{\sigma}}\left\Vert \left[
\mathcal{N}(\rho)\right]  ^{-\gamma^{\prime}/2}V_{\mathcal{N}(\sigma)}\left[
\mathcal{N}(\sigma)\right]  ^{\gamma^{\prime}/2}U\sigma^{-\gamma^{\prime}%
/2}V_{\sigma}\rho^{1/2}\right\Vert _{2\gamma}\\
&  =\exp\left\{  \frac{\gamma^{\prime}}{2}\widetilde{\Delta}_{\gamma}^{\prime
}(\rho,\sigma,\mathcal{N})\right\}  .
\end{align}
Applying Theorem~\ref{thm:hadamard} gives%
\begin{align}
1  &  \leq\exp\left\{  \frac{\alpha^{\prime}}{4}\widetilde{\Delta}_{\alpha
}^{\prime}(\rho,\sigma,\mathcal{N})\right\}  \exp\left\{  \frac{\gamma
^{\prime}}{4}\widetilde{\Delta}_{\gamma}^{\prime}(\rho,\sigma,\mathcal{N}%
)\right\} \\
&  =\exp\left\{  -\frac{\gamma^{\prime}}{4}\widetilde{\Delta}_{\alpha}%
^{\prime}(\rho,\sigma,\mathcal{N})\right\}  \exp\left\{  \frac{\gamma^{\prime
}}{4}\widetilde{\Delta}_{\gamma}^{\prime}(\rho,\sigma,\mathcal{N})\right\}  ,
\end{align}
which implies (\ref{eq:Delta-tilde-monotone}) for $\alpha\in\lbrack1/2,1)$ and
$1/\gamma=2-1/\alpha$. Putting the three cases together along with
Proposition~\ref{prop:lim-a-1}\ gives the inequality in
(\ref{eq:Delta-tilde-monotone}) for $1/2\leq\alpha\leq\gamma\leq\infty$,
$\alpha\neq1$, and $\gamma\neq1$.
\end{proof}

\subsection{Bounds for the quantum relative entropy difference}

A recent work \cite{W15}\ established refinements of the monotonicity of
quantum relative entropy, strong subadditivity, and other entropy
inequalities. In this section, we point out that these results follow as a
consequence of the properties of the swiveled R\'{e}nyi entropies and along
the way establish two new refinements of these entropy inequalities.

We begin with a brief background. Let $\mathcal{P}_{\sigma,\mathcal{N}}$
denote the Petz recovery map \cite{Petz1986,Petz1988} (see also~\cite{BK02}):%
\begin{equation}
\mathcal{P}_{\sigma,\mathcal{N}}(\cdot)\equiv\sigma^{1/2}\mathcal{N}^{\dag
}\left(  \left[  \mathcal{N}(\sigma)\right]  ^{-1/2}(\cdot)\left[
\mathcal{N}(\sigma)\right]  ^{-1/2}\right)  \sigma^{1/2}%
,\label{eq:Petz-channel-Rel-ent}%
\end{equation}
and let $\mathcal{R}_{\sigma,\mathcal{N}}^{V,W}$ denote the swiveled Petz
recovery map%
\begin{equation}
\mathcal{R}_{\sigma,\mathcal{N}}^{V,W}(\cdot)\equiv\left(  \mathcal{W}%
_{\sigma}\circ\mathcal{P}_{\sigma,\mathcal{N}}\circ\mathcal{V}_{\mathcal{N}%
(\sigma)}\right)  (\cdot),
\end{equation}
where the partial isometric map $\mathcal{V}_{\mathcal{N}(\sigma)}$ is defined
by%
\begin{equation}
\mathcal{V}_{\mathcal{N}(\sigma)}(\cdot)=V_{\mathcal{N}(\sigma)}%
(\cdot)V_{\mathcal{N}(\sigma)}^{\dag},\label{eq:unitaries}%
\end{equation}
and similarly for $\mathcal{W}_{\sigma}$, so that $\mathcal{V}_{\mathcal{N}%
(\sigma)}\left(  \mathcal{N}(\sigma)\right)  =\mathcal{N}(\sigma)$ and
$\mathcal{W}_{\sigma}(\sigma)=\sigma$. Observe then that%
\begin{equation}
\mathcal{R}_{\sigma,\mathcal{N}}^{V,W}\left(  \mathcal{N}(\sigma)\right)
=\sigma.
\end{equation}

Consider that particular values of $\alpha$ for $\Delta_{\alpha}^{\prime}%
(\rho,\sigma,\mathcal{N})$ and $\widetilde{\Delta}_{\alpha}^{\prime}%
(\rho,\sigma,\mathcal{N})$ lead to the following quantities, which can be
interpreted as a (pseudo-)distance from the state $\rho$ to the state
$\mathcal{N}(\rho)$ after a recovery channel $\mathcal{R}_{\sigma,\mathcal{N}%
}^{V,W}$ is applied:%
\begin{align}
\Delta_{0}^{\prime}(\rho,\sigma,\mathcal{N}) &  =\min_{V_{\mathcal{N}(\sigma
)},W_{\sigma}}D_{0}\left(  \rho\middle\Vert\mathcal{R}_{\sigma,\mathcal{N}%
}^{V,W}\left(  \mathcal{N}(\rho)\right)  \right)  ,\\
\widetilde{\Delta}_{1/2}^{\prime}(\rho,\sigma,\mathcal{N}) &  =-\log
\max_{V_{\mathcal{N}(\sigma)},W_{\sigma}}F\left(  \rho,\mathcal{R}%
_{\sigma,\mathcal{N}}^{V,W}\left(  \mathcal{N}\left(  \rho\right)  \right)
\right)  .\label{eq:fid-recover}%
\end{align}
These observations combined with the monotonicity from Theorems
\ref{thm:monotone}\ and \ref{thm:monotone-tilde}\ and the facts that
$D_{0}(\rho\Vert\mathcal{R}_{\sigma,\mathcal{N}}^{V,W}\left(  \mathcal{N}%
(\rho)\right)  )\geq0$ and $-\log\max_{V_{\mathcal{N}(\sigma)},W_{\sigma}%
}F(\rho,\mathcal{R}_{\sigma,\mathcal{N}}^{V,W}\left(  \mathcal{N}\left(
\rho\right)  \right)  )\geq0$\ allow us to conclude the following:

\begin{corollary}
Let $\rho$, $\sigma$, and $\mathcal{N}$ be as given in
Definition~\ref{def:rho-sig-N}. The swiveled R\'{e}nyi quantity $\Delta
_{\alpha}^{\prime}(\rho,\sigma,\mathcal{N})$ is non-negative for $\alpha
\in\lbrack0,1)\cup(1,2]$ and $\widetilde{\Delta}_{\alpha}^{\prime}(\rho
,\sigma,\mathcal{N})$ is non-negative for $\alpha\in\lbrack1/2,1)\cup
(1,\infty]$.
\end{corollary}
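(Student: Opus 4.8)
The plan is to propagate non-negativity from a single anchor value of $\alpha$ to the entire range, using the monotonicity results already established. For $\Delta_{\alpha}^{\prime}$ the natural anchor is $\alpha=0$ and for $\widetilde{\Delta}_{\alpha}^{\prime}$ it is $\alpha=1/2$, since at exactly these points the swiveled quantities were identified with recovery-type (pseudo-)distances, namely $\Delta_{0}^{\prime}(\rho,\sigma,\mathcal{N})=\min_{V_{\mathcal{N}(\sigma)},W_{\sigma}}D_{0}(\rho\Vert\mathcal{R}_{\sigma,\mathcal{N}}^{V,W}(\mathcal{N}(\rho)))$ and, via \eqref{eq:fid-recover}, $\widetilde{\Delta}_{1/2}^{\prime}(\rho,\sigma,\mathcal{N})=-\log\max_{V_{\mathcal{N}(\sigma)},W_{\sigma}}F(\rho,\mathcal{R}_{\sigma,\mathcal{N}}^{V,W}(\mathcal{N}(\rho)))$. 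First I would argue that each anchor is non-negative: since the recovered operator $\mathcal{R}_{\sigma,\mathcal{N}}^{V,W}(\mathcal{N}(\rho))$ is a (subnormalized) state, one has $\operatorname{Tr}\{\Pi_{\rho}\,\mathcal{R}_{\sigma,\mathcal{N}}^{V,W}(\mathcal{N}(\rho))\}\leq 1$, so each $D_{0}$ in the minimization is non-negative and hence $\Delta_{0}^{\prime}\geq 0$; likewise $F\leq 1$ forces $-\log\max_{V,W}F\geq 0$, giving $\widetilde{\Delta}_{1/2}^{\prime}\geq 0$.

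Second, I would feed these two facts into Theorems~\ref{thm:monotone} and \ref{thm:monotone-tilde}. Because Theorem~\ref{thm:monotone} asserts $\Delta_{\alpha}^{\prime}\leq\Delta_{\gamma}^{\prime}$ for all $0\leq\alpha\leq\gamma\leq 2$ with $\alpha,\gamma\neq 1$, taking the smaller parameter to be $0$ immediately yields $\Delta_{\alpha}^{\prime}(\rho,\sigma,\mathcal{N})\geq\Delta_{0}^{\prime}(\rho,\sigma,\mathcal{N})\geq 0$ for every $\alpha\in[0,1)\cup(1,2]$. Crucially, the monotonicity statement already bridges the discontinuity at $\alpha=1$ (it permits $0=\alpha<1<\gamma$), so there is no need to treat the two sub-intervals separately. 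An identical application of Theorem~\ref{thm:monotone-tilde} with the smaller parameter equal to $1/2$ gives $\widetilde{\Delta}_{\alpha}^{\prime}(\rho,\sigma,\mathcal{N})\geq\widetilde{\Delta}_{1/2}^{\prime}(\rho,\sigma,\mathcal{N})\geq 0$ for all $\alpha\in[1/2,1)\cup(1,\infty]$.

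Since the heavy lifting (monotonicity via the Hadamard three-line theorem) is already done in Theorems~\ref{thm:monotone} and \ref{thm:monotone-tilde}, I expect essentially no obstacle in the remaining argument; the only points requiring care are bookkeeping ones. Specifically, one must verify that $\mathcal{R}_{\sigma,\mathcal{N}}^{V,W}(\mathcal{N}(\rho))$ is genuinely (sub)normalized so that $D_{0}\geq 0$ and $-\log F\geq 0$, and confirm that the anchor parameters $0$ and $1/2$ indeed lie in the ranges of validity of the two monotonicity theorems. As a consistency check one can also compare against Proposition~\ref{prop:lim-a-1}: for $\alpha>1$ the same monotonicity forces $\Delta_{\alpha}^{\prime}\geq\lim_{\beta\searrow 1}\Delta_{\beta}^{\prime}=\max_{V,W}f(1,V,W)\geq f(1,I,I)=\Delta(\rho,\sigma,\mathcal{N})\geq 0$, the last inequality being monotonicity of quantum relative entropy \eqref{eq:mon-rel-ent}.
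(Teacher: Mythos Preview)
Your proposal is correct and follows essentially the same approach as the paper: establish non-negativity at the anchor values $\alpha=0$ (via $D_0\geq 0$) and $\alpha=1/2$ (via $-\log F\geq 0$), then invoke Theorems~\ref{thm:monotone} and~\ref{thm:monotone-tilde} to propagate non-negativity across the full ranges. The paper states exactly this reasoning in the sentence immediately preceding the corollary.
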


In order to establish the upper bounds in this section, we need to take $\rho
$, $\sigma$, and $\mathcal{N}$ as given in the following definition:

\begin{definition}
\label{def:rho-sig-N-2}Let $\rho_{SE^{\prime}}$ be a positive definite density
operator and let $\sigma_{SE^{\prime}}$ be a positive definite operator, each
acting on a finite-dimensional tensor-product Hilbert space $\mathcal{H}%
_{S}\otimes\mathcal{H}_{E^{\prime}}$. Let $\mathcal{N}$ be a quantum channel
given as follows:%
\begin{equation}
\mathcal{N}\left(  \theta_{SE^{\prime}}\right)  =\operatorname{Tr}_{E}\left\{
U_{SE^{\prime}\rightarrow BE}\theta_{SE^{\prime}}U_{SE^{\prime}\rightarrow
BE}^{\dag}\right\}  ,
\end{equation}
where $U_{SE^{\prime}\rightarrow BE}$ is a unitary operator taking
$\mathcal{H}_{S}\otimes\mathcal{H}_{E^{\prime}}$ to an isomorphic
finite-dimensional tensor-product Hilbert space $\mathcal{H}_{B}%
\otimes\mathcal{H}_{E}$, such that $\mathcal{N}( \rho) $ and $\mathcal{N}(
\sigma) $ are each positive definite and act on~$\mathcal{H}_{B}$.
\end{definition}

If $\rho$, $\sigma$, and $\mathcal{N}$ are taken as in
Definition~\ref{def:rho-sig-N-2}, then the following relations hold%
\begin{align}
\Delta_{2}^{\prime}( \rho,\sigma,\mathcal{N})  &  =\max_{V_{\mathcal{N}(
\sigma) },W_{\sigma}}D_{2}\left(  \rho\middle\Vert\mathcal{R}_{\sigma
,\mathcal{N}}^{V,W}\left(  \mathcal{N}( \rho) \right)  \right)  ,\\
\widetilde{\Delta}_{\infty}^{\prime}( \rho,\sigma,\mathcal{N})  &
=\max_{V_{\mathcal{N}( \sigma) },W_{\sigma}}D_{\max}\left(  \rho
\middle\Vert\mathcal{R}_{\sigma,\mathcal{N}}^{V,W}\left(  \mathcal{N}( \rho)
\right)  \right)  . \label{eq:dmax-recover}%
\end{align}

The main contribution of the recent work \cite{W15}\ was to show that the
relative entropy difference $\Delta(\rho,\sigma,\mathcal{N})$ in
(\ref{eq:rel-ent-diff}) can be bounded from below by (\ref{eq:fid-recover}).
In the case that $\rho$, $\sigma$, and $\mathcal{N}$ are taken as in
Definition~\ref{def:rho-sig-N-2}, then $\Delta(\rho,\sigma,\mathcal{N})$ can
be bounded from above by (\ref{eq:dmax-recover}). We find here that these
results are an immediate corollary of Proposition~\ref{prop:lim-a-1}\ and
Theorem~\ref{thm:monotone},\ and we also obtain two new bounds on $\Delta
(\rho,\sigma,\mathcal{N})$\ in terms of $\Delta_{0}(\rho,\sigma,\mathcal{N})$
and $\Delta_{2}(\rho,\sigma,\mathcal{N})$:

\begin{corollary}
\label{cor:recover-statement}Let $\rho$, $\sigma$, and $\mathcal{N}$ be as
given in Definition~\ref{def:rho-sig-N} and such that $\operatorname{supp}%
(\rho)\subseteq\operatorname{supp}(\sigma)$. Then the following inequalities
hold%
\begin{align}
-\log\max_{V_{\mathcal{N}(\sigma)},W_{\sigma}}F\left(  \rho,\mathcal{R}%
_{\sigma,\mathcal{N}}^{V,W}\left(  \mathcal{N}\left(  \rho\right)  \right)
\right)   &  \leq D(\rho\Vert\sigma)-D\left(  \mathcal{N}(\rho)\Vert
\mathcal{N}(\sigma)\right)  ,\\
\min_{V_{\mathcal{N}(\sigma)},W_{\sigma}}D_{0}\left(  \rho\middle\Vert
\mathcal{R}_{\sigma,\mathcal{N}}^{V,W}\left(  \mathcal{N}(\rho)\right)
\right)   &  \leq D(\rho\Vert\sigma)-D\left(  \mathcal{N}(\rho)\Vert
\mathcal{N}(\sigma)\right)  . \label{eq:D_0_V_bnd}%
\end{align}
If $\rho$, $\sigma$, and $\mathcal{N}$ are as given in
Definition~\ref{def:rho-sig-N-2}, then the following inequalities hold%
\begin{align}
D(\rho\Vert\sigma)-D\left(  \mathcal{N}(\rho)\Vert\mathcal{N}(\sigma)\right)
&  \leq\max_{V_{\mathcal{N}(\sigma)},W_{\sigma}}D_{\max}\left(  \rho
\middle\Vert\mathcal{R}_{\sigma,\mathcal{N}}^{V,W}\left(  \mathcal{N}%
(\rho)\right)  \right)  ,\\
D(\rho\Vert\sigma)-D\left(  \mathcal{N}(\rho)\Vert\mathcal{N}(\sigma)\right)
&  \leq\max_{V_{\mathcal{N}(\sigma)},W_{\sigma}}D_{2}\left(  \rho
\middle\Vert\mathcal{R}_{\sigma,\mathcal{N}}^{V,W}\left(  \mathcal{N}%
(\rho)\right)  \right)  .
\end{align}

\end{corollary}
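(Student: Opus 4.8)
The plan is to obtain all four inequalities by combining the special-value identities recorded just before the statement with the monotonicity Theorems~\ref{thm:monotone} and~\ref{thm:monotone-tilde} and the one-sided limits at $\alpha=1$ from Proposition~\ref{prop:lim-a-1}. Writing $V=V_{\mathcal{N}(\sigma)}$ and $W=W_{\sigma}$ for the two swivels, the backbone is the sandwich
\[\min_{V,W} f(1,V,W)\leq f(1,I,I)=\Delta(\rho,\sigma,\mathcal{N})\leq\max_{V,W} f(1,V,W),\]
in which Proposition~\ref{prop:lim-a-1} also identifies the left-hand minimum with $\lim_{\alpha\nearrow1}\Delta_{\alpha}^{\prime}=\lim_{\alpha\nearrow1}\widetilde{\Delta}_{\alpha}^{\prime}$ and the right-hand maximum with $\lim_{\alpha\searrow1}\Delta_{\alpha}^{\prime}=\lim_{\alpha\searrow1}\widetilde{\Delta}_{\alpha}^{\prime}$. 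The two lower bounds then come from approaching $\alpha=1$ from below and the two upper bounds from approaching it from above.

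For the lower bounds I would first apply Theorem~\ref{thm:monotone-tilde}: monotonicity of $\widetilde{\Delta}_{\alpha}^{\prime}$ on $[1/2,1)$ gives $\widetilde{\Delta}_{1/2}^{\prime}(\rho,\sigma,\mathcal{N})\leq\widetilde{\Delta}_{\alpha}^{\prime}(\rho,\sigma,\mathcal{N})$ for all $\alpha\in[1/2,1)$, and letting $\alpha\nearrow1$ together with the sandwich above yields $\widetilde{\Delta}_{1/2}^{\prime}\leq\min_{V,W}f(1,V,W)\leq\Delta(\rho,\sigma,\mathcal{N})$. Substituting the identity~\eqref{eq:fid-recover} for $\widetilde{\Delta}_{1/2}^{\prime}$ produces the fidelity bound. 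The identical argument with Theorem~\ref{thm:monotone} (monotonicity of $\Delta_{\alpha}^{\prime}$ on $[0,1)$) and the identity $\Delta_{0}^{\prime}(\rho,\sigma,\mathcal{N})=\min_{V,W}D_{0}(\rho\Vert\mathcal{R}_{\sigma,\mathcal{N}}^{V,W}(\mathcal{N}(\rho)))$ gives the $D_{0}$ bound.

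For the upper bounds I would run the same argument on the other side of $\alpha=1$. Monotonicity of $\widetilde{\Delta}_{\alpha}^{\prime}$ on $(1,\infty]$ gives $\widetilde{\Delta}_{\alpha}^{\prime}\leq\widetilde{\Delta}_{\infty}^{\prime}$ for all $\alpha\in(1,\infty]$, so letting $\alpha\searrow1$ and using the sandwich produces $\Delta(\rho,\sigma,\mathcal{N})\leq\max_{V,W}f(1,V,W)\leq\widetilde{\Delta}_{\infty}^{\prime}(\rho,\sigma,\mathcal{N})$; the identity~\eqref{eq:dmax-recover} then rewrites the right-hand side as $\max_{V,W}D_{\max}(\rho\Vert\mathcal{R}_{\sigma,\mathcal{N}}^{V,W}(\mathcal{N}(\rho)))$. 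The analogous step with Theorem~\ref{thm:monotone} and the identity $\Delta_{2}^{\prime}(\rho,\sigma,\mathcal{N})=\max_{V,W}D_{2}(\rho\Vert\mathcal{R}_{\sigma,\mathcal{N}}^{V,W}(\mathcal{N}(\rho)))$ gives the $D_{2}$ bound. Since positive definiteness of $\rho$, $\sigma$, $\mathcal{N}(\rho)$, and $\mathcal{N}(\sigma)$ is built into Definition~\ref{def:rho-sig-N-2}, Proposition~\ref{prop:lim-a-1} applies here without modification.

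The main obstacle is a hypothesis mismatch confined to the lower-bound half: Proposition~\ref{prop:lim-a-1} is proved under the positive-definiteness assumptions of Definition~\ref{def:rho-sig-N-PD}, whereas the lower bounds are asserted only under Definition~\ref{def:rho-sig-N} with $\operatorname{supp}(\rho)\subseteq\operatorname{supp}(\sigma)$, where $\sigma$ and $\mathcal{N}(\sigma)$ may be rank-deficient. I would close this gap as already flagged in the text, extending the limit $\lim_{\alpha\nearrow1}\widetilde{\Delta}_{\alpha}^{\prime}=\min_{V,W}f(1,V,W)$ (and its $\Delta_{\alpha}^{\prime}$ analogue) to this more general setting via the convergence results of~\cite{W15} and Appendix~\ref{app:Delta-limit-alpha-1}; alternatively, one can perturb $\sigma\mapsto\sigma+\varepsilon I$, invoke the positive-definite version, and send $\varepsilon\to0$, using continuity of the finitely many operator functions involved and of the maxima and minima over the compact swivel sets. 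Everything else is direct substitution and needs no new estimate.
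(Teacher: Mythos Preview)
Your proposal is correct and follows exactly the route the paper indicates: the corollary is stated there as ``an immediate corollary of Proposition~\ref{prop:lim-a-1} and Theorem~\ref{thm:monotone}'' (together with Theorem~\ref{thm:monotone-tilde}), combined with the special-value identities \eqref{eq:fid-recover}, \eqref{eq:dmax-recover}, and the $\Delta_{0}^{\prime}$, $\Delta_{2}^{\prime}$ identities.

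One small simplification for the hypothesis mismatch you flag: you do not actually need to extend Proposition~\ref{prop:lim-a-1} itself to the rank-deficient case. For $\alpha<1$ the prefactor $2/(\alpha-1)$ (resp.\ $2/\alpha'$) is negative, so the outer $\max$ in the definition of $\Delta_{\alpha}^{\prime}$ and $\widetilde{\Delta}_{\alpha}^{\prime}$ becomes a minimum of $f(\alpha,V,W)$ over the swivels; in particular $\Delta_{\alpha}^{\prime}\leq f(\alpha,I,I)=\Delta_{\alpha}$ and $\widetilde{\Delta}_{\alpha}^{\prime}\leq\widetilde{\Delta}_{\alpha}$. Then monotonicity gives $\Delta_{0}^{\prime}\leq\Delta_{\alpha}$ and $\widetilde{\Delta}_{1/2}^{\prime}\leq\widetilde{\Delta}_{\alpha}$ for all $\alpha\in(0,1)$, and letting $\alpha\nearrow1$ via the \emph{non-swiveled} limits $\Delta_{\alpha}\to\Delta$ and $\widetilde{\Delta}_{\alpha}\to\Delta$ (which the paper records as valid under Definition~\ref{def:rho-sig-N} plus $\operatorname{supp}(\rho)\subseteq\operatorname{supp}(\sigma)$, by Appendix~\ref{app:Delta-limit-alpha-1} and \cite{W15}) yields the two lower bounds directly. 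This sidesteps the positive-definiteness requirement of Proposition~\ref{prop:lim-a-1} without any perturbation argument.
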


As discussed in \cite{W15} (see also \cite{BLW14,SBW14}),
Corollary~\ref{cor:recover-statement}\ can be viewed as providing a physically
meaningful refinement of the monotonicity of quantum relative entropy in
(\ref{eq:mon-rel-ent}). The bound%
\begin{equation}
-\log\max_{V_{\mathcal{N}(\sigma)},W_{\sigma}}F\left(  \rho,\mathcal{R}%
_{\sigma,\mathcal{N}}^{V,W}\left(  \mathcal{N}\left(  \rho\right)  \right)
\right)  \leq D(\rho\Vert\sigma)-D\left(  \mathcal{N}(\rho)\Vert
\mathcal{N}(\sigma)\right)
\end{equation}
shows that if the decrease in relative entropy is small after the channel
$\mathcal{N}$ acts, then it is possible to perform the recovery map
$\mathcal{R}_{\sigma,\mathcal{N}}^{V,W}$ such that $\sigma$ is recovered
perfectly from $\mathcal{N}(\sigma)$, while the recovery of $\rho$ from
$\mathcal{N}(\rho)$ has a performance limited by the bound above. This result
has far reaching implications in quantum information theory as discussed in
\cite{W15} (see also \cite{BLW14,SBW14}).

We mention here that it is also possible to obtain bounds of the form from
\cite{W15}, with a single \textquotedblleft time\textquotedblright\ variable
$t\in\mathbb{R}$. The method of proof is similar to that for Theorem~4 in
\cite{W15}, so we give it in Appendix~\ref{app:auxiliary}. The formal
statement is as follows:

\begin{theorem}
\label{thm:rel-ent-other}Let $\rho$, $\sigma$, and $\mathcal{N}$ be as given
in Definition~\ref{def:rho-sig-N} and such that $\operatorname{supp}%
(\rho)\subseteq\operatorname{supp}(\sigma)$. Then the following inequalities
hold%
\begin{equation}
\inf_{t\in\mathbb{R}}D_{0}\left(  \rho\middle\Vert\mathcal{R}_{\sigma
,\mathcal{N}}^{t}\left(  \mathcal{N(}\rho)\right)  \right)  \leq D(\rho
\Vert\sigma)-D\left(  \mathcal{N(}\rho)\Vert\mathcal{N(}\sigma)\right)
,\label{eq:rel-ent-ineq}%
\end{equation}
where $\mathcal{R}_{\sigma,\mathcal{N}}^{t}$ is the following rotated Petz
recovery map:%
\begin{equation}
\mathcal{R}_{\sigma,\mathcal{N}}^{t}(\cdot)\equiv\left(  \mathcal{U}%
_{\sigma,t}\circ\mathcal{P}_{\sigma,\mathcal{N}}\circ\mathcal{U}%
_{\mathcal{N}(\sigma),-t}\right)  \left(  \cdot\right)
,\label{eq:rotated-Petz}%
\end{equation}
$\mathcal{P}_{\sigma,\mathcal{N}}$ is the Petz recovery map defined in
\eqref{eq:Petz-channel-Rel-ent}, and $\mathcal{U}_{\sigma,t}$ and
$\mathcal{U}_{\mathcal{N}(\sigma),-t}$ are partial isometric maps defined from%
\begin{equation}
\mathcal{U}_{\omega,t}(\cdot)\equiv\omega^{it}\left(  \cdot\right)
\omega^{-it},
\end{equation}
with $\omega$ a positive semi-definite operator. If $\rho$, $\sigma$, and
$\mathcal{N}$ are as given in Definition~\ref{def:rho-sig-N-2}, then%
\begin{equation}
D(\rho\Vert\sigma)-D\left(  \mathcal{N(}\rho)\Vert\mathcal{N(}\sigma)\right)
\leq\sup_{t\in\mathbb{R}}D_{2}\left(  \rho\middle\Vert\mathcal{R}%
_{\sigma,\mathcal{N}}^{t}(\mathcal{N(}\rho))\right)  .
\end{equation}

\end{theorem}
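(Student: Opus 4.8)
The plan is to apply the Hadamard three-line theorem (Theorem~\ref{thm:hadamard}) directly with the one-parameter family of unitary rotations $\{\sigma^{it},[\mathcal{N}(\sigma)]^{it}\}_{t\in\mathbb{R}}$ that arises on the boundary of the strip $S$, instead of optimizing over all commuting swivels as in Theorems~\ref{thm:monotone}--\ref{thm:monotone-tilde}. The decisive difference from those proofs is that I would \emph{not} evaluate the interpolation at an interior point; rather, I would extract the relative entropy difference $\Delta(\rho,\sigma,\mathcal{N})$ as a one-sided derivative of $\log\|G(\theta)\|_2$ at the endpoint $\theta=0$, while the rotated Petz quantities $D_0$ (resp.\ $D_2$) appear as the supremum of the norm over the opposite boundary line.

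For the lower bound I would introduce the entire, bounded operator-valued function
\[
G(z)=\left([\mathcal{N}(\rho)]^{z/2}[\mathcal{N}(\sigma)]^{-z/2}\otimes I_E\right)U\sigma^{z/2}\rho^{(1-z)/2},
\]
and set $p_0=p_1=2$, so $p_\theta=2$. First I would check the two boundary quantities. At $z=0$ the support condition $\operatorname{supp}(\rho)\subseteq\operatorname{supp}(\sigma)$ (hence $\operatorname{supp}(\mathcal{N}(\rho))\subseteq\operatorname{supp}(\mathcal{N}(\sigma))$) gives $G(0)=U\rho^{1/2}$ and $\|G(0)\|_2=1$; the same condition together with the unitality of $\mathcal{N}^{\dagger}$ forces $M_0=\sup_t\|G(it)\|_2=1$, the supremum being attained at $t=0$. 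On the line $\operatorname{Re}(z)=1$ I would use $\mathcal{R}_{\sigma,\mathcal{N}}^{t}(\mathcal{N}(\rho))=B_t^{\dagger}B_t$ with $B_t=([\mathcal{N}(\rho)]^{1/2}[\mathcal{N}(\sigma)]^{-1/2+it}\otimes I_E)U\sigma^{1/2-it}$, and the identity $\operatorname{Tr}\{\Pi_\rho B_t^\dagger B_t\}=\|B_t\Pi_\rho\|_2^2$, to identify (after dropping left/right unitaries and reparametrizing $t\mapsto-t/2$)
\[
\|G(1+it)\|_2=\exp\left\{-\tfrac12 D_0\left(\rho\middle\Vert\mathcal{R}_{\sigma,\mathcal{N}}^{-t/2}(\mathcal{N}(\rho))\right)\right\},
\]
so that $\log M_1=-\tfrac12\inf_t D_0(\rho\Vert\mathcal{R}_{\sigma,\mathcal{N}}^{t}(\mathcal{N}(\rho)))$. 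Theorem~\ref{thm:hadamard} then yields $\log\|G(\theta)\|_2\le\theta\log M_1$ for $\theta\in(0,1)$; dividing by $\theta$ and letting $\theta\searrow0$ (using $\log\|G(0)\|_2=0$) gives $\frac{d}{d\theta}\log\|G(\theta)\|_2\big|_{0^+}\le\log M_1$. A direct computation of this derivative shows $\frac{d}{d\theta}\|G(\theta)\|_2^2|_0=-\Delta(\rho,\sigma,\mathcal{N})$, whence the left-hand side equals $-\tfrac12\Delta(\rho,\sigma,\mathcal{N})$, and the claimed inequality $\inf_t D_0\le\Delta(\rho,\sigma,\mathcal{N})$ follows.

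For the upper bound I would run the identical argument with the sign-reversed family $G(z)=([\mathcal{N}(\rho)]^{-z/2}[\mathcal{N}(\sigma)]^{z/2}\otimes I_E)U\sigma^{-z/2}\rho^{(1+z)/2}$. Now $z=1+it$ produces the $D_2$ recovery quantity, $\|G(1+it)\|_2=\exp\{\tfrac12 D_2(\rho\Vert\mathcal{R}_{\sigma,\mathcal{N}}^{t/2}(\mathcal{N}(\rho)))\}$, while the endpoint derivative flips sign to $+\tfrac12\Delta(\rho,\sigma,\mathcal{N})$; the three-line bound then gives $\Delta(\rho,\sigma,\mathcal{N})\le\sup_t D_2$.

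I expect the main obstacle to be twofold. First, rigorously justifying that the one-sided derivative of $\log\|G(\theta)\|_2$ at $\theta=0$ equals $\mp\tfrac12\Delta(\rho,\sigma,\mathcal{N})$: this is the step where the hypothesis $\operatorname{supp}(\rho)\subseteq\operatorname{supp}(\sigma)$ is essential, and it requires care with the support projections so that the logarithmic derivatives of the operator powers assemble into exactly $\operatorname{Tr}\{\rho[\log\rho-\log\sigma]\}-\operatorname{Tr}\{\mathcal{N}(\rho)[\log\mathcal{N}(\rho)-\log\mathcal{N}(\sigma)]\}$; this is the same Taylor-expansion calculation underlying $f(1,\cdot,\cdot)$ in Proposition~\ref{prop:lim-a-1}. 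Second, identifying the right-boundary norm with $D_2$ of the rotated Petz recovery presupposes that $\mathcal{R}_{\sigma,\mathcal{N}}^{t}(\mathcal{N}(\rho))$ is invertible, which is precisely why the upper bound is stated only under Definition~\ref{def:rho-sig-N-2}, with $U$ a genuine unitary and $\mathcal{N}(\rho),\mathcal{N}(\sigma)$ positive definite, whereas the $D_0$ lower bound needs only the support condition of Definition~\ref{def:rho-sig-N}.
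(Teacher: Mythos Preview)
Your proposal is correct and essentially identical to the paper's own proof: the same two operator-valued functions $G(z)$ (with the sign flip distinguishing the $D_0$ and $D_2$ cases), the same choice $p_0=p_1=2$, the same identification $M_0=1$ and $M_1=\exp\{\mp\tfrac12\,\cdot\}$ in terms of the rotated Petz recovery, and the same extraction of $\Delta(\rho,\sigma,\mathcal{N})$ by sending $\theta\searrow 0$. The only cosmetic differences are your framing of the $\theta\to 0$ step as a one-sided derivative (the paper simply writes $\lim_{\theta\searrow 0}\frac{2}{\pm\theta}\log\|G(\theta)\|_2=\Delta(\rho,\sigma,\mathcal{N})$, invoking the derivative computation of Appendix~\ref{app:Delta-limit-alpha-1}) and a harmless reparametrization of the $t$-variable, which disappears under the $\inf$/$\sup$.
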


\begin{remark}
Note that it is possible to establish ``universal'' versions of the above
inequalities, by employing Hirschman's improvement \cite{H52} of the Hadamard
three-line theorem, as done in \cite{JRSWW15}.
\end{remark}

\section{Swiveled R\'{e}nyi conditional mutual information}

\label{sec:Renyi-CMI}In this section, we show how swiveled R\'{e}nyi
conditional mutual informations are special cases of the quantities defined in
the previous section. Furthermore, they satisfy some of the properties that
one would expect to hold for a R\'{e}nyi generalization of the conditional
mutual information. However, they generally do not converge to the conditional
mutual information in the limit as $\alpha\rightarrow1$.

Let $\rho_{ABC}$ be a density operator. Following from the observation
\cite{LW14} that%
\begin{equation}
I( A;B|C) _{\rho}=\Delta( \rho,\sigma,\mathcal{N}) ,
\end{equation}
for the choices%
\begin{equation}
\rho=\rho_{ABC},\ \ \ \ \ \sigma=\rho_{AC}\otimes I_{B},\ \ \ \ \ \mathcal{N}%
=\operatorname{Tr}_{A},
\end{equation}
we define the R\'{e}nyi conditional mutual informations to be a special case
of $\Delta_{\alpha}^{\prime}( \rho,\sigma,\mathcal{N}) $ and $\widetilde
{\Delta}_{\alpha}^{\prime}( \rho,\sigma,\mathcal{N}) $. That is, by setting%
\begin{align}
I_{\alpha}^{\prime}( A;B|C) _{\rho}  &  =\Delta_{\alpha}^{\prime}( \rho
_{ABC},\rho_{AC}\otimes I_{B},\operatorname{Tr}_{A}) ,\label{eq:renyi-cmi-1}\\
\widetilde{I}_{\alpha}^{\prime}( A;B|C) _{\rho}  &  =\widetilde{\Delta
}_{\alpha}^{\prime}( \rho_{ABC},\rho_{AC}\otimes I_{B},\operatorname{Tr}_{A})
, \label{eq:renyi-cmi-2}%
\end{align}
we obtain the swiveled R\'{e}nyi conditional mutual informations stated in the
following definition:

\begin{definition}
\label{def:swiveled-CMI}The swiveled R\'{e}nyi conditional mutual informations
are defined for a density operator $\rho_{ABC}$ and $\alpha\in\left(
0,1\right)  \cup\left(  1,\infty\right)  $ as follows:%
\begin{align}
I_{\alpha}^{\prime}( A;B|C) _{\rho}  &  \equiv\frac{2}{\alpha-1}\max
_{V_{\rho_{AC}},V_{\rho_{C}}}\log\left\Vert \rho_{BC}^{\left(  1-\alpha
\right)  /2}V_{\rho_{C}}\rho_{C}^{\left(  \alpha-1\right)  /2}\rho
_{AC}^{\left(  1-\alpha\right)  /2}V_{\rho_{AC}}\rho_{ABC}^{\alpha
/2}\right\Vert _{2},\\
\widetilde{I}_{\alpha}^{\prime}( A;B|C) _{\rho}  &  \equiv\frac{2}%
{\alpha^{\prime}}\max_{V_{\rho_{AC}},V_{\rho_{C}}}\log\left\Vert \rho
_{BC}^{-\alpha^{\prime}/2}V_{\rho_{C}}\rho_{C}^{\alpha^{\prime}/2}\rho
_{AC}^{-\alpha^{\prime}/2}V_{\rho_{AC}}\rho_{ABC}^{1/2}\right\Vert _{2\alpha},
\end{align}
where $\alpha^{\prime}=\left(  \alpha-1\right)  /\alpha$.
\end{definition}

We can now easily show that the R\'{e}nyi conditional mutual informations as
defined above satisfy several natural properties, with the exception of
convergence to the von Neumann conditional mutual information.

The following is a consequence of (\ref{eq:renyi-cmi-1}%
)--(\ref{eq:renyi-cmi-2}) and Proposition~\ref{prop:lim-a-1}:

\begin{corollary}
Let $\rho_{ABC}$ be a positive definite density operator. Then%
\begin{equation}
\lim_{\alpha\nearrow1}I_{\alpha}^{\prime}( A;B|C) _{\rho}=\lim_{\alpha
\nearrow1}\widetilde{I}_{\alpha}^{\prime}( A;B|C) _{\rho}\leq I( A;B|C)
_{\rho}\leq\lim_{\alpha\searrow1}I_{\alpha}^{\prime}( A;B|C) _{\rho}%
=\lim_{\alpha\searrow1}\widetilde{I}_{\alpha}^{\prime}( A;B|C) _{\rho}.
\end{equation}

\end{corollary}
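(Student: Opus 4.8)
The plan is to read this corollary off Proposition~\ref{prop:lim-a-1} after specializing to the triple
\begin{equation}
\rho=\rho_{ABC},\qquad \sigma=\rho_{AC}\otimes I_{B},\qquad \mathcal{N}=\operatorname{Tr}_{A},
\end{equation}
since (\ref{eq:renyi-cmi-1})--(\ref{eq:renyi-cmi-2}) are precisely the identifications $I_{\alpha}^{\prime}(A;B|C)_{\rho}=\Delta_{\alpha}^{\prime}(\rho,\sigma,\mathcal{N})$ and $\widetilde{I}_{\alpha}^{\prime}(A;B|C)_{\rho}=\widetilde{\Delta}_{\alpha}^{\prime}(\rho,\sigma,\mathcal{N})$. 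First I would check that this triple satisfies the hypotheses of Proposition~\ref{prop:lim-a-1}, i.e., that it falls under Definition~\ref{def:rho-sig-N-PD}. This is the only place needing a genuine (if brief) argument: $\rho_{ABC}>0$ holds by assumption, and on a finite-dimensional space every marginal of a positive definite operator is again positive definite, so $\rho_{AC}>0$, $\rho_{BC}>0$, and $\rho_{C}>0$. Hence $\sigma=\rho_{AC}\otimes I_{B}>0$, $\mathcal{N}(\rho)=\rho_{BC}>0$, and $\mathcal{N}(\sigma)=\operatorname{Tr}_{A}(\rho_{AC}\otimes I_{B})=\rho_{C}\otimes I_{B}>0$, as required.

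With the hypotheses in hand, I would invoke Proposition~\ref{prop:lim-a-1} directly. It supplies the two limit evaluations
\begin{equation}
\lim_{\alpha\nearrow1}\Delta_{\alpha}^{\prime}=\lim_{\alpha\nearrow1}\widetilde{\Delta}_{\alpha}^{\prime}=\min_{V_{\mathcal{N}(\sigma)},V_{\sigma}}f(1,V_{\mathcal{N}(\sigma)},V_{\sigma}),\qquad \lim_{\alpha\searrow1}\Delta_{\alpha}^{\prime}=\lim_{\alpha\searrow1}\widetilde{\Delta}_{\alpha}^{\prime}=\max_{V_{\mathcal{N}(\sigma)},V_{\sigma}}f(1,V_{\mathcal{N}(\sigma)},V_{\sigma}),
\end{equation}
together with the sandwich $\min_{V}f\leq f(1,I,I)=\Delta(\rho,\sigma,\mathcal{N})\leq\max_{V}f$. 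Substituting the identifications from (\ref{eq:renyi-cmi-1})--(\ref{eq:renyi-cmi-2}) converts the two limit equalities into the two outer equalities of the desired chain, and turns the sandwich into $\lim_{\alpha\nearrow1}I_{\alpha}^{\prime}(A;B|C)_{\rho}\leq\Delta(\rho,\sigma,\mathcal{N})\leq\lim_{\alpha\searrow1}I_{\alpha}^{\prime}(A;B|C)_{\rho}$.

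All that then remains is to replace $\Delta(\rho,\sigma,\mathcal{N})$ by $I(A;B|C)_{\rho}$, which is the identity $I(A;B|C)_{\rho}=\Delta(\rho_{ABC},\rho_{AC}\otimes I_{B},\operatorname{Tr}_{A})$ recalled from \cite{LW14} just above Definition~\ref{def:swiveled-CMI}. I would regard this identity (and the positive-definiteness of the marginals) as the only non-automatic ingredients; there is no analytic obstacle, because the real work of establishing continuity away from $\alpha=1$, computing the $\alpha\to1$ limits, and matching the $\Delta^{\prime}$ and $\widetilde{\Delta}^{\prime}$ families has already been done in Proposition~\ref{prop:lim-a-1}. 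If one wished to be fully self-contained on the last point, one could note that no additive $\log\operatorname{Tr}\{\sigma\}$ constant appears here, unlike in the reduction to the R\'enyi relative entropy, because the factors of $I_{B}$ in $\sigma$ and in $\mathcal{N}(\sigma)$ produce $\log(\cdot)\otimes I_{B}$ contributions that cancel between the two relative-entropy terms defining $\Delta$, leaving exactly $H(AC)_{\rho}+H(BC)_{\rho}-H(C)_{\rho}-H(ABC)_{\rho}$.
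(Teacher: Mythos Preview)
Your proposal is correct and follows exactly the paper's approach: the paper states the corollary as an immediate consequence of the identifications (\ref{eq:renyi-cmi-1})--(\ref{eq:renyi-cmi-2}) together with Proposition~\ref{prop:lim-a-1}, and you carry out precisely that specialization, additionally spelling out the verification that the triple $(\rho_{ABC},\,\rho_{AC}\otimes I_{B},\,\operatorname{Tr}_{A})$ meets the positive-definiteness hypotheses of Definition~\ref{def:rho-sig-N-PD}. Your closing remark about the absence of a $\log\operatorname{Tr}\{\sigma\}$ correction is an optional aside not needed for the argument, since the identity $I(A;B|C)_{\rho}=\Delta(\rho_{ABC},\rho_{AC}\otimes I_{B},\operatorname{Tr}_{A})$ is already recorded in the paper from \cite{LW14}.
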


They are monotone non-decreasing with respect to the parameter $\alpha$, which
follows from (\ref{eq:renyi-cmi-1})--(\ref{eq:renyi-cmi-2}) and
Theorems~\ref{thm:monotone} and \ref{thm:monotone-tilde}:

\begin{corollary}
\label{cor:renyi-cmi-ordered}Let $\rho_{ABC}$ be a density operator. The
swiveled R\'{e}nyi conditional mutual informations $I_{\alpha}^{\prime}(
A;B|C) _{\rho}$ and $\widetilde{I}_{\alpha}^{\prime}( A;B|C) _{\rho}$ are
monotone non-decreasing with respect to the R\'{e}nyi parameter for particular
values. For $0\leq\alpha\leq\gamma\leq2$, $\alpha\neq1$, and $\gamma\neq1$, we
have that%
\begin{equation}
I_{\alpha}^{\prime}( A;B|C) _{\rho}\leq I_{\gamma}^{\prime}( A;B|C) _{\rho},
\end{equation}
and for $1/2\leq\alpha\leq\gamma\leq\infty$, $\alpha\neq1$, and $\gamma\neq1$,%
\begin{equation}
\widetilde{I}_{\alpha}^{\prime}( A;B|C) _{\rho}\leq\widetilde{I}_{\gamma
}^{\prime}( A;B|C) _{\rho}.
\end{equation}

\end{corollary}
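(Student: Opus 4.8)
The plan is to obtain both chains of inequalities as immediate specializations of the two monotonicity theorems already established. The defining identities (\ref{eq:renyi-cmi-1})--(\ref{eq:renyi-cmi-2}) express the swiveled R\'enyi conditional mutual informations as the swiveled relative-entropy-difference quantities evaluated at the particular triple $\rho=\rho_{ABC}$, $\sigma=\rho_{AC}\otimes I_{B}$, and $\mathcal{N}=\operatorname{Tr}_{A}$. First I would confirm that this triple satisfies the hypotheses of Definition~\ref{def:rho-sig-N}: $\rho_{ABC}$ is a density operator; $\sigma=\rho_{AC}\otimes I_{B}$ is a non-zero positive semi-definite operator on the same space (the reduced operator $\rho_{AC}$ is positive semi-definite and $I_{B}$ is positive definite); and the partial trace $\operatorname{Tr}_{A}$ is a quantum channel. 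Crucially, no positive-definiteness is required here, since Theorems~\ref{thm:monotone} and \ref{thm:monotone-tilde} are stated under Definition~\ref{def:rho-sig-N} alone.

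With the hypotheses verified, I would derive the first inequality by substituting this triple into Theorem~\ref{thm:monotone} and reading the conclusion through (\ref{eq:renyi-cmi-1}): the monotonicity $\Delta_{\alpha}^{\prime}(\rho,\sigma,\mathcal{N})\leq\Delta_{\gamma}^{\prime}(\rho,\sigma,\mathcal{N})$, valid for $0\leq\alpha\leq\gamma\leq2$ with $\alpha,\gamma\neq1$, becomes $I_{\alpha}^{\prime}(A;B|C)_{\rho}\leq I_{\gamma}^{\prime}(A;B|C)_{\rho}$ on exactly this range. Identically, substituting the same triple into Theorem~\ref{thm:monotone-tilde} and using (\ref{eq:renyi-cmi-2}) would yield $\widetilde{I}_{\alpha}^{\prime}(A;B|C)_{\rho}\leq\widetilde{I}_{\gamma}^{\prime}(A;B|C)_{\rho}$ for $1/2\leq\alpha\leq\gamma\leq\infty$ with $\alpha,\gamma\neq1$.

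Because the argument amounts to nothing more than substitution and invocation, there is no genuine obstacle to surmount. The only point meriting a moment's attention is that the parameter ranges in the two theorems, namely $[0,1)\cup(1,2]$ for the un-tilded quantity and $[1/2,1)\cup(1,\infty]$ for the sandwiched one, match verbatim those asserted in the corollary; this holds by construction. For bookkeeping I would also note that under $\mathcal{N}=\operatorname{Tr}_{A}$ one has $\mathcal{N}(\rho)=\rho_{BC}$ and $\mathcal{N}(\sigma)=\rho_{C}\otimes I_{B}$, so that the swivels $V_{\sigma}$ and $V_{\mathcal{N}(\sigma)}$ of the general definition specialize to the swivels $V_{\rho_{AC}}$ and $V_{\rho_{C}}$ appearing in Definition~\ref{def:swiveled-CMI}, confirming that the specialized expressions are precisely the conditional mutual informations of that definition.
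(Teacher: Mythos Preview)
Your proposal is correct and matches the paper's approach exactly: the paper states just before the corollary that it ``follows from (\ref{eq:renyi-cmi-1})--(\ref{eq:renyi-cmi-2}) and Theorems~\ref{thm:monotone} and \ref{thm:monotone-tilde},'' which is precisely the specialization-and-invocation argument you outline. Your additional verification that the triple $(\rho_{ABC},\,\rho_{AC}\otimes I_B,\,\operatorname{Tr}_A)$ meets Definition~\ref{def:rho-sig-N} and your bookkeeping on the swivels are sound elaborations of what the paper leaves implicit.
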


They are monotone non-increasing with respect to a quantum channel acting on
the $B$ system, which follows by invoking \cite[Lemmas 13 and 23]{BSW14}:

\begin{corollary}
Let $\rho_{ABC}$ be a positive definite density operator, and let
$\mathcal{N}_{B\rightarrow B^{\prime}}$ be a quantum channel such that%
\begin{equation}
\sigma_{AB^{\prime}C}\equiv\mathcal{N}_{B\rightarrow B^{\prime}}( \rho_{ABC})
\end{equation}
is a positive definite density operator. Then for all $\alpha\in
\lbrack0,1)\cup(1,2]$, the following inequality holds%
\begin{equation}
I_{\alpha}^{\prime}( A;B|C) _{\rho}\geq I_{\alpha}^{\prime}( A;B^{\prime}|C)
_{\sigma},
\end{equation}
and for all $\alpha\in\lbrack1/2,1)\cup(1,\infty]$, the following inequality
holds%
\begin{equation}
\widetilde{I}_{\alpha}^{\prime}( A;B|C) _{\rho}\geq\widetilde{I}_{\alpha
}^{\prime}( A;B^{\prime}|C) _{\sigma}.
\end{equation}

\end{corollary}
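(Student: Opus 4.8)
The plan is to reduce the claimed monotonicity to the corresponding monotonicity of the \emph{un-swiveled} R\'enyi conditional mutual informations established in \cite[Lemmas 13 and 23]{BSW14}, exploiting the fact that a channel acting on $B$ leaves the $AC$- and $C$-marginals---and hence the swivel sets---untouched. First I would record that $\sigma_{AC}=\rho_{AC}$ and $\sigma_C=\rho_C$, since $\mathcal{N}_{B\rightarrow B'}$ acts only on the $B$ system. Consequently $\mathbb{V}_{\sigma_{AC}}=\mathbb{V}_{\rho_{AC}}$ and $\mathbb{V}_{\sigma_C}=\mathbb{V}_{\rho_C}$, so that the swivel optimizations defining $I_{\alpha}^{\prime}(A;B'|C)_{\sigma}$ and $I_{\alpha}^{\prime}(A;B|C)_{\rho}$ (and their tilded analogues, via \eqref{eq:renyi-cmi-1}--\eqref{eq:renyi-cmi-2}) range over the very same compact sets.

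Next I would decompose the channel through its isometric extension $\mathcal{N}_{B\rightarrow B'}(\cdot)=\operatorname{Tr}_E\{W(\cdot)W^{\dag}\}$, with $W$ an isometry from $\mathcal{H}_B$ into $\mathcal{H}_{B'}\otimes\mathcal{H}_E$, and split the argument into invariance under $W$ followed by monotonicity under the partial trace over $E$. For the first step, write $\widetilde{\rho}_{AB'EC}=W\rho_{ABC}W^{\dag}$ and observe that $\widetilde{\rho}_{AB'EC}^{\alpha/2}=W\rho_{ABC}^{\alpha/2}W^{\dag}$ and $\widetilde{\rho}_{B'EC}^{(1-\alpha)/2}=W\rho_{BC}^{(1-\alpha)/2}W^{\dag}$, while $W$ commutes with every operator supported on $AC$ or $C$ (in particular with the swivels $V_{\rho_{AC}}$ and $V_{\rho_C}$). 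The isometry therefore factors to the outside of the operator chain and cancels inside the Schatten $2$-norm via $W^{\dag}W=I$, yielding $I_{\alpha}^{\prime}(A;B'E|C)_{\widetilde{\rho}}=I_{\alpha}^{\prime}(A;B|C)_{\rho}$ (and likewise for the tilded quantity). This reduces the problem to showing that tracing out $E$---a subsystem of the enlarged $B$-system $B'E$---does not increase the quantity.

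For this final step I would invoke \cite[Lemmas 13 and 23]{BSW14} directly. For each fixed pair of swivels $(V_{\rho_{AC}},V_{\rho_C})$---legitimately shared between $\widetilde{\rho}$ and $\sigma=\operatorname{Tr}_E\{\widetilde{\rho}\}$ by the first paragraph---those lemmas supply the core Schatten-norm inequality relating the chain for $\sigma_{AB'C}$ to the chain for $\widetilde{\rho}_{AB'EC}$, with the direction dictated by the sign of the prefactor $2/(\alpha-1)$ (respectively $2/\alpha'$), where $\rho_C^{(\alpha-1)/2}$ and $\rho_{AC}^{(1-\alpha)/2}$ are replaced by $V_{\rho_C}\rho_C^{(\alpha-1)/2}$ and $\rho_{AC}^{(1-\alpha)/2}V_{\rho_{AC}}$. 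Taking the maximum over the shared swivel sets and applying the logarithm then gives $I_{\alpha}^{\prime}(A;B'|C)_{\sigma}\le I_{\alpha}^{\prime}(A;B|C)_{\rho}$ for $\alpha\in[0,1)\cup(1,2]$, and $\widetilde{I}_{\alpha}^{\prime}(A;B'|C)_{\sigma}\le\widetilde{I}_{\alpha}^{\prime}(A;B|C)_{\rho}$ for $\alpha\in[1/2,1)\cup(1,\infty]$.

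The main obstacle I anticipate lies precisely in this last step: confirming that the proofs of \cite[Lemmas 13 and 23]{BSW14}, written for the un-swiveled operator chains, remain valid once the fixed unitaries $V_{\rho_{AC}}$ and $V_{\rho_C}$ are inserted. The favorable circumstance is that these unitaries are supported on $AC$ and $C$, commute with $\rho_{AC}$ and $\rho_C$, and are untouched by the partial trace over $E$; so provided the BSW14 argument uses only that the $AC$- and $C$-operators are fixed bounded operators acting on systems disjoint from the traced-out one, it carries over verbatim with the swiveled operators in place of the bare powers. I would therefore check carefully that no special structural property of $\rho_C^{(\alpha-1)/2}$ or $\rho_{AC}^{(1-\alpha)/2}$ beyond this is invoked, which is exactly what lets the swivels ``ride along'' harmlessly.
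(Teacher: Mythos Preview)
Your proposal is correct and follows essentially the same route as the paper, which simply states that the corollary ``follows by invoking \cite[Lemmas 13 and 23]{BSW14}.'' You have made explicit the two points the paper leaves implicit---that the swivel sets $\mathbb{V}_{\rho_{AC}}$ and $\mathbb{V}_{\rho_C}$ are unchanged by a channel on $B$, and that the BSW14 norm inequalities go through verbatim once the fixed $AC$- and $C$-supported unitaries are inserted---so your write-up is a faithful (and more detailed) rendering of the intended argument.
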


Corollary~\ref{cor:renyi-cmi-ordered}, Proposition~\ref{prop:lim-a-1},\ and
(\ref{eq:renyi-cmi-1})--(\ref{eq:renyi-cmi-2}) then imply the following
refinements of the strong subaddivity of quantum entropy, two of which were
already determined in \cite{W15}:

\begin{corollary}
\label{thm:CMI}Let $\rho_{ABC}$ be a density operator. Then the following
inequalities hold%
\begin{align}
-\log\left[  \max_{W_{\rho_{C}},V_{\rho_{AC}}}F\left(  \rho_{ABC}%
,\mathcal{R}_{C\rightarrow AC}^{V,W}\left(  \rho_{BC}\right)  \right)
\right]   &  \leq I(A;B|C)_{\rho},\\
\min_{W_{\rho_{C}},V_{\rho_{AC}}}D_{0}\left(  \rho_{ABC}\middle\Vert
\mathcal{R}_{C\rightarrow AC}^{V,W}\left(  \rho_{BC}\right)  \right)   &  \leq
I(A;B|C)_{\rho},
\end{align}
where $\mathcal{R}_{C\rightarrow AC}^{V,W}$ is the following swiveled Petz
recovery map:%
\begin{equation}
\mathcal{R}_{C\rightarrow AC}^{V,W}(\cdot)\equiv\left(  \mathcal{V}_{\rho
_{AC}}\circ\mathcal{P}_{C\rightarrow AC}\circ\mathcal{W}_{\rho_{C}}\right)
(\cdot),\label{eq:CMI-petz-recovery}%
\end{equation}
the Petz recovery map $\mathcal{P}_{C\rightarrow AC}$ is defined as%
\begin{equation}
\mathcal{P}_{C\rightarrow AC}(\cdot)\equiv\mathcal{P}_{\rho_{AC}%
,\operatorname{Tr}_{A}}(\cdot)=\rho_{AC}^{1/2}\rho_{C}^{-1/2}(\cdot)\rho
_{C}^{-1/2}\rho_{AC}^{1/2},
\end{equation}
and the partial isometric maps $\mathcal{V}_{\rho_{AC}}$ and $\mathcal{W}%
_{\rho_{C}}$ are defined as in \eqref{eq:unitaries}. If $\rho_{ABC}$ is a
positive definite density operator, then the following inequalities hold%
\begin{align}
I(A;B|C)_{\rho} &  \leq\max_{W_{\rho_{C}},V_{\rho_{AC}}}D_{\max}\left(
\rho_{ABC}\middle\Vert\mathcal{R}_{C\rightarrow AC}^{V,W}\left(  \rho
_{BC}\right)  \right)  ,\\
I(A;B|C)_{\rho} &  \leq\max_{W_{\rho_{C}},V_{\rho_{AC}}}D_{2}\left(
\rho_{ABC}\middle\Vert\mathcal{R}_{C\rightarrow AC}^{V,W}\left(  \rho
_{BC}\right)  \right)  .
\end{align}

\end{corollary}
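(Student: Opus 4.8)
The plan is to obtain all four inequalities at once by recognizing that this corollary is simply the conditional-mutual-information specialization of Corollary~\ref{cor:recover-statement}. Under the dictionary $\rho=\rho_{ABC}$, $\sigma=\rho_{AC}\otimes I_B$, $\mathcal{N}=\operatorname{Tr}_A$ used throughout this section, we have $I(A;B|C)_\rho=\Delta(\rho,\sigma,\mathcal{N})=D(\rho\Vert\sigma)-D(\mathcal{N}(\rho)\Vert\mathcal{N}(\sigma))$, so the task reduces to matching the abstract recovery-map quantities of Corollary~\ref{cor:recover-statement} with the swiveled Petz quantities in the present statement. The first step I would carry out is this identification: since $\mathcal{N}^{\dag}(\cdot)=I_A\otimes(\cdot)$, $\mathcal{N}(\sigma)=\rho_C\otimes I_B$, and $\mathcal{N}(\rho)=\rho_{BC}$, the Petz map $\mathcal{P}_{\sigma,\mathcal{N}}$ collapses to $\mathcal{P}_{C\rightarrow AC}(\cdot)=\rho_{AC}^{1/2}\rho_C^{-1/2}(\cdot)\rho_C^{-1/2}\rho_{AC}^{1/2}$ with the factors $I_B$ merely spectating, the two swivels (commuting with $\mathcal{N}(\sigma)$ and with $\sigma$) become unitaries commuting with $\rho_C$ and with $\rho_{AC}$, and the optimized map applied to $\mathcal{N}(\rho)=\rho_{BC}$ is exactly $\mathcal{R}_{C\rightarrow AC}^{V,W}(\rho_{BC})$ of \eqref{eq:CMI-petz-recovery}. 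The labels $V,W$ of the two swivels are interchanged relative to the abstract definition, but this is immaterial because both are optimized over.

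With the identification in hand, the two lower bounds are precisely the first pair of inequalities in Corollary~\ref{cor:recover-statement} evaluated at the above triple. The only hypothesis needed for that half is the support containment $\operatorname{supp}(\rho)\subseteq\operatorname{supp}(\sigma)$, and here it holds automatically, since $\operatorname{supp}(\rho_{ABC})\subseteq\operatorname{supp}(\rho_{AC})\otimes\mathcal{H}_B=\operatorname{supp}(\rho_{AC}\otimes I_B)$ for any density operator $\rho_{ABC}$. This is exactly why the lower bounds require no positive-definiteness assumption. Substituting $\Delta(\rho,\sigma,\mathcal{N})=I(A;B|C)_\rho$ then yields the two displayed lower bounds verbatim.

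For the two upper bounds I would instead invoke the second pair of inequalities in Corollary~\ref{cor:recover-statement}, whose hypothesis is that the triple meets Definition~\ref{def:rho-sig-N-2}. Assuming $\rho_{ABC}$ positive definite forces $\rho_{AC}$, $\mathcal{N}(\rho)=\rho_{BC}$, and $\mathcal{N}(\sigma)=\rho_C\otimes I_B$ to be positive definite on their respective factors, which is exactly the positive-definiteness needed to place the triple in the scope of Definition~\ref{def:rho-sig-N-2} (with the conditioning system $C$ carried along). Substituting $\Delta(\rho,\sigma,\mathcal{N})=I(A;B|C)_\rho$ and $\mathcal{R}_{\sigma,\mathcal{N}}^{V,W}(\rho_{BC})=\mathcal{R}_{C\rightarrow AC}^{V,W}(\rho_{BC})$ converts the $D_{\max}$ and $D_2$ upper bounds of Corollary~\ref{cor:recover-statement} into the two claimed upper bounds.

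I expect the only genuine work to be bookkeeping rather than new mathematics, since the substantive content already resides in Corollary~\ref{cor:recover-statement} (equivalently in Theorems~\ref{thm:monotone} and \ref{thm:monotone-tilde} together with Proposition~\ref{prop:lim-a-1}). The main point to get right is confirming that $\mathcal{R}_{\sigma,\mathcal{N}}^{V,W}$ really specializes to $\mathcal{R}_{C\rightarrow AC}^{V,W}$---that the two swivels land on $\rho_C$ and $\rho_{AC}$ and that the spectator system $\mathcal{H}_B$ does not interfere with the support or positive-definiteness checks. Equivalently, and matching the route indicated in the surrounding text, one may argue directly: the special-value recovery identities for $I_0^{\prime}$, $\widetilde{I}_{1/2}^{\prime}$, $I_2^{\prime}$, and $\widetilde{I}_\infty^{\prime}$ combined with the monotonicity in $\alpha$ of Corollary~\ref{cor:renyi-cmi-ordered} and the sandwiching of $I(A;B|C)_\rho$ between the one-sided limits at $\alpha=1$ from Proposition~\ref{prop:lim-a-1} place the $\alpha<1$ quantities below $I(A;B|C)_\rho$ and the $\alpha>1$ quantities above it.
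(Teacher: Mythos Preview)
Your proposal is correct and matches the paper's own justification. The paper does not give a standalone proof of this corollary; it simply states in the sentence preceding it that the result follows from Corollary~\ref{cor:renyi-cmi-ordered}, Proposition~\ref{prop:lim-a-1}, and \eqref{eq:renyi-cmi-1}--\eqref{eq:renyi-cmi-2}, which is exactly the second route you describe at the end, and your primary route through Corollary~\ref{cor:recover-statement} is the same argument one level up.
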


Note that remainder terms for strong subadditivity were put forward in
\cite{FR14,SOR15} before the recent developments in \cite{W15}.

\section{Swiveled R\'{e}nyi quantum information measures}

\label{sec:arbitrary-vn-measure}We now discuss how to extend the approach
given here and in \cite{BSW15a}\ in order to construct swiveled R\'{e}nyi
generalizations of any quantum information measure which consists of a linear
combination of von Neumann entropies with coefficients chosen from the set
$\left\{  -1,0,1\right\}  $. We repeat some of the discussions from
\cite{BSW15a}\ in order to illustrate the method.

Let $\rho_{A_{1}\cdots A_{l}}$ be a density operator on $l$ systems and set
$\mathcal{A}\equiv\left\{  A_{1},\ldots,A_{l}\right\}  $. Suppose that we
would like to establish a R\'{e}nyi generalization of the following linear
combination of entropies:%
\begin{equation}
L\left(  \rho_{A_{1}\cdots A_{l}}\right)  \equiv\sum_{S\in\mathcal{P}_{\geq
1}\left(  \mathcal{A}\right)  }a_{S}H(S)_{\rho}, \label{eq:linear-combo-vN}%
\end{equation}
where $\mathcal{P}_{\geq1}\left(  \mathcal{A}\right)  $ is the power set of
$\mathcal{A}$ (excluding the empty set), such that the sum runs over all
subsets of the systems $A_{1},\ldots,A_{l}$. Furthermore, each coefficient
$a_{S}\in\left\{  -1,0,1\right\}  $ and corresponds to a subset $S$. In the
case that $a_{\mathcal{A}}$ is nonzero, without loss of generality, we can set
$a_{\mathcal{A}}=-1$ (otherwise, factor out $-1$ to make this the case). Then,
we can rewrite the quantity in (\ref{eq:linear-combo-vN}) in terms of the
relative entropy as follows:%
\begin{equation}
D\left(  \rho_{A_{1}\cdots A_{l}}\middle\Vert\exp\left\{  \sum_{S\in
\mathcal{P}^{\prime}}a_{S}\log\rho_{S}\right\}  \right)  , \label{relentlin}%
\end{equation}
where $\mathcal{P}^{\prime}=\mathcal{P}_{\geq1}\left(  \mathcal{A}\right)
\backslash\left\{  A_{1},\ldots,A_{l}\right\}  $. On the other hand, if
$a_{\mathcal{A}}=0$, i.e., if all the marginal entropies in the sum are on a
number of systems that is strictly smaller than the number of systems over
which the state $\rho$ is defined (as is the case with $H(AB)+H(BC)+H(AC)$,
for example), we can take a purification of the original state and call this
purification the state $\rho_{A_{1}\cdots A_{l}}$. This state is now a pure
state on a number of systems strictly larger than the number of systems
involved in all the marginal entropies. We then add the entropy $H(A_{1}\ldots
A_{l})_{\rho}=0$ to the sum of entropies and apply the above recipe (so we
resolve the issue with this example by purifying to a system $R$, setting the
sum formula to be $H(ABCR)+H(AB)+H(BC)+H(AC)$, and proceeding with the above
recipe). In the case that the resulting density operator $\rho_{A_{1}\cdots
A_{l}}$\ is not positive definite, we can mix it with the maximally mixed
state $\pi_{A_{1}\cdots A_{l}}$\ as follows:%
\begin{equation}
\left(  1-\varepsilon\right)  \rho_{A_{1}\cdots A_{l}}+\varepsilon\pi
_{A_{1}\cdots A_{l}},
\end{equation}
where $\varepsilon\in(0,1)$. The resulting density operator is then
$\varepsilon$-distinguishable from the original one by any quantum measurement
performed on the systems $A_{1}\cdots A_{l}$.

We then define the following swiveled R\'{e}nyi entropies, which generalize
$L\left(  \rho_{A_{1}\cdots A_{l}}\right)  $ from \eqref{eq:linear-combo-vN}:%
\begin{align}
L_{\alpha}^{\prime}\left(  \rho_{A_{1}\cdots A_{l}}\right)   &  \equiv\frac
{2}{\alpha-1}\max_{\left\{  V_{\rho_{S}}\right\}  _{S}}\log\left\Vert \left[
\prod\limits_{S\in\mathcal{P}^{\prime}}\rho_{S}^{-a_{S}\left(  \alpha
-1\right)  /2}V_{\rho_{S}}\right]  \rho_{A_{1}\cdots A_{l}}^{\alpha
/2}\right\Vert _{2},\\
\widetilde{L}_{\alpha}^{\prime}\left(  \rho_{A_{1}\cdots A_{l}}\right)   &
\equiv\frac{2}{\alpha^{\prime}}\max_{\left\{  V_{\rho_{S}}\right\}  _{S}}%
\log\left\Vert \left[  \prod\limits_{S\in\mathcal{P}^{\prime}}\rho_{S}%
^{-a_{S}\alpha^{\prime}/2}V_{\rho_{S}}\right]  \rho_{A_{1}\cdots A_{l}}%
^{1/2}\right\Vert _{2\alpha},
\end{align}
where $\alpha^{\prime}=\left(  \alpha-1\right)  /\alpha$. The ordering of the
marginal density operators in the product is in general arbitrary, but could
be important for some applications (consider, e.g., that the choices in
Definition~\ref{def:swiveled-CMI}\ lead to the inequalities in
Corollary~\ref{thm:CMI}, which have a physical interpretation in terms of recovery).

By the same methods as given throughout this paper, we can establish several
properties of these quantities. It follows from the generalized Lie-Trotter
product formula \cite{S85} and the method given in the proof of
Proposition~\ref{prop:lim-a-1}\ that%
\begin{equation}
\lim_{\alpha\nearrow1}L_{\alpha}^{\prime}\left(  \rho_{A_{1}\cdots A_{l}%
}\right)  =\lim_{\alpha\nearrow1}\widetilde{L}_{\alpha}^{\prime}\left(
\rho_{A_{1}\cdots A_{l}}\right)  \leq L\left(  \rho_{A_{1}\cdots A_{l}%
}\right)  \leq\lim_{\alpha\searrow1}L_{\alpha}^{\prime}\left(  \rho
_{A_{1}\cdots A_{l}}\right)  =\lim_{\alpha\searrow1}\widetilde{L}_{\alpha
}^{\prime}\left(  \rho_{A_{1}\cdots A_{l}}\right)  .
\end{equation}
From the same method as given in the proof of Theorems~\ref{thm:monotone} and
\ref{thm:monotone-tilde}, for $0\leq\alpha\leq\gamma\leq2$, $\alpha\neq1$, and
$\gamma\neq1$, we can conclude that%
\begin{equation}
L_{\alpha}^{\prime}\left(  \rho_{A_{1}\cdots A_{l}}\right)  \leq L_{\gamma
}^{\prime}\left(  \rho_{A_{1}\cdots A_{l}}\right)  ,
\end{equation}
and for $1/2\leq\alpha\leq\gamma\leq\infty$, $\alpha\neq1$, and $\gamma\neq1$,%
\begin{equation}
\widetilde{L}_{\alpha}^{\prime}\left(  \rho_{A_{1}\cdots A_{l}}\right)
\leq\widetilde{L}_{\gamma}^{\prime}\left(  \rho_{A_{1}\cdots A_{l}}\right)  .
\end{equation}
The inequalities above then lead to the following bounds for $L\left(
\rho_{A_{1}\cdots A_{l}}\right)  $:%
\begin{align}
L_{0}^{\prime}\left(  \rho_{A_{1}\cdots A_{l}}\right)   &  \leq L\left(
\rho_{A_{1}\cdots A_{l}}\right)  \leq L_{2}^{\prime}\left(  \rho_{A_{1}\cdots
A_{l}}\right)  ,\\
\widetilde{L}_{1/2}^{\prime}\left(  \rho_{A_{1}\cdots A_{l}}\right)   &  \leq
L\left(  \rho_{A_{1}\cdots A_{l}}\right)  \leq\widetilde{L}_{\infty}^{\prime
}\left(  \rho_{A_{1}\cdots A_{l}}\right)  ,
\end{align}
which in some cases might have physical interpretations in terms of swiveled
Petz recovery channels (see \cite[Sections~5.6 and 5.7]{W15} for some examples).

\section{Monotonicity of Trace Quantities}

\cite{Z14b} posed an open question regarding whether the following quantity%
\begin{equation}
\text{Tr}\left\{  \left[  \rho_{AC}^{\left(  1-\alpha\right)  /2}\rho
_{C}^{\left(  \alpha-1\right)  /2}\rho_{BC}^{1-\alpha}\rho_{C}^{\left(
\alpha-1\right)  /2}\rho_{AC}^{\left(  1-\alpha\right)  /2}\right]
^{1/\left(  1-\alpha\right)  }\right\}  \label{eq:r-ABC-trace-quantity}%
\end{equation}
is monotone in $\alpha$ and never exceeds one. The recent work \cite{DW15}%
\ addressed this open question, first by generalizing it and then proving that%
\begin{equation}
\text{Tr}\left\{  \left[  \sigma^{\left(  1-\alpha\right)  /2}\mathcal{N}%
^{\dag}\left(  \mathcal{N}(\sigma)^{\left(  \alpha-1\right)  /2}%
\mathcal{N}(\rho)^{1-\alpha}\mathcal{N}(\sigma)^{\left(  \alpha-1\right)
/2}\right)  \sigma^{\left(  1-\alpha\right)  /2}\right]  ^{1/\left(
1-\alpha\right)  }\right\}  \leq1, \label{eq:r-s-N-trace-quantity}%
\end{equation}
for $\alpha\in(0,1)$ and $\rho$, $\sigma$, and $\mathcal{N}$ as given in
Definition~\ref{def:rho-sig-N}. The same work established that this bound
holds for $\alpha\in(1,2)$ if $\rho$, $\sigma$, and $\mathcal{N}$ are as given
in Definition~\ref{def:rho-sig-N}. One recovers the quantity in
(\ref{eq:r-ABC-trace-quantity}) by picking $\rho=\rho_{ABC}$, $\sigma
=\rho_{AC}\otimes I_{B}$, and $\mathcal{N} = \operatorname{Tr}_{A}$ in
(\ref{eq:r-s-N-trace-quantity}). It is not known whether the quantity in
(\ref{eq:r-s-N-trace-quantity}) is monotone with respect to $\alpha$.

Here, we address this latter question by again taking our approach of allowing
for a unitary swivel. Consider that we can rewrite the left-hand side of
(\ref{eq:r-s-N-trace-quantity}) as follows for $\alpha\in(0,1)$:%
\begin{equation}
\left\Vert \left[  \mathcal{N}(\rho)^{\left(  1-\alpha\right)  /2}%
\mathcal{N}(\sigma)^{\left(  \alpha-1\right)  /2}\otimes I_{E}\right]
U\sigma^{\left(  1-\alpha\right)  /2}\right\Vert _{2/\left(  1-\alpha\right)
}^{2/\left(  1-\alpha\right)  },
\end{equation}
where $U$ is an isometric extension of the channel $\mathcal{N}$. So we
instead consider the following quantity, which has an optimization over a
unitary swivel:%
\begin{equation}
\max_{V_{\mathcal{N}(\sigma)}}\left\Vert \left[  \mathcal{N}(\rho)^{\left(
1-\alpha\right)  /2}V_{\mathcal{N}(\sigma)}\mathcal{N}(\sigma)^{\left(
\alpha-1\right)  /2}\otimes I_{E}\right]  U\sigma^{\left(  1-\alpha\right)
/2}\right\Vert _{2/\left(  1-\alpha\right)  }^{2/\left(  1-\alpha\right)  }.
\end{equation}
To simplify the notation, consider that the above quantity for $\alpha
\in\lbrack0,1)$ is the same as the following one for $p\in\lbrack2,\infty)$:%
\begin{equation}
\max_{V_{\mathcal{N}(\sigma)}}\left\Vert \left[  \mathcal{N}(\rho
)^{1/p}V_{\mathcal{N}(\sigma)}\mathcal{N}(\sigma)^{-1/p}\otimes I_{E}\right]
U\sigma^{1/p}\right\Vert _{p}^{p}. \label{eq:r-s-N-trace-quant-p}%
\end{equation}
We can now state our contribution to the open question:

\begin{proposition}
\label{prop:zhang-conj}The quantity in \eqref{eq:r-s-N-trace-quant-p} is
monotone non-increasing on the interval $p\in\lbrack2,\infty)$ and has a
maximum value of one at $p=2$ if $\operatorname{supp}(\rho)\subseteq
\operatorname{supp}(\sigma)$.
\end{proposition}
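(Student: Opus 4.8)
The plan is to establish the two assertions separately: that the quantity in \eqref{eq:r-s-N-trace-quant-p} equals one at $p=2$, and that it is non-increasing on $[2,\infty)$; together these give that its maximum over $[2,\infty)$ is one, attained at $p=2$. The monotonicity, as in the proofs of Theorems~\ref{thm:monotone} and \ref{thm:monotone-tilde}, will rest on the Hadamard three-line theorem (Theorem~\ref{thm:hadamard}).

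For the value at $p=2$, I would compute directly: writing the squared $2$-norm as a trace and using $\operatorname{Tr}\{\sigma\,\mathcal{N}^{\dag}(Y)\}=\operatorname{Tr}\{\mathcal{N}(\sigma)\,Y\}$, the quantity becomes $\operatorname{Tr}\{\mathcal{N}(\sigma)\,\mathcal{N}(\sigma)^{-1/2}V_{\mathcal{N}(\sigma)}^{\dag}\mathcal{N}(\rho)V_{\mathcal{N}(\sigma)}\mathcal{N}(\sigma)^{-1/2}\}$. Since $V_{\mathcal{N}(\sigma)}$ commutes with $\mathcal{N}(\sigma)^{\pm 1/2}$, the swivel cancels under the trace, leaving $\operatorname{Tr}\{\Pi_{\mathcal{N}(\sigma)}\mathcal{N}(\rho)\}$; the hypothesis $\operatorname{supp}(\rho)\subseteq\operatorname{supp}(\sigma)$ implies $\operatorname{supp}(\mathcal{N}(\rho))\subseteq\operatorname{supp}(\mathcal{N}(\sigma))$, so this equals $\operatorname{Tr}\{\mathcal{N}(\rho)\}=1$ for every swivel, and the maximum is one.

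For monotonicity, fix $2\le p\le q<\infty$ and let $W$ be a swivel attaining the maximum at $q$ (which exists by compactness of $\mathbb{V}_{\mathcal{N}(\sigma)}$). I would apply Theorem~\ref{thm:hadamard} with $G(z)=[\mathcal{N}(\rho)^{(1-z)/p}W\mathcal{N}(\sigma)^{-(1-z)/p}\otimes I_E]\,U\,\sigma^{(1-z)/p}$, $p_0=p$, $p_1=\infty$, and $\theta=(q-p)/q\in(0,1)$, so that $p_\theta=q$ and $G(\theta)$ is precisely the operator whose $q$-norm defines the quantity at $q$. Then $\Vert G(\theta)\Vert_q\le M_0^{1-\theta}M_1^{\theta}$; raising to the power $q$ and using $(1-\theta)/p=1/q$ will yield that the quantity at $q$ is at most the quantity at $p$, once I show $M_1\le1$ and $M_0^{p}\le(\text{quantity at }p)$.

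The bound $M_1\le1$ is immediate, since $G(1+it)$ has purely imaginary exponents and is thus a product of partial isometries and the isometry $U$, of operator norm at most one. The estimate on $M_0$ is where the swivel does its work and is the step to handle most carefully: in $G(it)$ the factor $\mathcal{N}(\rho)^{-it/p}$ commutes with $\mathcal{N}(\rho)^{1/p}$ and can be moved to the far left and discarded by unitary invariance; the factor $\mathcal{N}(\sigma)^{it/p}$ commutes with both $W$ and $\mathcal{N}(\sigma)^{-1/p}$, so $W\mathcal{N}(\sigma)^{-1/p}\mathcal{N}(\sigma)^{it/p}=\tilde{W}\mathcal{N}(\sigma)^{-1/p}$ with $\tilde{W}=W\mathcal{N}(\sigma)^{it/p}$ again a valid swivel; and the remaining $\sigma^{-it/p}$ on the far right leaves the $p$-norm unchanged because $\sigma^{1/p}\sigma^{-it/p}$ has the same singular values as $\sigma^{1/p}$---crucially, this last phase need not be absorbed into a swivel (there is none on $\sigma$ here), so unitary invariance alone suffices. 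Hence $M_0=\sup_t\Vert\mathcal{N}(\rho)^{1/p}\tilde{W}\mathcal{N}(\sigma)^{-1/p}U\sigma^{1/p}\Vert_p$ is bounded by the maximum over all swivels, i.e.\ by the quantity at $p$ raised to the power $1/p$, as needed. The only real obstacle is this bookkeeping of the three imaginary phases---two absorbed into a relabeled swivel or a discarded left unitary, one handled by unitary invariance of the norm---which I would record as explicit commutation identities.
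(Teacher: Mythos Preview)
Your proposal is correct and is essentially the paper's proof with a harmless reparametrization: the paper fixes $q<p$ and applies Theorem~\ref{thm:hadamard} to $G(z)=[\mathcal{N}(\rho)^{z/q}V_{\mathcal{N}(\sigma)}\mathcal{N}(\sigma)^{-z/q}\otimes I_E]U\sigma^{z/q}$ with $p_0=\infty$, $p_1=q$, $\theta=q/p$, so that $G(\theta)$ gives the quantity at $p$ and $M_1$ absorbs the swivel to bound by the quantity at $q$, while you swap the labels $p\leftrightarrow q$ and effectively substitute $z\mapsto 1-z$, placing the swivel absorption at $M_0$ instead of $M_1$. The value-at-$p=2$ computation is identical to the paper's; the only cosmetic difference is that you preselect the maximizing swivel $W$ (via compactness) whereas the paper carries an arbitrary $V_{\mathcal{N}(\sigma)}$ and maximizes at the end.
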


\begin{proof}
This ends up being another application of the Hadamard three-line theorem,
using techniques similar to what we have used previously. For $q\in
\lbrack2,\infty)$, $q<p$, and $V_{\mathcal{N}(\sigma)}$ a fixed unitary
commuting with $\mathcal{N}(\sigma)$, pick%
\begin{align}
G\left(  z\right)   &  =\left[  \mathcal{N}(\rho)^{z/q}V_{\mathcal{N}(\sigma
)}\mathcal{N}(\sigma)^{-z/q}\otimes I_{E}\right]  U\sigma^{z/q},\\
p_{0}  &  =\infty,\\
p_{1}  &  =q,\\
\theta &  =q/p,
\end{align}
which implies that $p_{\theta}=p$. Applying Theorem~\ref{thm:hadamard}\ gives%
\begin{equation}
\left\Vert G\left(  \theta\right)  \right\Vert _{p}\leq\left[  \sup
_{t\in\mathbb{R}}\left\Vert G\left(  it\right)  \right\Vert _{\infty}\right]
^{1-\theta}\left[  \sup_{t\in\mathbb{R}}\left\Vert G\left(  1+it\right)
\right\Vert _{q}\right]  ^{\theta}. \label{eq:app-3-line-1}%
\end{equation}
So we evaluate these terms to find%
\begin{align}
\left\Vert G\left(  \theta\right)  \right\Vert _{p}  &  =\left\Vert \left[
\mathcal{N}(\rho)^{\theta/q}V_{\mathcal{N}(\sigma)}\mathcal{N}(\sigma
)^{-\theta/q}\otimes I_{E}\right]  U\sigma^{\theta/q}\right\Vert _{p}\\
&  =\left\Vert \left[  \mathcal{N}(\rho)^{1/p}V_{\mathcal{N}(\sigma
)}\mathcal{N}(\sigma)^{-1/p}\otimes I_{E}\right]  U\sigma^{1/p}\right\Vert
_{p},\\
\sup_{t\in\mathbb{R}}\left\Vert G\left(  it\right)  \right\Vert _{\infty}  &
=\sup_{t\in\mathbb{R}}\left\Vert \left[  \mathcal{N}(\rho)^{it/q}%
V_{\mathcal{N}(\sigma)}\mathcal{N}(\sigma)^{-it/q}\otimes I_{E}\right]
U\sigma^{it/q}\right\Vert _{\infty}\\
&  \leq1,\\
\sup_{t\in\mathbb{R}}\left\Vert G\left(  1+it\right)  \right\Vert _{q}  &
=\sup_{t\in\mathbb{R}}\left\Vert \left[  \mathcal{N}(\rho)^{\left(
1+it\right)  /q}V_{\mathcal{N}(\sigma)}\mathcal{N}(\sigma)^{-\left(
1+it\right)  /q}\otimes I_{E}\right]  U\sigma^{\left(  1+it\right)
/q}\right\Vert _{q}\\
&  =\sup_{t\in\mathbb{R}}\left\Vert \left[  \mathcal{N}(\rho)^{1/q}%
V_{\mathcal{N}(\sigma)}\mathcal{N}(\sigma)^{-it/q}\mathcal{N}(\sigma
)^{-1/q}\otimes I_{E}\right]  U\sigma^{1/q}\right\Vert _{q}\\
&  \leq\max_{W_{\mathcal{N}(\sigma)}}\left\Vert \left[  \mathcal{N}%
(\rho)^{1/q}W_{\mathcal{N}(\sigma)}\mathcal{N}(\sigma)^{-1/q}\otimes
I_{E}\right]  U\sigma^{1/q}\right\Vert _{q}.
\end{align}
Putting everything together, we find that for $2\leq q<p$, the following
inequality holds%
\begin{multline}
\max_{V_{\mathcal{N}(\sigma)}}\left\Vert \left[  \mathcal{N}(\rho
)^{1/p}V_{\mathcal{N}(\sigma)}\mathcal{N}(\sigma)^{-1/p}\otimes I_{E}\right]
U\sigma^{1/p}\right\Vert _{p}^{p}\\
\leq\max_{W_{\mathcal{N}(\sigma)}}\left\Vert \left[  \mathcal{N}(\rho
)^{1/q}W_{\mathcal{N}(\sigma)}\mathcal{N}(\sigma)^{-1/q}\otimes I_{E}\right]
U\sigma^{1/q}\right\Vert _{q}^{q},
\end{multline}
since the inequality from (\ref{eq:app-3-line-1}) holds for all
$V_{\mathcal{N}(\sigma)}$. This establishes the first statement in the proposition.

For the second statement, consider evaluating
\eqref{eq:r-s-N-trace-quant-p}\ at $p=2$ for any choice of $V_{\mathcal{N}%
(\sigma)}$:%
\begin{align}
&  \left\Vert \left[  \mathcal{N}(\rho)^{1/2}V_{\mathcal{N}(\sigma
)}\mathcal{N}(\sigma)^{-1/2}\otimes I_{E}\right]  U\sigma^{1/2}\right\Vert
_{2}^{2}\nonumber\\
&  =\operatorname{Tr}\left\{  \sigma^{1/2}U^{\dag}\left[  \mathcal{N}%
(\sigma)^{-1/2}V_{\mathcal{N}(\sigma)}^{\dag}\mathcal{N}(\rho)V_{\mathcal{N}%
(\sigma)}\mathcal{N}(\sigma)^{-1/2}\otimes I_{E}\right]  U\sigma^{1/2}\right\}
\\
&  =\operatorname{Tr}\left\{  \sigma^{1/2}\mathcal{N}^{\dag}\left[
\mathcal{N}(\sigma)^{-1/2}V_{\mathcal{N}(\sigma)}^{\dag}\mathcal{N}%
(\rho)V_{\mathcal{N}(\sigma)}\mathcal{N}(\sigma)^{-1/2}\right]  \sigma
^{1/2}\right\} \\
&  =\operatorname{Tr}\left\{  \mathcal{N}(\sigma)\mathcal{N}(\sigma
)^{-1/2}V_{\mathcal{N}(\sigma)}^{\dag}\mathcal{N}(\rho)V_{\mathcal{N}(\sigma
)}\mathcal{N}(\sigma)^{-1/2}\right\} \\
&  =\operatorname{Tr}\left\{  \Pi_{\mathcal{N}(\sigma)}V_{\mathcal{N}(\sigma
)}^{\dag}\mathcal{N}(\rho)V_{\mathcal{N}(\sigma)}\right\} \\
&  =\operatorname{Tr}\left\{  \Pi_{\mathcal{N}(\sigma)}\mathcal{N}%
(\rho)\right\} \\
&  =1.
\end{align}

\end{proof}

\begin{corollary}
Let $\rho_{ABC}$ be a density operator. Then the following quantity is
monotone non-increasing for $\alpha\in\lbrack0,1)$ and takes a maximum value
of one at $\alpha=0$:%
\begin{equation}
\max_{V_{\rho_{C}}}\operatorname{Tr}\left\{  \left(  \rho_{AC}^{\left(
1-\alpha\right)  /2}V_{\rho_{C}}\rho_{C}^{\left(  \alpha-1\right)  /2}%
\rho_{BC}^{1-\alpha}\rho_{C}^{\left(  \alpha-1\right)  /2}V_{\rho_{C}}^{\dag
}\rho_{AC}^{\left(  1-\alpha\right)  /2}\right)  ^{1/\left(  1-\alpha\right)
}\right\}  .
\end{equation}
If $\rho_{ABC}$ is a positive definite, then the same quantity is monotone
non-decreasing for $\alpha\in(1,2]$ and takes a maximum value of one at
$\alpha=2$.
\end{corollary}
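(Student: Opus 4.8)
The plan is to obtain this Corollary as a specialization of Proposition~\ref{prop:zhang-conj} to the operators underlying the conditional mutual information, namely $\rho=\rho_{ABC}$, $\sigma=\rho_{AC}\otimes I_{B}$, and $\mathcal{N}=\operatorname{Tr}_{A}$. For this choice the Stinespring dilation of $\operatorname{Tr}_{A}$ is simply the identity ($U=I$ with environment $E=A$), so that $\mathcal{N}(\rho)=\rho_{BC}$ and $\mathcal{N}(\sigma)=\rho_{C}\otimes I_{B}$. First I would use the helper identities \eqref{eq:helper-1}--\eqref{eq:helper-2} to rewrite the trace quantity as a Schatten-norm expression: writing the inner positive operator as $X^{\dag}X$ with $X=\rho_{BC}^{(1-\alpha)/2}V_{\rho_{C}}\rho_{C}^{(\alpha-1)/2}\rho_{AC}^{(1-\alpha)/2}$ and using that $V_{\rho_{C}}$ commutes with $\rho_{C}$, the Corollary's quantity for $\alpha\in[0,1)$ becomes $\max_{V_{\rho_{C}}}\Vert X\Vert_{p}^{p}$ with $p=2/(1-\alpha)\in[2,\infty)$, which is exactly \eqref{eq:r-s-N-trace-quant-p} under the above substitution, except that the swivel is restricted to $\mathbb{V}_{\rho_{C}}$ rather than $\mathbb{V}_{\mathcal{N}(\sigma)}$.

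The one point that requires care is precisely this restriction of the swivel class. Although $\mathbb{V}_{\rho_{C}}$ (embedded as $I_{B}\otimes V_{\rho_{C}}$) is strictly smaller than $\mathbb{V}_{\rho_{C}\otimes I_{B}}$, I would observe that the three-line argument proving Proposition~\ref{prop:zhang-conj} goes through verbatim with the swivel confined to $\mathbb{V}_{\rho_{C}}$. Indeed, the only place the swivel class is used is the step that absorbs the boundary rotation $\mathcal{N}(\sigma)^{-it/q}$ into the optimization variable; here $\mathcal{N}(\sigma)^{-it/q}=\rho_{C}^{-it/q}\otimes I_{B}$ already belongs to $\mathbb{V}_{\rho_{C}}$, so the product $V_{\rho_{C}}\rho_{C}^{-it/q}$ remains in $\mathbb{V}_{\rho_{C}}$ and the argument closes. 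This yields monotone non-increase in $p$, hence monotone non-increase in $\alpha$ on $[0,1)$, and the value at $\alpha=0$ (i.e. $p=2$) is $\operatorname{Tr}\{\Pi_{\rho_{C}}\rho_{BC}\}=\operatorname{Tr}\{\rho_{C}\}=1$, the support containment $\operatorname{supp}(\rho_{ABC})\subseteq\operatorname{supp}(\rho_{AC}\otimes I_{B})$ holding automatically for any density operator.

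For the range $\alpha\in(1,2]$ the exponent $1/(1-\alpha)$ is negative, so the quantity is no longer a Schatten norm and Theorem~\ref{thm:hadamard} does not apply directly; this is where I expect the main work to lie and where positive definiteness of $\rho_{ABC}$ is essential. The idea is to invert: since $X$ is now invertible, $\operatorname{Tr}\{(X^{\dag}X)^{1/(1-\alpha)}\}=\operatorname{Tr}\{((X^{-1})^{\dag}X^{-1})^{1/(\alpha-1)}\}=\Vert X^{-1}\Vert_{\tilde p}^{\tilde p}$ with $\tilde p=2/(\alpha-1)\in[2,\infty)$. Taking an adjoint and relabeling $V_{\rho_{C}}\mapsto V_{\rho_{C}}^{\dag}$, one finds that $\max_{V_{\rho_{C}}}\Vert(X^{-1})^{\dag}\Vert_{\tilde p}^{\tilde p}$ is again precisely the quantity \eqref{eq:r-s-N-trace-quant-p} at index $\tilde p$. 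Applying the monotonicity already established (non-increasing in the Schatten index on $[2,\infty)$) and noting that $\tilde p$ is decreasing in $\alpha$ converts this into monotone non-decrease in $\alpha$ on $(1,2]$, with value $1$ attained at $\alpha=2$ (where $\tilde p=2$). The bookkeeping of adjoints and inverses in this reduction, together with verifying that the swivel restriction survives the same three-line argument, is the only genuinely delicate part; everything else is a direct translation of Proposition~\ref{prop:zhang-conj}.
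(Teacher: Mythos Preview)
Your proposal is correct. For the range $\alpha\in[0,1)$ your approach is the same as the paper's---specialize Proposition~\ref{prop:zhang-conj} to $\rho=\rho_{ABC}$, $\sigma=\rho_{AC}\otimes I_{B}$, $\mathcal{N}=\operatorname{Tr}_{A}$---and your extra remark that the three-line argument survives the restriction of the swivel from $\mathbb{V}_{\rho_{C}\otimes I_{B}}$ to $\mathbb{V}_{\rho_{C}}$ (since the boundary rotation $\mathcal{N}(\sigma)^{-it/q}=\rho_{C}^{-it/q}\otimes I_{B}$ already lies in the smaller class) is a valid point that the paper does not spell out.

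For $\alpha\in(1,2]$ your inversion route works, but the paper's argument is more direct and avoids the Schatten-norm bookkeeping altogether: one simply observes that the full trace expression is \emph{invariant} under the substitution $\alpha\mapsto\beta:=2-\alpha$. Indeed, the inner positive operator at parameter $\alpha$ is the inverse of the one at parameter $\beta$ (here positive definiteness of $\rho_{ABC}$ is used), and the outer exponent transforms as $1/(1-\alpha)=-1/(1-\beta)$, undoing the inversion. Hence monotone non-decrease on $(1,2]$ and the value $1$ at $\alpha=2$ are literally the first statement read through the change of variable $\beta=2-\alpha\in[0,1)$. Your $X\mapsto X^{-1}\mapsto (X^{-1})^{\dag}$ computation is an explicit unpacking of this same symmetry, so the two approaches are equivalent; the paper's version is just shorter.
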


\begin{proof}
The first statement follows by applying Proposition~\ref{prop:zhang-conj}\ for
the choices $\rho=\rho_{ABC}$, $\sigma=\rho_{AC}\otimes I_{B}$, and
$\mathcal{N}=\operatorname{Tr}_{A}$. The second statement follows because%
\begin{multline}
\left(  \rho_{AC}^{\left(  1-\alpha\right)  /2}V_{\rho_{C}}\rho_{C}^{\left(
\alpha-1\right)  /2}\rho_{BC}^{1-\alpha}\rho_{C}^{\left(  \alpha-1\right)
/2}V_{\rho_{C}}^{\dag}\rho_{AC}^{\left(  1-\alpha\right)  /2}\right)
^{1/\left(  1-\alpha\right)  }\\
=\left(  \rho_{AC}^{\left(  1-\beta\right)  /2}V_{\rho_{C}}\rho_{C}^{\left(
\beta-1\right)  /2}\rho_{BC}^{1-\beta}\rho_{C}^{\left(  \beta-1\right)
/2}V_{\rho_{C}}^{\dag}\rho_{AC}^{\left(  1-\beta\right)  /2}\right)
^{1/\left(  1-\beta\right)  },
\end{multline}
for $\beta=2-\alpha\in(0,1)$ and then we apply the first statement.
\end{proof}

\section{Conclusion}

\label{sec:conclusion}The main contribution of this paper is a general method,
the \textquotedblleft swiveled R\'{e}nyi entropic\textquotedblright\ approach,
for constructing $\alpha$-R\'{e}nyi generalizations of a quantum information
measure that are monotone non-decreasing in the parameter $\alpha$. The
swiveled R\'{e}nyi entropies are discontinuous at $\alpha=1$ and do not
converge to von Neumann entropy-based quantities in the limit as
$\alpha\rightarrow1$. We suspect that the swiveled R\'{e}nyi entropies might
be helpful in understanding refinements of quantum information-processing
tasks, but this remains unclear due to the lack of convergence at $\alpha=1$.
At the very least, the technique recovers the recent refinements
\cite{W15}\ of fundamental entropy inequalities such as monotonicity of
quantum relative entropy \cite{Lindblad1975,U77}\ and strong subadditivity
\cite{LR73,PhysRevLett.30.434}, in addition to providing new refinements for
these entropy inequalities.

The most important open question going forward from here is to determine
R\'{e}nyi entropies which satisfy all of the desiderata that one would have
for R\'{e}nyi generalizations of quantum information measures. We find it
curious that with the proposal in \cite{BSW15a}, one can prove convergence to
a von Neumann entropy-based quantity in the limit as $\alpha\rightarrow1$, but
we are still unable to establish monotonicity in $\alpha$. However, with the
swiveled R\'{e}nyi entropies proposed here, the situation is reversed.

One might also consider developing chain rules for the swiveled R\'{e}nyi
entropies, along the lines established in \cite{D14}, but it is unclear how
useful this would be in practice, given that the quantities do not generally
converge at $\alpha=1$.

\bigskip

\textbf{Acknowledgements.} We are grateful to Salman Beigi for insightful
discussions about the topic of this paper. FD acknowledges the support of the
Czech Science Foundation GA CR project P202/12/1142 and the support of the EU
FP7 under grant agreement no 323970 (RAQUEL). MMW is grateful to Stephanie
Wehner and her group for hospitality during a research visit to TU\ Delft (May
2015), to Renato Renner and his group for the same during a visit to ETH
Zurich (June 2015), and acknowledges support from startup funds from the
Department of Physics and Astronomy at LSU, the NSF\ under Award
No.~CCF-1350397, and the DARPA Quiness Program through US Army Research Office
award W31P4Q-12-1-0019.

\appendix

\section{Limit as $\alpha\rightarrow1$}

\label{app:Delta-limit-alpha-1}

\begin{definition}
Let $\rho$, $\sigma$, and $\mathcal{N}$ be as given in
Definition~\ref{def:rho-sig-N}. For $\alpha\in\left(  0,1\right)  \cup\left(
1,\infty\right)  $, let%
\begin{equation}
\Delta_{\alpha}(\rho,\sigma,\mathcal{N})=\frac{1}{\alpha-1}\log Q_{\alpha
}(\rho,\sigma,\mathcal{N}),
\end{equation}
where%
\begin{equation}
Q_{\alpha}(\rho,\sigma,\mathcal{N})\equiv\left\Vert \left(  \mathcal{N}%
(\rho)^{\left(  1-\alpha\right)  /2}\mathcal{N}(\sigma)^{\left(
\alpha-1\right)  /2}\otimes I_{E}\right)  U\sigma^{\left(  1-\alpha\right)
/2}\rho^{\alpha/2}\right\Vert _{2}^{2}.
\end{equation}

\end{definition}

\begin{theorem}
Let $\rho$, $\sigma$, and $\mathcal{N}$ be as given in
Definition~\ref{def:rho-sig-N} and such that $\operatorname{supp}%
(\rho)\subseteq\operatorname{supp}(\sigma)$. The following limit holds%
\begin{equation}
\lim_{\alpha\rightarrow1}\Delta_{\alpha}(\rho,\sigma,\mathcal{N})=D(\rho
\Vert\sigma)-D\left(  \mathcal{N}(\rho)\Vert\mathcal{N}(\sigma)\right)  .
\end{equation}

\end{theorem}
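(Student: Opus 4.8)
The plan is to set $\Delta_\alpha(\rho,\sigma,\mathcal{N})=\frac{\log Q_\alpha(\rho,\sigma,\mathcal{N})}{\alpha-1}$ and to recognize this as a $0/0$ indeterminate form at $\alpha=1$, so that the limit equals the derivative of $\log Q_\alpha$ there. First I would pass to the trace form
\begin{equation}
Q_\alpha(\rho,\sigma,\mathcal{N})=\operatorname{Tr}\left\{\rho^\alpha\sigma^{(1-\alpha)/2}\mathcal{N}^\dagger\left(\mathcal{N}(\sigma)^{(\alpha-1)/2}\mathcal{N}(\rho)^{1-\alpha}\mathcal{N}(\sigma)^{(\alpha-1)/2}\right)\sigma^{(1-\alpha)/2}\right\},
\end{equation}
using $\Vert X\Vert_2^2=\operatorname{Tr}\{X^\dagger X\}$, the identity $\mathcal{N}^\dagger(\cdot)=U^\dagger((\cdot)\otimes I_E)U$, and cyclicity. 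At $\alpha=1$ every operator power becomes a support projection; since $\operatorname{supp}(\rho)\subseteq\operatorname{supp}(\sigma)$ gives $\rho\preceq c\sigma$ for some $c>0$ and hence $\operatorname{supp}(\mathcal{N}(\rho))\subseteq\operatorname{supp}(\mathcal{N}(\sigma))$ after applying the positive map $\mathcal{N}$, the inner projections collapse via $\Pi_{\mathcal{N}(\sigma)}\Pi_{\mathcal{N}(\rho)}=\Pi_{\mathcal{N}(\rho)}$ and one obtains $Q_1=\operatorname{Tr}\{\mathcal{N}(\rho)\Pi_{\mathcal{N}(\rho)}\}=\operatorname{Tr}\{\mathcal{N}(\rho)\}=1$. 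Because each eigenvalue power $p^\alpha$ is analytic in $\alpha$ on the support, $Q_\alpha$ is smooth and positive near $\alpha=1$, so $\log Q_\alpha$ is differentiable there and $\lim_{\alpha\to1}\Delta_\alpha=\left.\frac{d}{d\alpha}\log Q_\alpha\right|_{\alpha=1}=Q_1'$.

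The bulk of the work is then to evaluate $Q_1'$ by the product rule. Differentiating each of the six analytic factors ($\rho^\alpha$, the two copies of $\sigma^{(1-\alpha)/2}$, the two copies of $\mathcal{N}(\sigma)^{(\alpha-1)/2}$, and $\mathcal{N}(\rho)^{1-\alpha}$) in turn while freezing the others at their $\alpha=1$ projections inserts the logarithms $\rho\log\rho$, $-\tfrac12\log\sigma$, $\tfrac12\log\mathcal{N}(\sigma)$, and $-\log\mathcal{N}(\rho)$, respectively. The one algebraic fact driving every simplification is the lemma that $\mathcal{N}^\dagger(\Pi_{\mathcal{N}(\rho)})$ acts as the identity on $\operatorname{supp}(\rho)$, equivalently $\mathcal{N}^\dagger(I-\Pi_{\mathcal{N}(\rho)})\,\rho=0$. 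This holds because $\operatorname{Tr}\{\rho\,\mathcal{N}^\dagger(I-\Pi_{\mathcal{N}(\rho)})\}=\operatorname{Tr}\{\mathcal{N}(\rho)(I-\Pi_{\mathcal{N}(\rho)})\}=0$ with both $\rho$ and $\mathcal{N}^\dagger(I-\Pi_{\mathcal{N}(\rho)})$ positive semi-definite, which forces $\mathcal{N}^\dagger(I-\Pi_{\mathcal{N}(\rho)})\rho^{1/2}=0$ and hence annihilation of anything with range in $\operatorname{supp}(\rho)$.

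The step I expect to be the main obstacle is the pair of $\sigma$-derivative terms, because there the insertion produces $\rho\log\sigma+\log\sigma\,\rho$, whose support need not lie in $\operatorname{supp}(\rho)$, so the naive ``projections act trivially'' reasoning breaks down. This is precisely where the lemma is needed: moving $\Pi_{\mathcal{N}(\rho)}$ across the trace through $\operatorname{Tr}\{\mathcal{N}(X)B\}=\operatorname{Tr}\{X\mathcal{N}^\dagger(B)\}$ and cyclicity, the unwanted remainder takes the form $\operatorname{Tr}\{(\log\sigma)\,\mathcal{N}^\dagger(I-\Pi_{\mathcal{N}(\rho)})\,\rho\}$, which vanishes by $\mathcal{N}^\dagger(I-\Pi_{\mathcal{N}(\rho)})\rho=0$. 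The $\mathcal{N}(\sigma)$- and $\mathcal{N}(\rho)$-terms are routine, collapsing through $\Pi_{\mathcal{N}(\rho)}\Pi_{\mathcal{N}(\sigma)}=\Pi_{\mathcal{N}(\rho)}$ and trace preservation. Summing the six contributions yields
\begin{equation}
Q_1'=\operatorname{Tr}\{\rho(\log\rho-\log\sigma)\}-\operatorname{Tr}\{\mathcal{N}(\rho)(\log\mathcal{N}(\rho)-\log\mathcal{N}(\sigma))\}=D(\rho\Vert\sigma)-D(\mathcal{N}(\rho)\Vert\mathcal{N}(\sigma)),
\end{equation}
as claimed. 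An alternative would be to expand each power as $X^{\epsilon}=\Pi_X+\epsilon\log X+O(\epsilon^2)$ and collect first-order terms, but the product-rule bookkeeping above appears cleanest.
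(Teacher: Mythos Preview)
Your proposal is correct and follows essentially the same approach as the paper: rewrite $\Delta_\alpha$ as $(\log Q_\alpha)/(\alpha-1)$, verify $Q_1=1$ via support relations, identify the limit with $Q_1'$, and compute $Q_1'$ by the product rule over the six operator factors, collapsing each term using the support projections. The one minor difference is in how the key simplification is justified: the paper passes to the isometric extension and invokes $\operatorname{supp}(U\rho U^\dagger)\subseteq\operatorname{supp}(\mathcal{N}(\rho)\otimes I_E)$ (cited from \cite{R05}) to absorb $\Pi_{\mathcal{N}(\rho)}\otimes I_E$, whereas you stay in the adjoint-channel picture and prove the equivalent statement $\mathcal{N}^\dagger(I-\Pi_{\mathcal{N}(\rho)})\rho=0$ directly from the positive-semidefinite trace-zero argument. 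Your route is slightly more self-contained, but the two are interchangeable and the overall architecture is the same.
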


\begin{proof}
Let $\Pi_{\omega}$ denote the projection onto the support of $\omega$. From
the condition $\operatorname{supp}(\rho)\subseteq\operatorname{supp}(\sigma)$,
it follows that $\operatorname{supp}\left(  \mathcal{N}(\rho)\right)
\subseteq\operatorname{supp}\left(  \mathcal{N}(\sigma)\right)  $
\cite[Appendix~B.4]{R05}. We can then conclude that%
\begin{equation}
\Pi_{\sigma}\Pi_{\rho}=\Pi_{\rho},\ \ \ \ \ \ \ \ \Pi_{\mathcal{N}(\rho)}%
\Pi_{\mathcal{N}(\sigma)}=\Pi_{\mathcal{N}(\rho)}. \label{eq:Pi-rho}%
\end{equation}
We also know that $\operatorname{supp}\left(  U\rho U^{\dag}\right)
\subseteq\operatorname{supp}\left(  \mathcal{N}(\rho)\otimes I_{E}\right)  $
\cite[Appendix~B.4]{R05}, so that%
\begin{equation}
\left(  \Pi_{\mathcal{N}(\rho)}\otimes I_{E}\right)  \Pi_{U\rho U^{\dag}}%
=\Pi_{U\rho U^{\dag}}. \label{eq:Pi-U-rho}%
\end{equation}
When $\alpha=1$, we find from the above facts that%
\begin{align}
Q_{1}(\rho,\sigma,\mathcal{N})  &  =\left\Vert \left(  \Pi_{\mathcal{N}(\rho
)}\Pi_{\mathcal{N}(\sigma)}\otimes I_{E}\right)  U\Pi_{\sigma}\rho
^{1/2}\right\Vert _{2}^{2}\\
&  =\left\Vert \left(  \Pi_{\mathcal{N}(\rho)}\otimes I_{E}\right)  U\Pi
_{\rho}\rho^{1/2}\right\Vert _{2}^{2}\\
&  =\left\Vert \left(  \Pi_{\mathcal{N}(\rho)}\otimes I_{E}\right)  \Pi_{U\rho
U^{\dag}}U\rho^{1/2}\right\Vert _{2}^{2}\\
&  =\left\Vert \Pi_{U\rho U^{\dag}}U\rho^{1/2}\right\Vert _{2}^{2}\\
&  =\left\Vert \rho^{1/2}\right\Vert _{2}^{2}\\
&  =1.
\end{align}
So from the definition of the derivative, this means that%
\begin{align}
\lim_{\alpha\rightarrow1}\Delta_{\alpha}(\rho,\sigma,\mathcal{N})  &
=\lim_{\alpha\rightarrow1}\frac{\log Q_{\alpha}(\rho,\sigma,\mathcal{N})-\log
Q_{1}(\rho,\sigma,\mathcal{N})}{\alpha-1}\\
&  =\left.  \frac{d}{d\alpha}\left[  \log Q_{\alpha}(\rho,\sigma
,\mathcal{N})\right]  \right\vert _{\alpha=1}\\
&  =\frac{1}{Q_{1}(\rho,\sigma,\mathcal{N})}\left.  \frac{d}{d\alpha}\left[
Q_{\alpha}(\rho,\sigma,\mathcal{N})\right]  \right\vert _{\alpha=1}\\
&  =\left.  \frac{d}{d\alpha}\left[  Q_{\alpha}(\rho,\sigma,\mathcal{N}%
)\right]  \right\vert _{\alpha=1}. \label{eq:limit-a-1-exp}%
\end{align}
Let $\alpha^{\prime}\equiv\alpha-1$. Consider that%
\begin{equation}
Q_{\alpha}(\rho,\sigma,\mathcal{N})=\operatorname{Tr}\left\{  \rho^{\alpha
}\sigma^{-\alpha^{\prime}/2}\mathcal{N}^{\dag}\left(  \mathcal{N}%
(\sigma)^{\alpha^{\prime}/2}\mathcal{N}(\rho)^{-\alpha^{\prime}}%
\mathcal{N}(\sigma)^{\alpha^{\prime}/2}\right)  \sigma^{-\alpha^{\prime}%
/2}\right\}  .
\end{equation}
Now we calculate $\frac{d}{d\alpha}Q_{\alpha}(\rho,\sigma,\mathcal{N})$:%
\begin{multline}
\frac{d}{d\alpha}\operatorname{Tr}\left\{  \rho^{\alpha}\sigma^{-\alpha
^{\prime}/2}\mathcal{N}^{\dag}\left(  \mathcal{N}(\sigma)^{\alpha^{\prime}%
/2}\mathcal{N}(\rho)^{-\alpha^{\prime}}\mathcal{N}(\sigma)^{\alpha^{\prime}%
/2}\right)  \sigma^{-\alpha^{\prime}/2}\right\} \\
=\operatorname{Tr}\left\{  \left[  \frac{d}{d\alpha}\rho^{\alpha}\right]
\sigma^{-\alpha^{\prime}/2}\mathcal{N}^{\dag}\left(  \mathcal{N}%
(\sigma)^{\alpha^{\prime}/2}\mathcal{N}(\rho)^{-\alpha^{\prime}}%
\mathcal{N}(\sigma)^{\alpha^{\prime}/2}\right)  \sigma^{-\alpha^{\prime}%
/2}\right\} \\
+\operatorname{Tr}\left\{  \rho^{\alpha}\left[  \frac{d}{d\alpha}%
\sigma^{-\alpha^{\prime}/2}\right]  \mathcal{N}^{\dag}\left(  \mathcal{N}%
(\sigma)^{\alpha^{\prime}/2}\mathcal{N}(\rho)^{-\alpha^{\prime}}%
\mathcal{N}(\sigma)^{\alpha^{\prime}/2}\right)  \sigma^{-\alpha^{\prime}%
/2}\right\} \\
+\operatorname{Tr}\left\{  \rho^{\alpha}\sigma^{-\alpha^{\prime}/2}%
\mathcal{N}^{\dag}\left(  \left[  \frac{d}{d\alpha}\mathcal{N}(\sigma
)^{\alpha^{\prime}/2}\right]  \mathcal{N}(\rho)^{-\alpha^{\prime}}%
\mathcal{N}(\sigma)^{\alpha^{\prime}/2}\right)  \sigma^{-\alpha^{\prime}%
/2}\right\} \\
+\operatorname{Tr}\left\{  \rho^{\alpha}\sigma^{-\alpha^{\prime}/2}%
\mathcal{N}^{\dag}\left(  \mathcal{N}(\sigma)^{\alpha^{\prime}/2}\left[
\frac{d}{d\alpha}\mathcal{N}(\rho)^{-\alpha^{\prime}}\right]  \mathcal{N}%
(\sigma)^{\alpha^{\prime}/2}\right)  \sigma^{-\alpha^{\prime}/2}\right\} \\
+\operatorname{Tr}\left\{  \rho^{\alpha}\sigma^{-\alpha^{\prime}/2}%
\mathcal{N}^{\dag}\left(  \mathcal{N}(\sigma)^{\alpha^{\prime}/2}%
\mathcal{N}(\rho)^{-\alpha^{\prime}}\left[  \frac{d}{d\alpha}\mathcal{N}%
(\sigma)^{\alpha^{\prime}/2}\right]  \right)  \sigma^{-\alpha^{\prime}%
/2}\right\} \\
+\operatorname{Tr}\left\{  \rho^{\alpha}\sigma^{-\alpha^{\prime}/2}%
\mathcal{N}^{\dag}\left(  \mathcal{N}(\sigma)^{\alpha^{\prime}/2}%
\mathcal{N}(\rho)^{-\alpha^{\prime}}\mathcal{N}(\sigma)^{\alpha^{\prime}%
/2}\right)  \left[  \frac{d}{d\alpha}\sigma^{-\alpha^{\prime}/2}\right]
\right\}
\end{multline}%
\begin{multline}
=\Bigg[\operatorname{Tr}\left\{  \rho^{\alpha}\left[  \log\rho\right]
\sigma^{-\alpha^{\prime}/2}\mathcal{N}^{\dag}\left(  \mathcal{N}%
(\sigma)^{\alpha^{\prime}/2}\mathcal{N}(\rho)^{-\alpha^{\prime}}%
\mathcal{N}(\sigma)^{\alpha^{\prime}/2}\right)  \sigma^{-\alpha^{\prime}%
/2}\right\} \\
-\frac{1}{2}\operatorname{Tr}\left\{  \rho\left[  \log\sigma\right]
\sigma^{-\alpha^{\prime}/2}\mathcal{N}^{\dag}\left(  \mathcal{N}%
(\sigma)^{\alpha^{\prime}/2}\mathcal{N}(\rho)^{-\alpha^{\prime}}%
\mathcal{N}(\sigma)^{\alpha^{\prime}/2}\right)  \sigma^{-\alpha^{\prime}%
/2}\right\} \\
+\frac{1}{2}\operatorname{Tr}\left\{  \rho\sigma^{-\alpha^{\prime}%
/2}\mathcal{N}^{\dag}\left(  \left[  \log\mathcal{N}(\sigma)\right]
\mathcal{N}(\sigma)^{\alpha^{\prime}/2}\mathcal{N}(\rho)^{-\alpha^{\prime}%
}\mathcal{N}(\sigma)^{\alpha^{\prime}/2}\right)  \sigma^{-\alpha^{\prime}%
/2}\right\} \\
-\operatorname{Tr}\left\{  \rho\sigma^{-\alpha^{\prime}/2}\mathcal{N}^{\dag
}\left(  \mathcal{N}(\sigma)^{\alpha^{\prime}/2}\left[  \log\mathcal{N}%
(\rho)\right]  \mathcal{N}(\rho)^{-\alpha^{\prime}}\mathcal{N}(\sigma
)^{\alpha^{\prime}/2}\right)  \sigma^{-\alpha^{\prime}/2}\right\} \\
+\frac{1}{2}\operatorname{Tr}\left\{  \rho\sigma^{-\alpha^{\prime}%
/2}\mathcal{N}^{\dag}\left(  \mathcal{N}(\sigma)^{\alpha^{\prime}%
/2}\mathcal{N}(\rho)^{-\alpha^{\prime}}\mathcal{N}(\sigma)^{\alpha^{\prime}%
/2}\left[  \log\mathcal{N}(\sigma)\right]  \right)  \sigma^{-\alpha^{\prime
}/2}\right\} \\
-\frac{1}{2}\operatorname{Tr}\left\{  \rho\sigma^{-\alpha^{\prime}%
/2}\mathcal{N}^{\dag}\left(  \mathcal{N}(\sigma)^{\alpha^{\prime}%
/2}\mathcal{N}(\rho)^{-\alpha^{\prime}}\mathcal{N}(\sigma)^{\alpha^{\prime}%
/2}\right)  \sigma^{-\alpha^{\prime}/2}\left[  \log\sigma\right]  \right\}
\Bigg].
\end{multline}
Taking the limit as $\alpha\rightarrow1$ gives%
\begin{multline}
\left.  \frac{d}{d\alpha}Q_{\alpha}(\rho,\sigma,\mathcal{N})\right\vert
_{\alpha=1}=\operatorname{Tr}\left\{  \rho\left[  \log\rho\right]  \Pi
_{\sigma}\mathcal{N}^{\dag}\left(  \Pi_{\mathcal{N}(\sigma)}\Pi_{\mathcal{N}%
(\rho)}\Pi_{\mathcal{N}(\sigma)}\right)  \Pi_{\sigma}\right\} \\
-\frac{1}{2}\operatorname{Tr}\left\{  \rho\left[  \log\sigma\right]
\Pi_{\sigma}\mathcal{N}^{\dag}\left(  \Pi_{\mathcal{N}(\sigma)}\Pi
_{\mathcal{N}(\rho)}\Pi_{\mathcal{N}(\sigma)}\right)  \Pi_{\sigma}\right\} \\
+\frac{1}{2}\operatorname{Tr}\left\{  \rho\Pi_{\sigma}\mathcal{N}^{\dag
}\left(  \left[  \log\mathcal{N}(\sigma)\right]  \Pi_{\mathcal{N}(\sigma)}%
\Pi_{\mathcal{N}(\rho)}\Pi_{\mathcal{N}(\sigma)}\right)  \Pi_{\sigma}\right\}
\\
-\operatorname{Tr}\left\{  \rho\Pi_{\sigma}\mathcal{N}^{\dag}\left(
\Pi_{\mathcal{N}(\sigma)}\left[  \log\mathcal{N}(\rho)\right]  \Pi
_{\mathcal{N}(\rho)}\Pi_{\mathcal{N}(\sigma)}\right)  \Pi_{\sigma}\right\} \\
+\frac{1}{2}\operatorname{Tr}\left\{  \rho\Pi_{\sigma}\mathcal{N}^{\dag
}\left(  \Pi_{\mathcal{N}(\sigma)}\Pi_{\mathcal{N}(\rho)}\Pi_{\mathcal{N}%
(\sigma)}\left[  \log\mathcal{N}(\sigma)\right]  \right)  \Pi_{\sigma}\right\}
\\
-\frac{1}{2}\operatorname{Tr}\left\{  \rho\Pi_{\sigma}\mathcal{N}^{\dag
}\left(  \Pi_{\mathcal{N}(\sigma)}\Pi_{\mathcal{N}(\rho)}\Pi_{\mathcal{N}%
(\sigma)}\right)  \left[  \log\sigma\right]  \Pi_{\sigma}\right\}  .
\end{multline}
We now simplify the first four terms and note that the last two are Hermitian
conjugates of the second and third:%
\begin{multline}
\operatorname{Tr}\left\{  \rho\left[  \log\rho\right]  \Pi_{\sigma}%
\mathcal{N}^{\dag}\left(  \Pi_{\mathcal{N}(\sigma)}\Pi_{\mathcal{N}(\rho)}%
\Pi_{\mathcal{N}(\sigma)}\right)  \Pi_{\sigma}\right\}  =\operatorname{Tr}%
\left\{  \rho\left[  \log\rho\right]  \mathcal{N}^{\dag}\left(  \Pi
_{\mathcal{N}(\rho)}\right)  \right\} \\
=\operatorname{Tr}\left\{  \mathcal{N}\left(  \rho\left[  \log\rho\right]
\right)  \Pi_{\mathcal{N}(\rho)}\right\}  =\operatorname{Tr}\left\{
U\rho\left[  \log\rho\right]  U^{\dag}\left(  \Pi_{\mathcal{N}(\rho)}\otimes
I_{E}\right)  \right\} \\
=\operatorname{Tr}\left\{  \Pi_{U\rho U^{\dag}}U\rho\left[  \log\rho\right]
U^{\dag}\left(  \Pi_{\mathcal{N}(\rho)}\otimes I_{E}\right)  \right\}
=\operatorname{Tr}\left\{  \Pi_{U\rho U^{\dag}}U\rho\left[  \log\rho\right]
U^{\dag}\right\} \\
=\operatorname{Tr}\left\{  \rho\left[  \log\rho\right]  \right\}  ,
\end{multline}%
\begin{multline}
\operatorname{Tr}\left\{  \rho\left[  \log\sigma\right]  \Pi_{\sigma
}\mathcal{N}^{\dag}\left(  \Pi_{\mathcal{N}(\sigma)}\Pi_{\mathcal{N}(\rho)}%
\Pi_{\mathcal{N}(\sigma)}\right)  \Pi_{\sigma}\right\}  =\operatorname{Tr}%
\left\{  \rho\left[  \log\sigma\right]  \mathcal{N}^{\dag}\left(
\Pi_{\mathcal{N}(\rho)}\right)  \right\} \\
=\operatorname{Tr}\left\{  \mathcal{N}\left(  \rho\left[  \log\sigma\right]
\right)  \left(  \Pi_{\mathcal{N}(\rho)}\right)  \right\}  =\operatorname{Tr}%
\left\{  U\rho\left[  \log\sigma\right]  U^{\dag}\left(  \Pi_{\mathcal{N}%
(\rho)}\otimes I_{E}\right)  \right\} \\
=\operatorname{Tr}\left\{  \Pi_{U\rho U^{\dag}}U\rho U^{\dag}U\left[
\log\sigma\right]  U^{\dag}\left(  \Pi_{\mathcal{N}(\rho)}\otimes
I_{E}\right)  \right\}  =\operatorname{Tr}\left\{  U\rho U^{\dag}U\left[
\log\sigma\right]  U^{\dag}\right\} \\
=\operatorname{Tr}\left\{  \rho\left[  \log\sigma\right]  \right\}  ,
\end{multline}%
\begin{multline}
\operatorname{Tr}\left\{  \rho\Pi_{\sigma}\mathcal{N}^{\dag}\left(  \left[
\log\mathcal{N}(\sigma)\right]  \Pi_{\mathcal{N}(\sigma)}\Pi_{\mathcal{N}%
(\rho)}\Pi_{\mathcal{N}(\sigma)}\right)  \Pi_{\sigma}\right\}
=\operatorname{Tr}\left\{  \rho\mathcal{N}^{\dag}\left(  \left[
\log\mathcal{N}(\sigma)\right]  \Pi_{\mathcal{N}(\rho)}\right)  \right\} \\
=\operatorname{Tr}\left\{  \mathcal{N}(\rho)\left[  \log\mathcal{N}%
(\sigma)\right]  \Pi_{\mathcal{N}(\rho)}\right\}  =\operatorname{Tr}\left\{
\mathcal{N}(\rho)\left[  \log\mathcal{N}(\sigma)\right]  \right\}  ,
\end{multline}%
\begin{multline}
\operatorname{Tr}\left\{  \rho\Pi_{\sigma}\mathcal{N}^{\dag}\left(
\Pi_{\mathcal{N}(\sigma)}\left[  \log\mathcal{N}(\rho)\right]  \Pi
_{\mathcal{N}(\rho)}\Pi_{\mathcal{N}(\sigma)}\right)  \Pi_{\sigma}\right\}
=\operatorname{Tr}\left\{  \rho\mathcal{N}^{\dag}\left(  \left[
\log\mathcal{N}(\rho)\right]  \Pi_{\mathcal{N}(\rho)}\right)  \right\} \\
=\operatorname{Tr}\left\{  \mathcal{N}(\rho)\left(  \left[  \log
\mathcal{N}(\rho)\right]  \Pi_{\mathcal{N}(\rho)}\right)  \right\}
=\operatorname{Tr}\left\{  \mathcal{N}(\rho)\left[  \log\mathcal{N}%
(\rho)\right]  \right\}  .
\end{multline}
This then implies that the following equality holds%
\begin{multline}
\left.  \frac{d}{d\alpha}Q_{\alpha}(\rho,\sigma,\mathcal{N})\right\vert
_{\alpha=1}=\operatorname{Tr}\left\{  \rho\left[  \log\rho\right]  \right\}
-\operatorname{Tr}\left\{  \rho\left[  \log\sigma\right]  \right\}
\label{eq:last-deriv}\\
+\operatorname{Tr}\left\{  \mathcal{N}(\rho)\left[  \log\mathcal{N}%
(\sigma)\right]  \right\}  -\operatorname{Tr}\left\{  \mathcal{N}(\rho)\left[
\log\mathcal{N}(\rho)\right]  \right\}  .
\end{multline}
Putting together (\ref{eq:limit-a-1-exp}) and (\ref{eq:last-deriv}), we can
then conclude the statement of the theorem.
\end{proof}

\section{Auxiliary lemmas and proofs}

\label{app:auxiliary}

\begin{lemma}
\label{lem:max-continuous} Let $\mathcal{A}$ and $\mathcal{T}$ be compact
metric spaces, and let $f:\mathcal{A}\times\mathcal{T}\rightarrow\mathbb{R}$
be a continuous function. Then, $g,h:\mathcal{A}\rightarrow\mathbb{R}$,
defined as $g(\alpha)=\max_{t\in\mathcal{T}}f(\alpha,t)$ and $h(\alpha
)=\min_{t\in\mathcal{T}}f(\alpha,t)$ are continuous.
\end{lemma}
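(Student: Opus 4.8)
The plan is to reduce everything to the uniform continuity of $f$. Since $\mathcal{A}$ and $\mathcal{T}$ are compact metric spaces, their product $\mathcal{A}\times\mathcal{T}$ is again a compact metric space, so the continuous function $f$ is in fact \emph{uniformly} continuous on it. I would first record the two facts I need: (i) for each fixed $\alpha$, the map $t\mapsto f(\alpha,t)$ is continuous on the compact set $\mathcal{T}$, so by the extreme value theorem the maximum defining $g(\alpha)$ and the minimum defining $h(\alpha)$ are actually attained at points of $\mathcal{T}$; and (ii) for every $\varepsilon>0$ there is a $\delta>0$ such that $|f(\alpha,t)-f(\alpha',t)|<\varepsilon$ for \emph{all} $t\in\mathcal{T}$ whenever $d_{\mathcal{A}}(\alpha,\alpha')<\delta$. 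Fact (ii) is the crucial one: uniform continuity lets me control the variation in $\alpha$ uniformly over all $t$ at once, which is exactly what is needed to compare the two maxima.

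Next I would carry out a two-sided estimate. Fix $\alpha,\alpha'$ with $d_{\mathcal{A}}(\alpha,\alpha')<\delta$, and let $t_{\alpha}$ and $t_{\alpha'}$ be maximizers guaranteed by (i), so that $g(\alpha)=f(\alpha,t_{\alpha})$ and $g(\alpha')=f(\alpha',t_{\alpha'})$. Using $t_{\alpha}$ as a (generally suboptimal) choice at $\alpha'$ gives $g(\alpha')\geq f(\alpha',t_{\alpha})>f(\alpha,t_{\alpha})-\varepsilon=g(\alpha)-\varepsilon$, and symmetrically $g(\alpha)\geq f(\alpha,t_{\alpha'})>f(\alpha',t_{\alpha'})-\varepsilon=g(\alpha')-\varepsilon$. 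Combining the two inequalities yields $|g(\alpha)-g(\alpha')|<\varepsilon$, so $g$ is in fact uniformly continuous.

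Finally I would dispose of $h$ with no extra work by observing that $-f$ is also continuous on $\mathcal{A}\times\mathcal{T}$ and that $h(\alpha)=-\max_{t\in\mathcal{T}}\left(-f(\alpha,t)\right)$; applying the result just proved to $-f$ shows that $\alpha\mapsto\max_{t}\left(-f(\alpha,t)\right)$ is continuous, hence so is its negative $h$. Alternatively, the identical two-sided argument runs verbatim with minimizers in place of maximizers.

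I do not anticipate a genuine obstacle here; the only place requiring care is ensuring that the comparison of the two maxima uses the same modulus of continuity for every $t$, which is precisely why passing from plain continuity to uniform continuity via compactness of the product is the right first move. If one wished to avoid uniform continuity, a sequential-compactness argument along a convergent sequence $\alpha_{n}\to\alpha$ would also work, extracting a convergent subsequence of maximizers $t_{\alpha_{n}}$, but the uniform-continuity route is shorter and cleaner.
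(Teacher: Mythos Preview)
Your proposal is correct and essentially identical to the paper's own proof: both invoke the Heine--Cantor theorem to upgrade continuity of $f$ on the compact product $\mathcal{A}\times\mathcal{T}$ to uniform continuity, then run the same two-sided comparison using a maximizer at each of two nearby points $\alpha,\alpha'$ to obtain $|g(\alpha)-g(\alpha')|<\varepsilon$. The paper handles $h$ by a one-line ``similar argument'' remark, matching your alternative of rerunning the estimate with minimizers.
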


\begin{proof}
By the Heine-Cantor theorem, $f$ is uniformly continuous. Hence, for every
$\varepsilon>0$, there exists a $\delta>0$ such that $\left\vert
f(\alpha,t)-f(\alpha^{\prime},t^{\prime})\right\vert <\varepsilon$ whenever
$D_{\mathcal{A}}(\alpha,\alpha^{\prime})<\delta$ and $D_{\mathcal{T}%
}(t,t^{\prime})<\delta$, where $D_{\mathcal{A}}$ and $D_{\mathcal{T}}$ are the
distance functions on $\mathcal{A}$ and $\mathcal{T}$ respectively. Now, given
$\alpha\in\mathcal{A}$, let $t$ be such that $g(\alpha)=f(\alpha,t)$. Then,
for any $\alpha^{\prime}\in\mathcal{A}$ with $D_{\mathcal{A}}(\alpha
,\alpha^{\prime})<\delta$ we have that
\[
g(\alpha)=f(\alpha,t)<f(\alpha^{\prime},t)+\varepsilon\leqslant\max
_{t^{\prime}\in\mathcal{T}}f(\alpha^{\prime},t^{\prime})+\varepsilon
=g(\alpha^{\prime})+\varepsilon.
\]
By symmetry, we then have that $\left\vert g(\alpha)-g(\alpha^{\prime
})\right\vert <\varepsilon$, which proves the continuity of $g$. A similar
argument establishes the continuity of $h$.
\end{proof}

\bigskip

\begin{proof}
[Proof of Theorem~\ref{thm:rel-ent-other}]Let $\rho$, $\sigma$, and
$\mathcal{N}$ be as given in Definition~\ref{def:rho-sig-N-2}. Let%
\begin{equation}
G\left(  z\right)  =\left(  \mathcal{N}(\rho)^{-z/2}\mathcal{N}(\sigma
)^{z/2}\otimes I_{E}\right)  U\sigma^{-z/2}\rho^{\left(  1+z\right)  /2}.
\end{equation}
In the equation%
\begin{equation}
\frac{1}{p_{\theta}}=\frac{\theta}{p_{0}}+\frac{1-\theta}{p_{1}},
\end{equation}
choose $p_{0}=2$ and $p_{1}=2$, so that $p_{\theta}=2$. Recalling that%
\begin{equation}
M_{k}=\sup_{t\in\mathbb{R}}\left\Vert G\left(  k+it\right)  \right\Vert
_{p_{k}},
\end{equation}
for $k=0,1$, we find that%
\begin{equation}
\left\Vert G\left(  \theta\right)  \right\Vert _{p_{\theta}}\leq
M_{0}^{1-\theta}M_{1}^{\theta}.
\end{equation}
For our choices, we find that%
\begin{align}
M_{0} &  =\sup_{t\in\mathbb{R}}\left\Vert G\left(  it\right)  \right\Vert
_{2}\\
&  =\sup_{t\in\mathbb{R}}\left\Vert \left(  \mathcal{N}(\rho)^{-it/2}%
\mathcal{N}(\sigma)^{it/2}\otimes I_{E}\right)  U\sigma^{-it/2}\rho^{\left(
1+it\right)  /2}\right\Vert _{2}\\
&  =\left\Vert \rho^{1/2}\right\Vert _{2}=1,\\
M_{1} &  =\sup_{t\in\mathbb{R}}\left\Vert G\left(  1+it\right)  \right\Vert
_{2}\\
&  =\sup_{t\in\mathbb{R}}\left\Vert \left(  \mathcal{N}(\rho)^{-\left(
1+it\right)  /2}\mathcal{N}(\sigma)^{\left(  1+it\right)  /2}\otimes
I_{E}\right)  U\sigma^{-\left(  1+it\right)  /2}\rho^{\left(  1+\left(
1+it\right)  \right)  /2}\right\Vert _{2}\\
&  =\sup_{t\in\mathbb{R}}\left\Vert \left(  \mathcal{N}(\rho)^{-1/2}%
\mathcal{N}(\sigma)^{it/2}\mathcal{N}(\sigma)^{1/2}\otimes I_{E}\right)
U\sigma^{-1/2}\sigma^{-it/2}\rho\right\Vert _{2}\\
&  =\left[  \exp\sup_{t\in\mathbb{R}}D_{2}\left(  \rho\Vert\left(
\mathcal{U}_{\sigma,-t}\circ\mathcal{P}_{\sigma,\mathcal{N}}\circ
\mathcal{U}_{\mathcal{N}(\sigma),t}\right)  \left(  \mathcal{N}(\rho)\right)
\right)  \right]  ^{1/2}.
\end{align}
Applying the three-line theorem gives%
\begin{multline}
\left\Vert \left(  \mathcal{N}(\rho)^{-\theta/2}\mathcal{N}(\sigma)^{\theta
/2}\otimes I_{E}\right)  U\sigma^{-\theta/2}\rho^{\left(  1+\theta\right)
/2}\right\Vert _{2}\\
\leq\left[  \exp\sup_{t\in\mathbb{R}}D_{2}\left(  \rho\Vert\left(
\mathcal{U}_{\sigma,-t}\circ\mathcal{P}_{\sigma,\mathcal{N}}\circ
\mathcal{U}_{\mathcal{N}(\sigma),t}\right)  \left(  \mathcal{N}(\rho)\right)
\right)  \right]  ^{\theta/2},
\end{multline}
and after a logarithm gives%
\begin{equation}
\frac{2}{\theta}\log\left\Vert \left(  \mathcal{N}(\rho)^{-\theta
/2}\mathcal{N}(\sigma)^{\theta/2}\otimes I_{E}\right)  U\sigma^{-\theta/2}%
\rho^{\left(  1+\theta\right)  /2}\right\Vert _{2}\leq\sup_{t\in\mathbb{R}%
}D_{2}\left(  \rho\Vert\left(  \mathcal{U}_{\sigma,-t}\circ\mathcal{P}%
_{\sigma,\mathcal{N}}\circ\mathcal{U}_{\mathcal{N}(\sigma),t}\right)  \left(
\mathcal{N}(\rho)\right)  \right)  .
\end{equation}
Take the limit as $\theta\searrow0$ to get%
\begin{equation}
D(\rho\Vert\sigma)-D\left(  \mathcal{N(}\rho)\Vert\mathcal{N(}\sigma)\right)
\leq\sup_{t\in\mathbb{R}}D_{2}\left(  \rho\Vert\left(  \mathcal{U}_{\sigma
,-t}\circ\mathcal{P}_{\sigma,\mathcal{N}}\circ\mathcal{U}_{\mathcal{N}%
(\sigma),t}\right)  \left(  \mathcal{N}(\rho)\right)  \right)  .
\end{equation}

Now we prove the other inequality. Let $\rho$, $\sigma$, and $\mathcal{N}$ be
as given in Definition~\ref{def:rho-sig-N} and such that $\operatorname{supp}%
(\rho)\subseteq\operatorname{supp}(\sigma)$. Take%
\begin{equation}
G\left(  z\right)  =\left(  \mathcal{N}(\rho)^{z/2}\mathcal{N}(\sigma
)^{-z/2}\otimes I_{E}\right)  U\sigma^{z/2}\rho^{\left(  1-z\right)  /2}.
\end{equation}
Then $M_{0}=1$ again and%
\begin{align}
M_{1} &  =\sup_{t\in\mathbb{R}}\left\Vert G\left(  1+it\right)  \right\Vert
_{2}\\
&  =\sup_{t\in\mathbb{R}}\left\Vert \left(  \mathcal{N}(\rho)^{\left(
1+it\right)  /2}\mathcal{N}(\sigma)^{-\left(  1+it\right)  /2}\otimes
I_{E}\right)  U\sigma^{\left(  1+it\right)  /2}\rho^{\left(  1-\left(
1+it\right)  \right)  /2}\right\Vert _{2}\\
&  =\sup_{t\in\mathbb{R}}\left\Vert \left(  \mathcal{N}(\rho)^{1/2}%
\mathcal{N}(\sigma)^{-it/2}\mathcal{N}(\sigma)^{-1/2}\otimes I_{E}\right)
U\sigma^{1/2}\sigma^{it/2}\rho^{0}\right\Vert _{2}\\
&  =\exp\left\{  -\inf_{t\in\mathbb{R}}D_{0}\left(  \rho\Vert\left(
\mathcal{U}_{\sigma,-t}\circ\mathcal{P}_{\sigma,\mathcal{N}}\circ
\mathcal{U}_{\mathcal{N}(\sigma),t}\right)  \left(  \mathcal{N}(\rho)\right)
\right)  \right\}  ^{1/2}.
\end{align}
Applying the three-line theorem gives%
\begin{multline}
\left\Vert \left(  \mathcal{N}(\rho)^{\theta/2}\mathcal{N}(\sigma)^{-\theta
/2}\otimes I_{E}\right)  U\sigma^{\theta/2}\rho^{\left(  1-\theta\right)
/2}\right\Vert _{2}\\
\leq\left[  \exp\left\{  -\inf_{t\in\mathbb{R}}D_{0}\left(  \rho\Vert\left(
\mathcal{U}_{\sigma,-t}\circ\mathcal{P}_{\sigma,\mathcal{N}}\circ
\mathcal{U}_{\mathcal{N}(\sigma),t}\right)  \left(  \mathcal{N}(\rho)\right)
\right)  \right\}  \right]  ^{\theta/2},
\end{multline}
which after taking a logarithm gives%
\begin{equation}
\frac{2}{-\theta}\log\left\Vert \left(  \mathcal{N}(\rho)^{\theta
/2}\mathcal{N}(\sigma)^{-\theta/2}\otimes I_{E}\right)  U\sigma^{\theta/2}%
\rho^{\left(  1-\theta\right)  /2}\right\Vert _{2}\geq\inf_{t\in\mathbb{R}%
}D_{0}\left(  \rho\Vert\left(  \mathcal{U}_{\sigma,-t}\circ\mathcal{P}%
_{\sigma,\mathcal{N}}\circ\mathcal{U}_{\mathcal{N}(\sigma),t}\right)  \left(
\mathcal{N}(\rho)\right)  \right)  .
\end{equation}
Take the limit as $\theta\searrow0$ to get%
\begin{equation}
D(\rho\Vert\sigma)-D\left(  \mathcal{N(}\rho)\Vert\mathcal{N(}\sigma)\right)
\geq\inf_{t\in\mathbb{R}}D_{0}\left(  \rho\Vert\left(  \mathcal{U}_{\sigma
,-t}\circ\mathcal{P}_{\sigma,\mathcal{N}}\circ\mathcal{U}_{\mathcal{N}%
(\sigma),t}\right)  \left(  \mathcal{N}(\rho)\right)  \right)  .
\end{equation}

\end{proof}

\section{Taylor expansions}

\label{app:taylor}Here we show the following limit:%
\begin{equation}
\lim_{\alpha\rightarrow1}f\left(  \alpha,V_{\mathcal{N}(\sigma)},V_{\sigma
}\right)  =f\left(  1,V_{\mathcal{N}(\sigma)},V_{\sigma}\right)  ,
\end{equation}
where $f\left(  \alpha,V_{\mathcal{N}(\sigma)},V_{\sigma}\right)  $ is defined
as%
\begin{equation}
f\left(  \alpha,V_{\mathcal{N}(\sigma)},V_{\sigma}\right)  =\frac{1}{\alpha
-1}\log\left\Vert \left(  \left[  \mathcal{N}\left(  \rho\right)  \right]
^{\left(  1-\alpha\right)  /2}V_{\mathcal{N}(\sigma)}\left[  \mathcal{N}%
(\sigma)\right]  ^{\left(  \alpha-1\right)  /2}\otimes I_{E}\right)
U\sigma^{\left(  1-\alpha\right)  /2}V_{\sigma}\rho^{\alpha/2}\right\Vert
_{2}^{2}%
\end{equation}
and $f\left(  1,V_{\mathcal{N}(\sigma)},V_{\sigma}\right)  $ in
(\ref{eq:f-function}). From the fact that%
\begin{equation}
\left.  \log\left\Vert \left(  \left[  \mathcal{N}(\rho)\right]  ^{\left(
1-\alpha\right)  /2}V_{\mathcal{N}(\sigma)}\left[  \mathcal{N}(\sigma)\right]
^{\left(  \alpha-1\right)  /2}\otimes I_{E}\right)  U\sigma^{\left(
1-\alpha\right)  /2}V_{\sigma}\rho^{\alpha/2}\right\Vert _{2}^{2}\right\vert
_{\alpha=1}=0,
\end{equation}
we know (from the definition of derivative) that $\lim_{\alpha\rightarrow
1}f\left(  \alpha,V_{\mathcal{N}(\sigma)},V_{\sigma}\right)  $ is equal to%
\begin{multline}
\left.  \frac{d}{d\alpha}\log\left\Vert \left(  \left[  \mathcal{N}\left(
\rho\right)  \right]  ^{\left(  1-\alpha\right)  /2}V_{\mathcal{N}(\sigma
)}\left[  \mathcal{N}(\sigma)\right]  ^{\left(  \alpha-1\right)  /2}\otimes
I_{E}\right)  U\sigma^{\left(  1-\alpha\right)  /2}V_{\sigma}\rho^{\alpha
/2}\right\Vert _{2}^{2}\right\vert _{\alpha=1}\\
=\left.  \frac{d}{d\alpha}\left\Vert \left(  \left[  \mathcal{N}\left(
\rho\right)  \right]  ^{\left(  1-\alpha\right)  /2}V_{\mathcal{N}(\sigma
)}\left[  \mathcal{N}(\sigma)\right]  ^{\left(  \alpha-1\right)  /2}\otimes
I_{E}\right)  U\sigma^{\left(  1-\alpha\right)  /2}V_{\sigma}\rho^{\alpha
/2}\right\Vert _{2}^{2}\right\vert _{\alpha=1}.
\end{multline}
We evaluate the latter derivative by employing Taylor expansions. Substitute
$\alpha=1+\gamma$, so that the quantity inside the derivative operation is
equal to%
\begin{equation}
\left\Vert \left(  \left[  \mathcal{N}(\rho)\right]  ^{-\gamma/2}%
V_{\mathcal{N}(\sigma)}\left[  \mathcal{N}(\sigma)\right]  ^{\gamma/2}\otimes
I_{E}\right)  U\sigma^{-\gamma/2}V_{\sigma}\rho^{\left(  1+\gamma\right)
/2}\right\Vert _{2}^{2},
\end{equation}
which we can rewrite as%
\begin{equation}
\left\Vert \left(  \left[  V_{\mathcal{N}(\sigma)}^{\dag}\mathcal{N}%
(\rho)V_{\mathcal{N}(\sigma)}\right]  ^{-\gamma/2}\left[  \mathcal{N}%
(\sigma)\right]  ^{\gamma/2}\otimes I_{E}\right)  U\sigma^{-\gamma/2}\left[
V_{\sigma}\rho V_{\sigma}^{\dag}\right]  ^{\left(  1+\gamma\right)
/2}\right\Vert _{2}^{2},
\end{equation}
due to the unitary invariance of the norm. Now we use that%
\begin{align}
\left[  V_{\sigma}\rho V_{\sigma}^{\dag}\right]  ^{\left(  1+\gamma\right)
/2}  &  =\left[  V_{\sigma}\rho V_{\sigma}^{\dag}\right]  ^{1/2}+\frac{\gamma
}{2}\left[  V_{\sigma}\rho V_{\sigma}^{\dag}\right]  ^{1/2}\log\left[
V_{\sigma}\rho V_{\sigma}^{\dag}\right]  +O\left(  \gamma^{2}\right)  ,\\
\sigma^{-\gamma/2}  &  =I-\frac{\gamma}{2}\log\sigma+O\left(  \gamma
^{2}\right)  ,\\
\left[  \mathcal{N}(\sigma)\right]  ^{\gamma/2}  &  =I+\frac{\gamma}{2}%
\log\left[  \mathcal{N}(\sigma)\right]  +O\left(  \gamma^{2}\right)  ,\\
\left[  V_{\mathcal{N}(\sigma)}^{\dag}\mathcal{N}\left(  \rho\right)
V_{\mathcal{N}(\sigma)}\right]  ^{-\gamma/2}  &  =I-\frac{\gamma}{2}%
\log\left[  V_{\mathcal{N}(\sigma)}^{\dag}\mathcal{N}(\rho)V_{\mathcal{N}%
(\sigma)}\right]  +O\left(  \gamma^{2}\right)  .
\end{align}
The above implies that%
\begin{multline}
\left[  V_{\mathcal{N}(\sigma)}^{\dag}\mathcal{N}\left(  \rho\right)
V_{\mathcal{N}(\sigma)}\right]  ^{-\gamma/2}\left[  \mathcal{N}(\sigma
)\right]  ^{\gamma/2}U\sigma^{-\gamma/2}\left[  V_{\sigma}\rho V_{\sigma
}^{\dag}\right]  ^{\left(  1+\gamma\right)  /2}\\
=\left(  I-\frac{\gamma}{2}\log\left[  V_{\mathcal{N}(\sigma)}^{\dag
}\mathcal{N}(\rho)V_{\mathcal{N}(\sigma)}\right]  \right)  \left(
I+\frac{\gamma}{2}\log\left[  \mathcal{N}(\sigma)\right]  \right)  \times\\
U\left(  I-\frac{\gamma}{2}\log\sigma\right)  \left(  \left[  V_{\sigma}\rho
V_{\sigma}^{\dag}\right]  ^{1/2}+\frac{\gamma}{2}\left[  V_{\sigma}\rho
V_{\sigma}^{\dag}\right]  ^{1/2}\log\left[  V_{\sigma}\rho V_{\sigma}^{\dag
}\right]  \right)  +O\left(  \gamma^{2}\right)  .
\end{multline}
By working out the right-hand side above and neglecting terms of second order
in $\gamma$ and higher, we find that%
\begin{multline}
\left[  V_{\mathcal{N}(\sigma)}^{\dag}\mathcal{N}\left(  \rho\right)
V_{\mathcal{N}(\sigma)}\right]  ^{-\gamma/2}\left[  \mathcal{N}(\sigma
)\right]  ^{\gamma/2}U\sigma^{-\gamma/2}\left[  V_{\sigma}\rho V_{\sigma
}^{\dag}\right]  ^{\left(  1+\gamma\right)  /2}\label{eq:t-expand-1}\\
=U\left[  V_{\sigma}\rho V_{\sigma}^{\dag}\right]  ^{1/2}-\frac{\gamma}{2}%
\log\left[  V_{\mathcal{N}(\sigma)}^{\dag}\mathcal{N}\left(  \rho\right)
V_{\mathcal{N}(\sigma)}\right]  U\left[  V_{\sigma}\rho V_{\sigma}^{\dag
}\right]  ^{1/2}+\frac{\gamma}{2}\log\left[  \mathcal{N}(\sigma)\right]
U\left[  V_{\sigma}\rho V_{\sigma}^{\dag}\right]  ^{1/2}\\
-\frac{\gamma}{2}U\left[  \log\sigma\right]  \left[  V_{\sigma}\rho V_{\sigma
}^{\dag}\right]  ^{1/2}+\frac{\gamma}{2}U\left[  V_{\sigma}\rho V_{\sigma
}^{\dag}\right]  ^{1/2}\log\left[  V_{\sigma}\rho V_{\sigma}^{\dag}\right]
+O\left(  \gamma^{2}\right)  .
\end{multline}
The Hermitian conjugate is%
\begin{multline}
\left[  V_{\sigma}\rho V_{\sigma}^{\dag}\right]  ^{1/2}U^{\dag}-\frac{\gamma
}{2}\left[  V_{\sigma}\rho V_{\sigma}^{\dag}\right]  ^{1/2}U^{\dag}\log\left[
V_{\mathcal{N}(\sigma)}^{\dag}\mathcal{N}(\rho)V_{\mathcal{N}(\sigma)}\right]
+\frac{\gamma}{2}\left[  V_{\sigma}\rho V_{\sigma}^{\dag}\right]
^{1/2}U^{\dag}\log\left[  \mathcal{N}(\sigma)\right] \\
-\frac{\gamma}{2}\left[  V_{\sigma}\rho V_{\sigma}^{\dag}\right]
^{1/2}\left[  \log\sigma\right]  U^{\dag}+\frac{\gamma}{2}\left[  \log\left[
V_{\sigma}\rho V_{\sigma}^{\dag}\right]  \right]  \left[  V_{\sigma}\rho
V_{\sigma}^{\dag}\right]  ^{1/2}U^{\dag}+O\left(  \gamma^{2}\right)  .
\end{multline}
Combining (\ref{eq:t-expand-1}) with its Hermitian conjugate and neglecting
higher order terms gives%
\begin{multline}
\left[  V_{\sigma}\rho V_{\sigma}^{\dag}\right]  -\gamma\left[  V_{\sigma}\rho
V_{\sigma}^{\dag}\right]  ^{1/2}\mathcal{N}^{\dag}\left(  \log\left[
V_{\mathcal{N}(\sigma)}^{\dag}\mathcal{N}(\rho)V_{\mathcal{N}(\sigma)}\right]
\right)  \left[  V_{\sigma}\rho V_{\sigma}^{\dag}\right]  ^{1/2}\\
+\gamma\left[  V_{\sigma}\rho V_{\sigma}^{\dag}\right]  ^{1/2}\mathcal{N}%
^{\dag}\left(  \log\left[  \mathcal{N}(\sigma)\right]  \right)  \left[
V_{\sigma}\rho V_{\sigma}^{\dag}\right]  ^{1/2}-\gamma\left[  V_{\sigma}\rho
V_{\sigma}^{\dag}\right]  ^{1/2}\left[  \log\sigma\right]  \left[  V_{\sigma
}\rho V_{\sigma}^{\dag}\right]  ^{1/2}\\
+\frac{\gamma}{2}\left[  V_{\sigma}\rho V_{\sigma}^{\dag}\right]  \log\left[
V_{\sigma}\rho V_{\sigma}^{\dag}\right]  +\frac{\gamma}{2}\left(  \log\left[
V_{\sigma}\rho V_{\sigma}^{\dag}\right]  \right)  \left[  V_{\sigma}\rho
V_{\sigma}^{\dag}\right]  +O\left(  \gamma^{2}\right)  .
\end{multline}
Taking a trace gives%
\begin{multline}
\text{Tr}\left\{  \rho\right\}  -\gamma\text{Tr}\left\{  \left[  V_{\sigma
}\rho V_{\sigma}^{\dag}\right]  \mathcal{N}^{\dag}\left(  \log\left[
V_{\mathcal{N}(\sigma)}^{\dag}\mathcal{N}(\rho)V_{\mathcal{N}(\sigma)}\right]
\right)  \right\} \\
+\gamma\text{Tr}\left\{  \left[  V_{\sigma}\rho V_{\sigma}^{\dag}\right]
\mathcal{N}^{\dag}\left(  \log\left[  \mathcal{N}(\sigma)\right]  \right)
\right\}  -\gamma\text{Tr}\left\{  \rho\left[  \log\sigma\right]  \right\}
+\gamma\text{Tr}\left\{  \rho\log\rho\right\}  +O\left(  \gamma^{2}\right)  .
\end{multline}
We can now finally use the above development to conclude that%
\begin{align}
&  \left.  \frac{d}{d\alpha}\left\Vert \left(  \left[  \mathcal{N}\left(
\rho\right)  \right]  ^{\left(  1-\alpha\right)  /2}V_{\mathcal{N}(\sigma
)}\left[  \mathcal{N}(\sigma)\right]  ^{\left(  \alpha-1\right)  /2}\otimes
I_{E}\right)  U\sigma^{\left(  1-\alpha\right)  /2}V_{\sigma}\rho^{\alpha
/2}\right\Vert _{2}^{2}\right\vert _{\alpha=1}\nonumber\\
&  =\left.  \frac{d}{d\gamma}\left\Vert \left(  \left[  V_{\mathcal{N}%
(\sigma)}^{\dag}\mathcal{N}(\rho)V_{\mathcal{N}(\sigma)}\right]  ^{-\gamma
/2}\left[  \mathcal{N}(\sigma)\right]  ^{\gamma/2}\otimes I_{E}\right)
U\sigma^{-\gamma/2}\left[  V_{\sigma}\rho V_{\sigma}^{\dag}\right]  ^{\left(
1+\gamma\right)  /2}\right\Vert _{2}^{2}\right\vert _{\gamma=0}\\
&  =\operatorname{Tr}\left\{  \rho\left[  \log\rho-\log\sigma\right]
\right\}  -\operatorname{Tr}\left\{  \mathcal{N}\left(  \left[  V_{\sigma}\rho
V_{\sigma}^{\dag}\right]  \right)  \left[  \log\left[  V_{\mathcal{N}(\sigma
)}^{\dag}\mathcal{N}(\rho)V_{\mathcal{N}(\sigma)}\right]  -\log\left[
\mathcal{N}(\sigma)\right]  \right]  \right\} \\
&  =f(1,V_{\mathcal{N}(\sigma)},V_{\sigma}).
\end{align}

A similar development with Taylor expansions leads to the conclusion that
(\ref{eq:g-limit-1}) holds. However here one should employ the method outlined
in the proof of \cite[Proposition 11]{WWY13}.

\bibliographystyle{alpha}
\bibliography{Ref}

\end{document}